\NewDocumentCommand{\ceil}{s O{} m}{%
  \IfBooleanTF{#1} % starred
    {\left\lceil#3\right\rceil} % \ceil*[..]{..}
    {#2\lceil#3#2\rceil} % \ceil[..]{..}
}
\DeclarePairedDelimiter{\floor}{\lfloor}{\rfloor}
\let\P\relax
\DeclareMathOperator{\P}{\mathcal{P}} 
\newtheorem{lemma}{Lemma}
\newtheorem{theorem}{Theorem}
\newtheorem{corollary}{Corollary}
\newtheorem{remark}{Remark}
\newtheorem{definition}{Definition}
\numberwithin{equation}{section} % Number equations within sections (i.e. 1.1, 1.2, 2.1, 2.2 instead of 1, 2, 3, 4)
\def \points {
 (-3.9523793331833419, 2.3307708246287224), (1.2067318238898894, 4.4015970064689514), (1.0374360355736894, 3.4250241358845273), (0.57263583621045067, 2.7099835335578129), (-1.1597884758414481, -0.59590668744235709), (1.4663359832054907, 0.61749599603293459), (-0.6324377801295521, 3.6116634999393047), (2.4507003358353949, 1.5654823178497179), (0.28381283945319818, -0.23736252077652822), (-1.7018789978356823, -1.2156416347399366), (-0.31936547764543505, -0.76508038143489643), (1.3065417006665208, 0.62807828788624531), (-0.22866715586648903, 2.0417462960618424), (1.874704273770166, -0.22062260288580265), (1.2496769348103447, -2.6962391719292578), (-1.273871175704377, -2.1991223246671487), (-0.86993011999477698, 0.13814756338740797), (-3.3001509609035131, -0.96880108167543311), (2.381154985856583, -0.82212151869102612), (-1.6287123320832524, -0.47000183547764024), (1.2732958025739327, 1.8993384952563097), (-0.1727482268572415, -1.5273066032468257), (0.32688962029589375, 0.78242406453607849), (3.154197313656403, 2.5407802803863282), (-0.75200376300937344, 0.8114615901518093), (1.7297759429993385, -1.9294481502748424), (1.0888019838601959, -1.3626522737329376), (-0.68099641929463628, -2.067987166383296), (-2.672699913287635, 1.282897564852804), (-4.0825604089467351, -1.3827849740409213), (-3.5756415175116798, -1.4695217588712168), (2.9969415934054453, 1.1382287188917939), (-1.8076137453425913, 3.9693466631932237), (-0.31225718741340702, -1.3200928254323452), (-3.5101354583888322, 1.6343641187488473), (0.89142314661294797, 3.3562210945443778), (0.22166971731549878, 0.069020377362348825), (2.0897373741633403, -2.1636622994265271), (-0.078512321506820984, 0.3323628354684724), (-2.1588732948570004, 3.53684424506943), (-1.9925862183126155, 0.56530366805061139), (-0.87068725335013963, -2.5800795997800892), (-2.3356931118165978, -1.962990164472584), (-1.4851832357618922, 1.3188854909404364), (-1.0628106744840817, 4.0690689296973295), (2.580541598011818, -3.689172295192654), (-1.2486890836363744, -3.8623520441015735), (1.5711606677800702, 2.2234837209237521), (-0.81674856694329212, -1.1219528271715846), (0.66972769146955691, 0.62414594400071244), (-0.64853075293804174, -2.6255019914132167), (-1.189014610372255, -2.0760662565497574), (-0.03181407975075351, 1.0866926483281376), (-1.5496533191171598, -0.83815453890389902), (-0.047380250451141442, 0.089283965441050647), (-1.2387167050520413, -3.2058178034312856), (-3.7449199896934018, 1.0563913646714085), (0.81372171377685465, -0.93650679164608941), (-2.5974647841372969, -1.5574501216175982), (-2.624115287064527, -3.3509180488447838), (-0.23308803497531713, -1.5057597977514345), (-1.3591567971149903, 2.3068334107104156), (-0.31012443225824832, -0.25221660320846384), (0.01356173166749923, 0.038487292411266), (-1.1155190034992777, -0.64725923300240962), (1.0454635053903467, -2.6320473452851485), (0.40756657556144993, 4.3000475093921446), (3.4005033528573625, 1.7734961355371386), (1.855118410123366, -1.7415593493232397), (1.4084494737818685, 0.37777374087260257), (-1.6100036589468854, 0.099332770679572066), (3.884719498375913, 1.2703166740505374), (-0.7738585577893442, -3.3161485741498025), (-4.1232672089671523, 0.028526550215038873), (2.1365436858999907, -2.2671396943886482), (0.16266682545317385, 0.64393320210981786), (1.2987274358471244, -1.0698560559857342), (2.2453642539849925, 1.0200642886933702), (-0.00083726927215389694, 0.013985285035056543), (-1.5239770198750504, 3.039905094182147), (-1.856980795130535, -0.9368584994360617), (2.4999393696323939, 1.2376099488387573), (0.91246927573620329, 0.0079663054613691824), (2.1597201688141063, 0.92095729024281936), (3.3372928856394739, 3.0416360058207745), (0.98598608787552056, 1.5333507884096234), (-2.8339627624875456, -1.3524572876390368), (-0.19125468919378943, 0.21426239711140108), (-1.6838286895251438, -1.3765425686799533), (-0.17656059047371717, 0.055894420509397268), (-1.4099849771787021, 0.23829958802680104), (-0.055652713484187398, 2.8418686014983612), (0.019872090459426131, 0.27632001711257848), (3.7494108051185679, 0.9866513643437369), (-0.11519769430777883, -0.082182090054007131), (-0.19843683888740332, 0.14812593856244149), (-0.23959773603729506, 0.6015352465897118), (0.35451871640200067, -3.4322132233430951), (-3.6293145321971449, 2.2518415008601957), (-1.6330437609015498, 1.6977160869338359)
}
\def \dispoints {
(-3.996790689498694, 2.356960793975273), (1.2268216264058387, 4.4748752716676838),
(1.3450998438508963, 4.4407551621399142), (0.44449295098410929, 2.1035508114912362), 
(-0.88946115023171224, -0.457010790056952), (1.9814718946973766, 0.8344274267570474), 
(-0.80033179018798351, 4.570456106956339), (1.8118784297793906, 1.1574094157644343), (
0.76708780041271873, -0.64154213147538242), (-1.7495202634700053, 
-1.2496714959171644), (-0.38521344177724304, -0.92282750515691203), (
0.90127040753161125, 0.43325702822666751), (-0.23929505257480016, 
2.1366417289319291), (2.1352646572029212, -0.25128637786416214), (
0.90411043673067026, -1.9506625331396192), (-1.077668367099176, -1.860411484202835), 
(-0.98762437869998798, 0.15683777159046428), (-4.4521242174974764, 
-1.306977410658837), (2.0322798924038068, -0.70166832544384572), 
(-2.0657095867317317, -0.59610729175595401), (1.1972036791083047, 
1.7858340770434242), (-0.1123898285339734, -0.99366419198947897), (
0.38549906111247062, 0.92270822792549301), (3.6134746019967832, 2.9107389612818575), 
(-0.67972450800047723, 0.73346751340704186), (1.4351876893126454, 
-1.6008548642664109), (1.3421021159094642, -1.6796612487258671), (-0.67247963288880641, -2.0421241743218594), (-1.9382741190566346, 0.93037274218198507), (-4.3947566350649794, -1.4885275000994553), (-4.2916891409578195, -1.7638039339407114), (2.0099203058091089, 0.7633612279230706), (-1.9230122581041305, 4.2227507450927355), (-0.2301896939780167, -0.97314577776729172), (-4.2063869052413665, 1.9585477281429609), (1.191101704273333, 4.4845152168408529), (0.95478776248410724, 0.29728829208462276), (1.4936312461003949, -1.5464687842541736), (-0.22989742031853061, 0.97321486637375454), (-2.4174637637814445, 3.9604884737622519), (-2.0683715332866446, 0.58680422654361819), (-0.68746056598242655, -2.0371298363676091), (-1.645915595914343, -1.3832793828890579), (-1.6076164258276016, 1.427609690149688), (-1.1725953547978065, 4.4893897284493578), (2.6595654247054004, -3.8021456773395181), (-1.427359553898593, -4.415002231471056), (1.240734566791813, 1.7558695096071155), (-0.58854060880713599, -0.80846765661028497), (0.7315635546136231, 0.68177325084010187), (-0.51557983808698471, -2.0872655390625789), (-1.0685212566720392, -1.8656801237189631), (-0.029263522513801203, 0.999571731418053), (-1.8911116086760229, -1.0228376623544837), (-0.46875506316177068, 0.88332818972350491), (-1.6723763313628206, -4.3281355577543348), (-4.4657254063086578, 1.2597208402854054), (0.65588858253607007, -0.75485771328034057), (-1.8439319056865513, -1.1056288378977652), (-2.8607909566834984, -3.6531459185417048), (-0.15297564694001875, -0.98822995878655828), (-1.0914030552727343, 1.8523874786181591), (-0.7758194612844852, -0.63095496154024433), (0.33234026177233011, 0.94315955723573042), (-0.86494428477164298, -0.5018678952083615), (0.79367388345375067, -1.9981445810359775), (0.43782568161297675, 4.6192974219593319), (4.1140907917574703, 2.1456600283307679), (1.5674987492413419, -1.4715460139346064), (0.96586050388864242, 0.25906270867876308), (-2.1459196057901506, 0.13239730165469329), (4.410193606524861, 1.4421485197327069), (-1.0544624383491281, -4.5185959064858645), (-4.6398889578292097, 0.032100763457645169), (1.4745457624316121, -1.5646772173502674), (0.24492059773815439, 0.96954314024884169), (1.6594519667984817, -1.3670110350281195), (1.9574704251951573, 0.889274723854383), (-0.059760872731196832, 0.99821272186363452), (-2.0794630273968813, 4.1479432876655142), (-1.9195459940279167, -0.96842303608050195), (1.9268138376607258, 0.95388072367521171), (0.99996189143562231, 0.0087301590187543396), (1.9776959935832421, 0.84333774785953486), (3.4293637952330212, 3.1255501851586009), (1.1628462685381979, 1.8083939160887454), (-1.9403650637709307, -0.92600400609135125), (-0.6659169138722052, 0.74602591363766857), (-1.6645581717318914, -1.3607887760121271), (-0.9533677599623932, 0.30181105722668372), (-0.9860169023266685, 0.16664533694682604), (-0.042095682057005146, 2.1495878566721003), (0.071731680774227882, 0.99742396500851349), (4.4872363649380587, 1.1808089621854492), (-0.81407417875329158, -0.58076090733377872), (-0.8013581082575908, 0.59818490647108058), (-0.3700371091416737, 0.9290169739343157), (0.47673647822282716, -4.6154438930975745), (-3.9427374866104654, 2.4463076486158872), (-1.4904798499260492, 1.5495063139479044)
}
\author{\IEEEauthorblockN{Akshit Kumar, Parikshit Hegde, Rahul Vaze, Amira Alloum, C\'edric Adjih}\\
%\IEEEauthorblockA{${}^*$Department of Electrical Engineering, IIT Bombay}\\
%\and
%\IEEEauthorblockA{${}^{\dagger}$ Department of
%Electrical and Computer Engineering, University of California, San Diego}\\
%\and\IEEEauthorblockA{${}^+$School of Technology and Computer Science,
  %Tata Institute of Fundamental Research}
}
\begin{document}
\title{Breaking the Unit Throughput Barrier in Distributed Systems}
\maketitle
\vspace{-.75in}
\begin{abstract}
A multi-level random power transmit strategy 
that is used in conjunction with a random access protocol (RAP) (e.g. ALOHA, IRSA) is proposed to fundamentally increase the throughput in a distributed communication network. 
A SIR model is considered, where a packet is decodable as long as its SIR is above a certain threshold.
In a slot chosen for transmission by a RAP, a packet is transmitted with power level chosen according to a distribution, such that 
multiple packets sent by different nodes can be decoded at the receiver in a single slot, by ensuring that their SIRs are above the threshold with successive interference cancelation. Since the network is distributed this is a challenging task, and we provide structural results that aid in finding 
the achievable throughputs, together with upper bounds on the maximum throughput possible. The achievable throughput and the upper bounds are shown to be close with the help of comprehensive simulations. The main takeaway is that the throughput of more than $1$ is possible in a distributed network, by using a judicious choice of power level distribution in conjuction with a RAP.

\end{abstract}

\section{Introduction}
Random access protocols (RAPs) are widely used in communication networks because of their simple distributed implementation and reasonable throughput guarantees. A particular example is the slotted ALOHA protocol that dates back to \cite{abramson1970aloha}, but is still operational for initial access acquisition in cellular, satellite, and ad hoc networks \cite{roberts1975aloha, raychaudhuri1988channel, etsi2005digital}. Some initial advances to the basic ALOHA protocol were made in \cite{choudhury1983diversity, casini2007contention,del2009high}. The basic advantage of a RAP is that it completely avoids the need of network knowledge at each node, e.g. the total number of nodes etc., and works without any coordination overhead, that can grow exponentially with the number of nodes in the network. 

The flip side of this uncoordinated communication paradigm is that packets sent simultaneously by different nodes collide, making them undecodable at the receiver node, prompting retransmissions. Thus, an important performance metric with random access is the {\it throughput} that counts the average number of packets successfully received per time slot. For the basic vanilla version of the slotted ALOHA protocol, the throughput approaches $1/e$ as the number of nodes in the network grow to infinity. Thus, the overall slot occupancy is a constant even with no coordination among nodes. 

%There is a large body of work \cite{liva, stefanovic2017asymptotic, lazaro2017finite, i2018finite, sun2017coded, capture_fading, csa_paper, 4036333, 6155698, polyanskiy2017perspective, narayanan2012iterative, sun2017coded, khaleghi2017near}, , where primary recourse is to employ advanced signal processing. e.g., 
%successive interference cancellation (SIC) at the receiver.  Another technique is called ZigZag that exploits the natural random delay seen by each nodes' transmission \cite{gollakota2008zigzag}. A good survey on modern random access protocols can be found in \cite{berioli2016modern}.

The basic idea in most of the prior work that has addressed the question of improving the throughput of RAPs \cite{liva, stefanovic2017asymptotic, lazaro2017finite, i2018finite, sun2017coded, capture_fading, csa_paper, 4036333, 6155698, polyanskiy2017perspective, narayanan2012iterative, sun2017coded, khaleghi2017near}  is to introduce redundancy in time. In particular, each node transmits its packet in multiple slots (repetition), and packets received in collision free slots are decoded first. 
The contribution of the decoded packets is then removed from slots where collision occurs (using SIC), and all packets which are now collision free are decoded. This process is recursively continued till no slots can become collision free. This process increases the throughput fundamentally. Exploiting the random delay  with ZigZag \cite{gollakota2008zigzag}, decreases the need for repetition.

This idea of using repetition code at the transmit side and employing SIC  at
the receiver end was formalized in \cite{liva}, which called it the irregular
repetition slotted ALOHA (IRSA), where the packet repetition rate is chosen
systematically and shown to achieve a throughput close to $0.97$.
A natural extension to the IRSA was to include non-trivial forward error correction codes in contrast to just using the repetition code, which is termed as coded slotted ALOHA (CSA) \cite{csa_paper}. Using a judicious choice of codes, throughput larger than IRSA is achievable as expected, with also some analytical tractability via the iterated decoding framework typically used for LDPC codes. There is a large body of follow-up work  \cite{sun2017coded, lazaro2017finite, dovelos2017finite,  i2018finite, stefanovic2017asymptotic, narayanan2012iterative}
 to \cite{liva, csa_paper}.

Even though prior work has exploited the SIC ability intelligently, one feature that it has neglected is that if the colliding packets have different power levels, some of them can still be decodable simultaneously, without the need of decoding one of the colliding packets using some other slot. Essentially, if two packets $i,j$ with received power $P_i, P_j$ collide/overlap in the same slot, then the 
$i^{th}$ ($j^{th}$) packet is decodable as long as  the  
signal to interference ratio $SIR_i = \frac{P_i}{P_j} > \beta$ ($SIR_j = \frac{P_j}{P_i} > \beta$), where $\beta > 1$ is typically fixed threshold. 
Thus, by ensuring that the power profile of colliding packets in a slot is as different as possible, simultaneous decoding of more than $1$ packet is possible, {\bf potentially achieving a throughput of more than $1$}. Since the network in distributed, and each node has to make an autonomous decision, ensuring such received power profile is non-trivial.

In this paper, we propose a random access protocol, where nodes choose which slots to use for packet transmission similar to the slotted ALOHA, or IRSA, or CSA protocol, but employ different (random) power levels (following a distribution) for transmitting packets in slots chosen for transmission. 
For example, a node can choose between two different power levels $\{H,L\}, H>L$ to transmit. Ideally, the goal of the strategy is to ensure that in any slot, both the number of packets with power level $H$, and the number of packets with power level $L$, is roughly $1$. Since $\beta>1$, first we can decode the packet with power level $H$, and subsequently the packet with power level $L$ without needing any help from other slots. Essentially, with this protocol, a slot can potentially carry more than $1$ successfully decodable packet. We call this phenomena {\it power multiplexing}. The objective is non-trivial since the network is distributed, and each node makes it transmission decisions autonomously.
In general, the protocol can choose between $n$ different power levels for transmission in any slot with random distribution $\Delta$, to exploit as much of this power multiplexing as much as possible, together with the random slot occupancy distribution ($\Lambda$) driven by a RAP, e.g., slotted ALOHA, IRSA, CSA, and the problem is to 
 optimize the throughout over $\Delta$ and $\Lambda$. 
%\begin{jointly}
% 
%\end{jointly}
\vspace{-0.2in}
\subsection{How Multi-Level Power Transmission Leads to Fundamental Gain in Throughput}
Before presenting our main results, we illustrate how introducing multiple power levels fundamentally changes the throughput for 
any RAP.  Consider the toy examples presented in Fig. 
\ref{fig:irsa_slots_rep} and Fig. \ref{fig:irsa_pc_slots_rep}. Fig. 
\ref{fig:irsa_slots_rep} represents the case of IRSA, where one user transmits in
all the four slots, and the other two users transmit in two slots, all users with identical power $P=1$. 
Performing
SIC decoding process, in the first
iteration, only packet corresponding to user $1$ is decodable. However, in
the second iteration, the decoding process gets stuck due to the collision of
packets of user $2$ and $3$, and therefore the decoding process terminates. Hence we
are only able to successfully decode one packet over four slots resulting in a
throughput of $T = 0.25$. Meanwhile, in Fig. \ref{fig:irsa_pc_slots_rep}, employing 
two power levels $P\in \{1, 4\}$ uniformly randomly in
conjunction with IRSA, in the second iteration,  the packet
corresponding to user 2 can also be decoded because its SIR is above the threshold $\beta$. Therefore, the SIC/peeling
decoder which got stuck in the second iteration in case $1$, kickstarts
again enabled by the decoding of packet of user 2, and all the three packets belonging to the three users are decoded. 
Thus, by incurring a higher average transmit power, throughput of $T = 0.75$, thrice of what we had in the first
case with no power control, is achievable. 

What is important to note is that the increase in throughput by introducing random power levels is not additive or linear, but {\bf bootstraps the ability of the SIC process to unravel many more colliding packets}.
In Fig. \ref{fig:intro_comparison}, we present simulation results, where
SA represents slotted ALOHA, while IRSA and CSA with just two power levels are denoted as 
IRSA-DPC and CSA-DPC, respectively. It is clearly evident that there is a fundamental increase in the throughput that the proposed strategy provides over and above the slotted ALOHA (denoted as SA), or IRSA or CSA, and that too using only two power levels.

The remarkable feature is that introducing just two power levels that are used randomly in
conjunction with a RAP, {\bf throughput of more than $1$} is achievable (Figs. \ref{fig:upper_bound} and \ref{subfig:2lvlcompare}-\ref{subfig:k_approximation}), i.e., on
average more than one packet can be successfully decoded in each slot. Moreover, the proposed
strategy achieves a throughput close to double that of the relevant IRSA or CSA protocols, with just two power
levels. Of course, the power consumption with the proposed strategy is higher than CSA, however, since the system is interference limited, naively increasing the transmitted power would not yield any benefit.

To understand the benefit of increased power consumption, recall the main idea behind the improved throughput of IRSA \cite{liva} over ALOHA was to introduce repetition in transmission, thereby increasing the average power transmission and exploit the SIC capability. Thus, higher power consumption allowed higher throughput. 
CSA \cite{csa_paper} improved the power efficiency over IRSA by using more efficient codes compared to just repetition code in IRSA. 
To further exploit the SIC, as discussed before, and to decode more than one packet per slot, we need to create a power profile at the receiver where some packets are at much higher power level than the rest. The proposed strategy accomplishes this by using multiple power levels with sufficient multiplicative gaps between the levels, thus requiring larger power consumption, but allowing fundamental improvement in achievable throughputs.
\vspace{-0.2in}
\subsection{Our results} 
In this paper, we propose a multi-level random (with distribution $\Delta$) power transmit strategy 
which is used in conjunction with the RAP (with distribution $\Lambda$). With the transmit strategy, in the slot chosen according to distribution $\Lambda$, a packet is transmitted with power level $P \in {\cal P}$ chosen according to $\Delta$. The optimization problem we solve is to maximize the throughput as a function of $\Delta$ and $\Lambda$ for common RAPs, such as ALOHA, IRSA and CSA.
Our contributions are summarized as follows.
\begin{itemize}
\item For the case of ALOHA, we give an exact characterization of the throughput of the proposed strategy for two power levels, which can be directly optimized over $\Delta$.
\item With IRSA, an exact expression for throughput is not possible, however, using the expressions that are derived for iterative decoding techniques in prior work, we derive an expression for the probability of successful 
decoding of any packet asymptotically. The derived expressions significantly reduce the complexity of solving the optimization problem of maximizing the throughput over $\Delta$ and $\Lambda$. 
Similar analytical results are possible for CSA, but are omitted for brevity, 
since they do not bring any new ideas with it.
\item Since the achievable throughputs with IRSA as the RAP are not amenable to analytical expression, we derive upper bounds, that characterize fundamental limits on throughput with the proposed strategy. The derived bounds are shown to closely match the achievable throughputs via simulation. 
\item Extensive simulation results for achievable throughput are presented for the proposed strategy, with ALOHA/IRSA/CSA. We conclude that there is a fundamental improvement (close to double) in throughput with the proposed strategy at the cost of higher average power consumption. One main key observation is that {\bf throughput higher than one} is possible for a many appropriate choices of parameters, e.g. Figs. \ref{fig:upper_bound} and \ref{subfig:2lvlcompare}-\ref{subfig:k_approximation}, which is almost double  than possible for ALOHA/IRSA/CSA alone. Moreover, most of the throughput gain can be extracted via employing just two different power levels, and there is only mild incremental gain with more than two power levels. Thus, the complexity of practical implementation can be kept low by choosing only two power levels.

\item As a side result, we also analyze the throughput of the IRSA protocol under the path-loss channel model with a single power level,  which has escaped analytical tractability in the past. Towards this end, we map the IRSA protocol under the path-loss channel model without any power control to the IRSA with $n$-power levels under an ideal channel model. Leveraging results developed for analyzing the IRSA with $n$-power levels under an ideal channel model, we approximate the throughput achievable with IRSA protocol under the path-loss channel model with a single power level.
\end{itemize}

\subsection{Other Related Work}
Random access protocols have been considered in variety of different contexts and not just maximizing the throughput. For example, an important metric is the stability of the system, that defines the largest packet arrival rates at nodes, such that the sum of the expected queue lengths across different nodes remains bounded. Extensive work has been reported in studying the stability properties of slotted ALOHA protocol \cite{szpankowski1994stability, ephremides1998information, rao1988stability, luo1999stability}. Another interesting direction has been to understand the information theoretic limits of random access protocols as initiated in \cite{massey1985collision}, where the nodes may not even be slot synchronized, which was later extended in \cite{hui1984multiple, thomas2000capacity, chen2017capacity}. 

For the throughput problem, extensions of the CSA protocol have been proposed in \cite{sun2017coded} for the erasure channel,  with finite blocklength \cite{lazaro2017finite, dovelos2017finite,  i2018finite}, with multiple parallel links for each slot that do not interfere with each other \cite{stefanovic2017asymptotic}. For the basics of iterated decoding of slotted ALOHA we refer the reader to \cite{narayanan2012iterative}.

During the course of writing the paper, we found that just the raw idea of using multiple power levels chosen randomly with random access protocols can be found in a two-page short note \cite{bandai2017power}, however, without any analysis or simulation results.

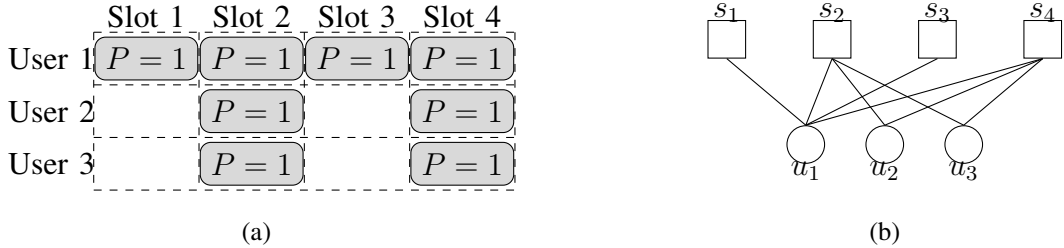
\begin{figure}[H]
	\begin{subfigure}[b]{0.5\textwidth}
	\centering
	\begin{tikzpicture}[scale = 0.7]
		% Basic Grid for the matrix
		\foreach \x in {0,...,-3}
		\draw[dashed] (0,\x) -- (8,\x);
		\foreach \x in {0,2,4,6,8}
		\draw[dashed] (\x,0) -- (\x,-3);
		\foreach \x in {1,2,3,4}
		\node at (2*\x-1,0.3) {Slot \x};
		\foreach \x in {1,2,3}
		\node at (-0.8, -\x + 0.5) {User \x};
		% Putting blocks for user 1
		\foreach \x in {1,2,3,4}
		\node [draw, fill= gray!30, shape=rectangle, minimum width=1cm, minimum
        height=0.5cm, anchor=center, rounded corners] at (2*\x-1,-0.5) {$P =
        1$};
		\foreach \x in {2,4}
		\node [draw, fill= gray!30, shape=rectangle, minimum width=1cm, minimum
        height=0.5cm, anchor=center, rounded corners] at (2*\x-1,-1.5) {$P =
        1$};
		\foreach \x in {2,4}
		\node [draw, fill= gray!30, shape=rectangle, minimum width=1cm, minimum
        height=0.5cm, anchor=center, rounded corners] at (2*\x-1,-2.5) {$P =
        1$};
	\end{tikzpicture}
	\caption{}
	\end{subfigure}
	\begin{subfigure}[b]{0.5\textwidth}
	\centering
	\begin{tikzpicture}[scale = 0.7]
		\node[draw, shape = rectangle, anchor = center, minimum width = 0.5cm,
		minimum height = 0.5cm] at (0,2.5) (s1) {};
		\node[draw, shape = rectangle, anchor = center, minimum width = 0.5cm, minimum
			height = 0.5cm] at (2,2.5) (s3) {};
		\node[draw, shape = rectangle, anchor = center, minimum width = 0.5cm,
		minimum height = 0.5cm] at (4,2.5) (s4) {};
		\node[draw, shape = rectangle, anchor = center, minimum width = 0.5cm, minimum
			height = 0.5cm] at (6,2.5) (s5) {};
		\node[draw, shape = circle, anchor = center, minimum width = 0.5cm, minimum height = 0.5cm] at (1.5,0.5) (u1) {};
		\node[draw, shape = circle, anchor = center, minimum width = 0.5cm, minimum height = 0.5cm] at (3,0.5) (u2) {};
		\node[draw, shape = circle, anchor = center, minimum width = 0.5cm, minimum height = 0.5cm] at (4.5,0.5) (u3) {};
		\draw (u1.north) -- (s1.south);
		\draw (u1.north) -- (s3.south);
		\draw (u1.north) -- (s4.south);
		\draw (u1.north) -- (s5.south);
		\draw (u2.north) -- (s3.south);
		\draw (u2.north) -- (s5.south);
		\draw (u3.north) -- (s3.south);
		\draw (u3.north) -- (s5.south);
		\foreach \x in {1,2,3,4}
		\node at (2*\x- 2,3) {$s_\x$};
		\foreach \x in {1,2,3}
		\node at (1.5*\x,0) {$u_\x$};
	\end{tikzpicture}
	\caption{}
	\end{subfigure}
	\caption{(a) denotes a MAC frame with 3 users and 4 slots where each user
	employs a repetition code independently and uses the same power $P = 1$, 
	(b) denotes the bipartite graph representation of the same.}
	\label{fig:irsa_slots_rep}
\end{figure}

\begin{figure}[H]
	\begin{subfigure}[b]{0.5\textwidth}
	\centering
	\begin{tikzpicture}[scale = 0.5]
		\node[draw, shape = rectangle, anchor = center, minimum width = 0.3cm,
		minimum height = 0.3cm,  fill = gray!40] at (0,2.5) (s1) {};
		\node[draw, shape = rectangle, anchor = center, minimum width = 0.3cm,
		minimum height = 0.3cm, ] at (2,2.5) (s3) {};
		\node[draw, shape = rectangle, anchor = center, minimum width = 0.3cm,
		minimum height = 0.3cm, ] at (4,2.5) (s4) {};
		\node[draw, shape = rectangle, anchor = center, minimum width = 0.3cm,
		minimum height = 0.3cm, ] at (6,2.5) (s5) {};
		\node[draw, shape = circle, anchor = center, minimum width = 0.3cm,
		minimum height = 0.3cm, fill = gray!40] at (1.5,0.5) (u1) {};
		\node[draw, shape = circle, anchor = center, minimum width = 0.3cm,
		minimum height = 0.3cm,] at (3,0.5) (u2) {};
		\node[draw, shape = circle, anchor = center, minimum width = 0.3cm,
		minimum height = 0.3cm,] at (4.5,0.5) (u3) {};
		\draw[dotted] (u1.north) -- (s1.south);
		\draw[dashed] (u1.north) -- (s3.south);
		\draw[dotted] (u1.north) -- (s4.south);
		\draw[dashed] (u1.north) -- (s5.south);
		\draw[] (u2.north) -- (s3.south);
		\draw[] (u2.north) -- (s5.south);
		\draw[] (u3.north) -- (s3.south);
		\draw[] (u3.north) -- (s5.south);
	\end{tikzpicture}
	\caption{Iteration 1}
	\end{subfigure}
	\begin{subfigure}[b]{0.5\textwidth}
	\centering
	\begin{tikzpicture}[scale = 0.5]
		\node[draw, shape = rectangle, anchor = center, minimum width = 0.3cm,
		minimum height = 0.3cm,  fill = gray!40] at (0,2.5) (s1) {};
		\node[draw, shape = rectangle, anchor = center, minimum width = 0.3cm,
		minimum height = 0.3cm, ] at (2,2.5) (s3) {};
		\node[draw, shape = rectangle, anchor = center, minimum width = 0.3cm,
		minimum height = 0.3cm, ] at (4,2.5) (s4) {};
		\node[draw, shape = rectangle, anchor = center, minimum width = 0.3cm,
		minimum height = 0.3cm, ] at (6,2.5) (s5) {};
		\node[draw, shape = circle, anchor = center, minimum width = 0.3cm,
		minimum height = 0.3cm, fill = gray!40] at (1.5,0.5) (u1) {};
		\node[draw, shape = circle, anchor = center, minimum width = 0.3cm,
		minimum height = 0.3cm,] at (3,0.5) (u2) {};
		\node[draw, shape = circle, anchor = center, minimum width = 0.3cm,
		minimum height = 0.3cm,] at (4.5,0.5) (u3) {};
		%\draw[dotted] (u1.north) -- (s1.south);
		%\draw[dashed] (u1.north) -- (s3.south);
		%\draw[dashed] (u1.north) -- (s4.south);
		%\draw[dashed] (u1.north) -- (s5.south);
		\draw[] (u2.north) -- (s3.south);
		\draw[] (u2.north) -- (s5.south);
		\draw[] (u3.north) -- (s3.south);
		\draw[] (u3.north) -- (s5.south);
	\end{tikzpicture}
	\caption{Iteration 2}
	\end{subfigure}
	\caption{Graphical representation of the decoding process}
	\label{fig:irsa_decoding_rep}
\end{figure}
\vspace{-0.4in}
\begin{figure}[H]
	\begin{subfigure}[b]{0.5\textwidth}
	\centering
	\begin{tikzpicture}[scale = 0.7]
		% Basic Grid for the matrix
		\foreach \x in {0,...,-3}
		\draw[dashed] (0,\x) -- (8,\x);
		\foreach \x in {0,2,4,6,8}
		\draw[dashed] (\x,0) -- (\x,-3);
		\foreach \x in {1,2,3,4}
		\node at (2*\x-1,0.3) {Slot \x};
		\foreach \x in {1,2,3}
		\node at (-0.8, -\x + 0.5) {User \x};
		% Putting blocks for user 1
		\foreach \x in {1,4}
		\node [draw, fill= gray!30, shape=rectangle, minimum width=1cm, minimum
		height=0.5cm, anchor=center, rounded corners] at (2*\x-1,-0.5) {$P =
		1$};
		\foreach \x in {2,3}
		\node [draw, fill= gray!30, shape=rectangle, minimum width=1cm, minimum
		height=0.5cm, anchor=center, rounded corners] at (2*\x-1,-0.5) {$P =
		4$};
		\foreach \x in {2}
		\node [draw, fill= gray!30, shape=rectangle, minimum width=1cm, minimum
		height=0.5cm, anchor=center, rounded corners] at (2*\x-1,-1.5) {$P =
		4$};
		\foreach \x in {4}
		\node [draw, fill= gray!30, shape=rectangle, minimum width=1cm, minimum
		height=0.5cm, anchor=center, rounded corners] at (2*\x-1,-1.5) {$P =
		1$};
		\foreach \x in {2,4}
		\node [draw, fill= gray!30, shape=rectangle, minimum width=1cm, minimum
		height=0.5cm, anchor=center, rounded corners] at (2*\x-1,-2.5) {$P =
		1$};
	\end{tikzpicture}
	\caption{}
	\end{subfigure}
	\begin{subfigure}[b]{0.5\textwidth}
	\centering
	\begin{tikzpicture}[scale = 0.7]
		\node[draw, shape = rectangle, anchor = center, minimum width = 0.5cm,
		minimum height = 0.5cm] at (0,2.5) (s1) {};
		\node[draw, shape = rectangle, anchor = center, minimum width = 0.5cm, minimum
			height = 0.5cm] at (2,2.5) (s3) {};
		\node[draw, shape = rectangle, anchor = center, minimum width = 0.5cm,
		minimum height = 0.5cm] at (4,2.5) (s4) {};
		\node[draw, shape = rectangle, anchor = center, minimum width = 0.5cm, minimum
			height = 0.5cm] at (6,2.5) (s5) {};
		\node[draw, shape = circle, anchor = center, minimum width = 0.5cm, minimum height = 0.5cm] at (1.5,0.5) (u1) {};
		\node[draw, shape = circle, anchor = center, minimum width = 0.5cm, minimum height = 0.5cm] at (3,0.5) (u2) {};
		\node[draw, shape = circle, anchor = center, minimum width = 0.5cm, minimum height = 0.5cm] at (4.5,0.5) (u3) {};
		\draw (u1.north) -- (s1.south);
		\draw (u1.north) -- (s3.south);
		\draw (u1.north) -- (s4.south);
		\draw (u1.north) -- (s5.south);
		\draw (u2.north) -- (s3.south);
		\draw (u2.north) -- (s5.south);
		\draw (u3.north) -- (s3.south);
		\draw (u3.north) -- (s5.south);
		\foreach \x in {1,2,3,4}
		\node at (2*\x- 2,3) {$s_\x$};
		\foreach \x in {1,2,3}
		\node at (1.5*\x,0) {$u_\x$};
	\end{tikzpicture}
	\caption{}
	\end{subfigure}
	\caption{(a) denotes a MAC frame with 3 users and 4 slots where each user
	employs a repetition code independently and also employ power control by
	transmitting packets at different power levels chosen independently for the
	set $\mathcal{P} = \{1,4\}$,
	(b) denotes the bipartite graph representation of the same.}
	\label{fig:irsa_pc_slots_rep}
\end{figure}
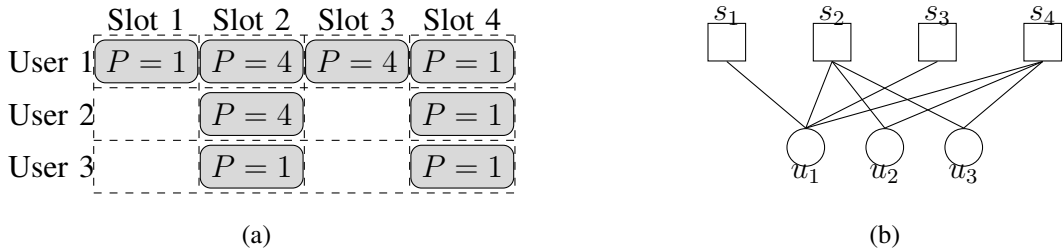

\begin{figure}[H]
	\begin{subfigure}[b]{0.32\textwidth}
	\centering
	\begin{tikzpicture}[scale = 0.6]
		\node[draw, shape = rectangle, anchor = center, minimum width = 0.3cm,
		minimum height = 0.3cm,  fill = gray!40] at (0,2.5) (s1) {};
		\node[draw, shape = rectangle, anchor = center, minimum width = 0.3cm,
		minimum height = 0.3cm, ] at (2,2.5) (s3) {};
		\node[draw, shape = rectangle, anchor = center, minimum width = 0.3cm,
		minimum height = 0.3cm, fill = gray!40 ] at (4,2.5) (s4) {};
		\node[draw, shape = rectangle, anchor = center, minimum width = 0.3cm,
		minimum height = 0.3cm, ] at (6,2.5) (s5) {};
		\node[draw, shape = circle, anchor = center, minimum width = 0.3cm,
		minimum height = 0.3cm, fill = gray!40] at (1.5,0.5) (u1) {};
		\node[draw, shape = circle, anchor = center, minimum width = 0.3cm,
		minimum height = 0.3cm,] at (3,0.5) (u2) {};
		\node[draw, shape = circle, anchor = center, minimum width = 0.3cm,
		minimum height = 0.3cm,] at (4.5,0.5) (u3) {};
		\draw[dotted] (u1.north) -- (s1.south);
		\draw[dashed] (u1.north) -- (s3.south);
		\draw[dotted] (u1.north) -- (s4.south);
		\draw[dashed] (u1.north) -- (s5.south);
		\draw[] (u2.north) -- (s3.south);
		\draw[] (u2.north) -- (s5.south);
		\draw[] (u3.north) -- (s3.south);
		\draw[] (u3.north) -- (s5.south);
	\end{tikzpicture}
	\caption{Iteration 1}
	\end{subfigure}
	\begin{subfigure}[b]{0.32\textwidth}
	\centering
	\begin{tikzpicture}[scale = 0.6]
		\node[draw, shape = rectangle, anchor = center, minimum width = 0.3cm,
		minimum height = 0.3cm,  fill = gray!40] at (0,2.5) (s1) {};
		\node[draw, shape = rectangle, anchor = center, minimum width = 0.3cm,
		minimum height = 0.3cm, fill = gray!40] at (2,2.5) (s3) {};
		\node[draw, shape = rectangle, anchor = center, minimum width = 0.3cm,
		minimum height = 0.3cm, fill = gray!40] at (4,2.5) (s4) {};
		\node[draw, shape = rectangle, anchor = center, minimum width = 0.3cm,
		minimum height = 0.3cm, ] at (6,2.5) (s5) {};
		\node[draw, shape = circle, anchor = center, minimum width = 0.3cm,
		minimum height = 0.3cm, fill = gray!40] at (1.5,0.5) (u1) {};
		\node[draw, shape = circle, anchor = center, minimum width = 0.3cm,
		minimum height = 0.3cm, fill = gray!40] at (3,0.5) (u2) {};
		\node[draw, shape = circle, anchor = center, minimum width = 0.3cm,
		minimum height = 0.3cm,] at (4.5,0.5) (u3) {};
		%\draw[dotted] (u1.north) -- (s1.south);
		%\draw[dashed] (u1.north) -- (s3.south);
		%\draw[dashed] (u1.north) -- (s4.south);
		%\draw[dashed] (u1.north) -- (s5.south);
		\draw[dotted] (u2.north) -- (s3.south);
		\draw[dashed] (u2.north) -- (s5.south);
		\draw[] (u3.north) -- (s3.south);
		\draw[] (u3.north) -- (s5.south);
	\end{tikzpicture}
	\caption{Iteration 2}
	\end{subfigure}
	\begin{subfigure}[b]{0.32\textwidth}
	\centering
	\begin{tikzpicture}[scale = 0.6]
		\node[draw, shape = rectangle, anchor = center, minimum width = 0.3cm,
		minimum height = 0.3cm,  fill = gray!40] at (0,2.5) (s1) {};
		\node[draw, shape = rectangle, anchor = center, minimum width = 0.3cm,
		minimum height = 0.3cm, fill = gray!40] at (2,2.5) (s3) {};
		\node[draw, shape = rectangle, anchor = center, minimum width = 0.3cm,
		minimum height = 0.3cm, fill = gray!40] at (4,2.5) (s4) {};
		\node[draw, shape = rectangle, anchor = center, minimum width = 0.3cm,
		minimum height = 0.3cm, fill = gray!40] at (6,2.5) (s5) {};
		\node[draw, shape = circle, anchor = center, minimum width = 0.3cm,
		minimum height = 0.3cm, fill = gray!40] at (1.5,0.5) (u1) {};
		\node[draw, shape = circle, anchor = center, minimum width = 0.3cm,
		minimum height = 0.3cm, fill = gray!40] at (3,0.5) (u2) {};
		\node[draw, shape = circle, anchor = center, minimum width = 0.3cm,
		minimum height = 0.3cm,fill = gray!40] at (4.5,0.5) (u3) {};
		%\draw[dotted] (u1.north) -- (s1.south);
		%\draw[dashed] (u1.north) -- (s3.south);
		%\draw[dashed] (u1.north) -- (s4.south);
		%\draw[dashed] (u1.north) -- (s5.south);
		%\draw[dotted] (u2.north) -- (s3.south);
		%\draw[dashed] (u2.north) -- (s5.south);
		\draw[dotted] (u3.north) -- (s3.south);
		\draw[dotted] (u3.north) -- (s5.south);
	\end{tikzpicture}
	\caption{Iteration 3}
	\end{subfigure}
	\caption{Graphical representation of the decoding process}
	\label{fig:irsa_decoding_rep}
\end{figure}

\begin{comment}
\begin{figure}[H]
\minipage{0.333\textwidth}
  \includegraphics[width=\linewidth]{}
  \label{fig:intro_sa_sa_dpc}
\endminipage\hfill
\minipage{0.333\textwidth}
  \includegraphics[width=\linewidth]{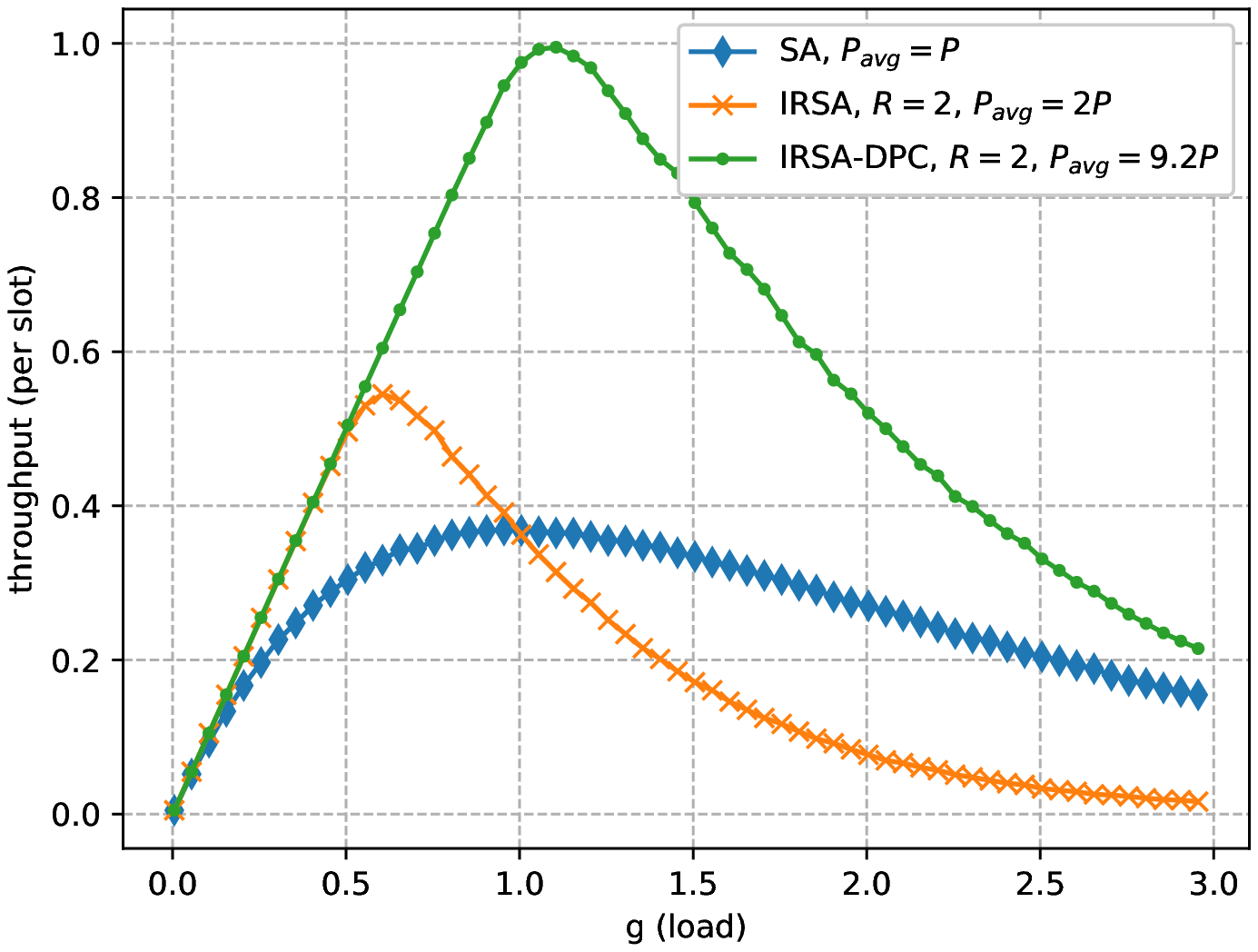}
  \label{fig:intro_irsa_irsa_pc}
\endminipage\hfill
\minipage{0.333\textwidth}%
  \includegraphics[width=\linewidth]{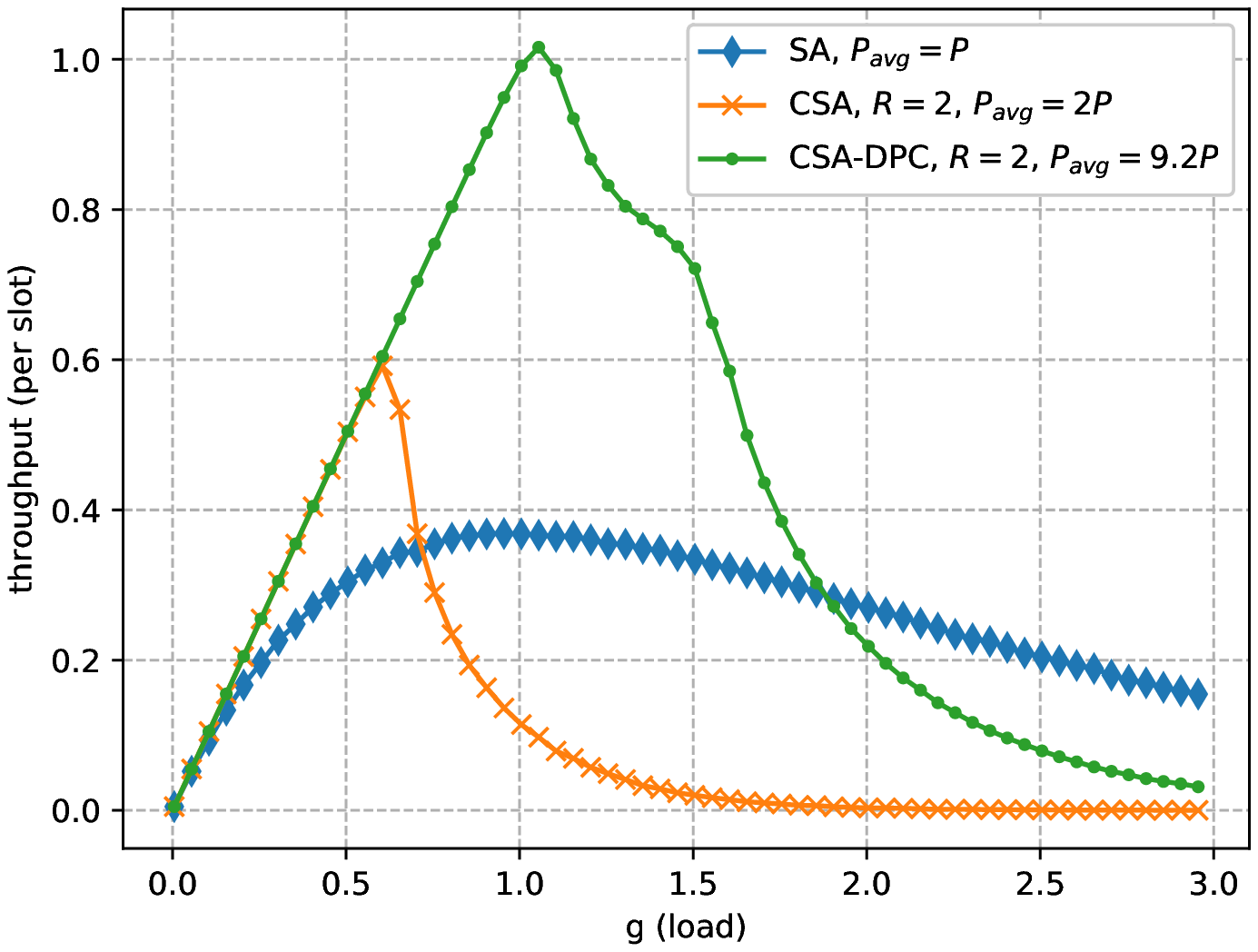}
  \label{fig:intro_csa_csa_pc}
\endminipage
\caption{Option 1}
\end{figure}
\end{comment}

\begin{figure}[H]


\minipage{0.55\textwidth}
  \includegraphics[width=\linewidth]{plots/intro_irsa_irsa_pc}
  \label{fig:intro_irsa_irsa_pc}
\endminipage\hfill
\minipage{0.55\textwidth}
  \includegraphics[width=\linewidth]{plots/intro_csa_csa_pc}
  \label{fig:intro_csa_csa_pc}
\endminipage
\caption{$P$ is the minimum power required to ensure successful decoding at the
base station and $P_{avg}$ refers to ther average energy consumption per frame.}
\label{fig:intro_comparison}
\end{figure}

\begin{comment}
\begin{figure}[H]
\minipage{0.55\textwidth}
  \includegraphics[width=\linewidth]{}
  \label{fig:intro_sa_irsa_irsa_pc}
\endminipage\hfill
\minipage{0.55\textwidth}
  \includegraphics[width=\linewidth]{}
  \label{fig:intro_sa_csa_csa_pc}
\endminipage
\caption{ \textcolor{blue}{YYY Note to Rahul : Plot with CSA data taken directly
from the Liva's paper and plotted only uptil 1}}
\end{figure}
\end{comment}

\section{System Model}
\label{sec:System-Model}
We consider a time slotted random access model, where there are total of $N$ users, that wish to communicate one packet of information in 
each frame. A frame consists of a contiguous block of $M$ time slots. All users are assumed to be slot-synchronized, and that each packet fits 
exactly in the width of one time slot. The $N$ users are uncoordinated, and have to make their transmission choices autonomously. 
We are concerned with studying large systems, and hence assume that $N$ and $M$
are large. The \emph{load} $g$ is defined as the average number of users per slot, and is
given by $g=N/M$. 

%The users are uncoordinated, and hence have to choose a random 
%slot for transmission without the knowledge of the other users. In case of 
%collision of message packets, the packets can be decoded using physical layer
%algorithms as per the channel models described below. Once the packet is
%decoded, it can be \emph{subtracted} from its slot, thus possibly enabling the
%recovery of other packets in the slot. This process will be
%called \emph{intra-slot cancellation}. 

We primarily\footnote{Analysis of CSA will follow similarly.} consider the IRSA as the RAP
\cite{liva}, for selecting the slots in which each node is going to transmit its packets. In particular, with IRSA, 
each user first samples a repetition-number $l$ from a probability
distribution $\Lambda = \{\Lambda_{l}\}_{l=1}^{\text{deg}}$ (where $l$ ranges from 1 to the maximum
allowed repetition number  called $\text{deg}$). The user then creates $l$ replicas of
its packet and transmits each of them uniformly at random in $l$ unique slots
in a frame of total $m$ slots. Define
$R$ as the average repetition rate, $R \triangleq \sum_{l} l\Lambda_{l}$. 
Each replica also stores the location of the other packet-replicas.

%[move later] Therefore, if one of the replicas were to be decoded in one of
%the slots, then the other replicas can be \emph{subtracted} from their
%respective slots. This process is called \emph{inter-slot cancellation}. 

For each of the replicas to be transmitted for each node, the transmit power is chosen as follows. 
Let $\P = \left\{ P_k \right\}_{k=1}^{|\P|}$ denote the set of power levels that any user can
select for transmission, and $ \Delta = \{ \delta_{k} \}_{k=1}^{|\P|}$, ($\sum_{k=1}^{|\P|} \delta_{k} = 1$) denote the probability
distribution over the power levels of set $\P$, which we will call the  {\it power-choice
distribution}. 
The transmitted power $P$ for each of the replicas (that is already selected according to $\Lambda$) is sampled independently according to the distribution defined as 
$\Pr(P = P_k) = \delta_k, \forall k \in \left\{ 1, \dots, |\P| \right\}$
(without loss of generality, assume that $P_{k} > P_{k+1}$).  The overall strategy is called the IRSA-PC protocol.

Therefore, the 
average power per replica is $\sum_{k} P_{k}\delta_{k}$, and the average power
per user is $P_{avg} = R \sum_{k} P_{k}\delta_{k}$. All the users use the common
repetition distribution $\Lambda$, set with powers $\mathcal{P}$ and power-choice
distribution $\Delta$.

%We consider two channel models in this paper. 

The communication channel is assumed to be interference limited, and hence we ignore the additive noise as done in the prior work \cite{liva}. 
We consider an {\it ideal channel} for analysis similar to prior work \cite{liva}, where the received power $P_{\text{rec}}$ is 
equal to the transmitted power $P$ i.e., $P_{\text{rec}}=P$. 
%This channel model enables a theoretical analysis
%of our IRSA-PC protocol. 

%With either the ideal/path-loss model,  
A packet can be decoded at the receiver, if its \emph{signal to interference
ratio (SIR)} is sufficiently high. Let the set of users who have transmitted
their packets in slot $m$ be denoted by $\mathcal{R}^{0}_{m}$. Then, the
$SIR^{(i,m)}$ of a user $i \in \mathcal{R}^{0}_{m}$ is given by: 
	$SIR^{(i,m)} = \frac{P_{rec}^{(i)}}{\sum_{j \in \mathcal{R}^{0}_{m}\setminus i}
                    P_{rec}^{(j)}}$. Unless otherwise specified, we consider the ideal channel model for the rest of the paper.

The packet of user $i$ (called packet $i$ here after) can be decoded  in slot $m$ if $SIR^{(i,m)} \geq \beta$, where
$\beta$ is the \emph{capture threshold}. Throughout this paper, we assume that $\beta > 1$, 
which implies that
at most one packet can be captured in a slot at a particular iteration. The decoding
process follows an iterative procedure: at a particular iteration, as many packets with $SIR^{(i,m)} \geq \beta$ are
decoded,  after which all the decoded packets are subtracted from
all the slots that they occupy, and the process moves onto next iteration. This process
terminates when no more new packets can be decoded. Such a process is called
\emph{successive interference cancellation (SIC)} in literature. 

%If the
%set with powers is a singleton set, $\{P\}$, and the channel is ideal, then
%our model becomes equivalent to the collision channel model considered in
%\cite{liva}, where a packet can be decoded in a slot (has $SIR \geq \beta$, in
%our notation), if and only if it is the
%only packet present in the slot.

In this paper, a system is represented by the parameters $\left\{ M, \P, \Delta, \beta, \Lambda \right\}$, where
$M$ is the number of slots per frame, $\P$ is the set of power levels, $\Delta$ is the
associated power-choice distribution, $\beta$ is the capture-threshold and $\Lambda$ represents the
RAP. 
%In this paper, we focus on the RAT being IRSA-PC with $\Lambda = \{ \Lambda_{l}
%\}_{l=1}^{\text{deg}}$ representing its repetition-distribution.
The \emph{throughput} $T(g, M, \P, \Delta, \beta, \Lambda)$ of the system is the average number of user-packets
decoded per slot for a specific load $g = N/M$. A throughput $T(g)=T$ is said to be achievable in a system
if a throughput $T$ can be obtained on the system for some load $g$.
%The is called the {\it capacity}
%$T^{\star}$.

\begin{definition} (Capacity)
    For a system $\left\{ M, \P, \Delta, \beta, \Lambda \right\}$, the capacity $T^\star(M, \mathcal{P},\Delta, \beta, \Lambda)$ is
defined as  the throughput $T(g)$ maximized over all possible loads $g$, 
\begin{equation}
    T^\star(M, \mathcal{P}, \Delta, \beta, \Lambda) := \sup_{g \geq 0} T(g, M, \P, \Delta, \beta, \Lambda).
\label{eq:capacity_1}
\end{equation}
\end{definition}

Our theoretical analysis focuses on large systems, with 
$N, M\rightarrow \infty$. Therefore, we define asymptotic versions of the throughput and
capacity as well.

\begin{definition}(Asymptotic Throughput)
	\label{def:asymptotic_throughput}
    The asymptotic throughput $T_\infty(g,\P, \Delta, \beta, \Lambda)$ is defined as:
        $T_\infty(g,\P, \Delta, \beta, \Lambda) = \lim_{M \rightarrow \infty} T(g, M , \P, \Delta,
        \beta,  \Lambda)$.\end{definition}
%\vspace{-0.5in}
\begin{definition}(Asymptotic Capacity)
	\label{def:asymptotic_capacity}
    The asymptotic capacity $T^\star_\infty(g,\P, \Delta, \beta, \Lambda)$ is
    defined as:
        $T^\star_\infty(\P, \Delta, \beta, \Lambda) = \sup_{g\geq 0} \lim_{M
        \rightarrow \infty} T(g, M , \P,
        \Delta, \beta, \Lambda)$.
    \end{definition}
\subsection{Graphical Model Description}
\label{subsec:graphical_model}
 %[XXX: say upfront that we concentrate of IRSA only, and talk about ALOHA only for easy exposition] 
%[ZZZ: A lot of this subsection can be combined with the System Model itself and avoid repetition.
%AAA: taking care of this tomorrow.]
To facilitate analysis, we next describe the graphical model description of the proposed strategy as follows. 
One frame of the IRSA-PC model can be represented by a
bipartite graph $\mathcal{G} = \{U,S,E\}$, where $U$ is the set of $N$ nodes
representing the users, $S$ is the set of $M$ nodes representing the slots and
$E$ is the set of edges. An edge $e = (u,s)$ exists between a user node $u \in U$ and a
slot node $s \in S$
if the user $u$ has transmitted a packet-replica to the slot $s$ in the given
frame. If the user $u$ transmits a packet-replica to slot $s$ with power
$P$, then we will say that the edge connecting $u$ and $s$ has power $P$. Fig. \ref{fig:csa_graph_model} shows the
construction of the bipartite graph, for the example in Fig. \ref{fig:csa_time_model}.

In a
particular iteration of the SIC algorithm, we say that the egde $e = (u,s)$ is {\bf known} if the user
$u$'s packet-replica in slot $s$ has been decoded by the SIC algorithm. Otherwise, the edge is
unknown. 
This construction is very similar to the Tanner graph representation of
LDPC codes, with the user nodes acting like bit nodes, and the slot nodes
acting like check nodes. In fact, the
SIC algorithm can be interpreted as a peeling
decoder of LDPC codes over erasure channel: in a slot, if a particular
user $u$'s packet is decoded, then the node $u$ and all the edges connected to
it are deleted (made ``known'') from $\mathcal{G}$. This process is repeated until either all
edges are deleted, or no more can be deleted. This interpretation opens up an avenue to
use the powerful density evolution (DE) techniques from LDPC codes to analyze our IRSA-PC model. In order to describe the
DE analysis in later sections, we
need a modicum of notation which we will define now.
\begin{figure}
    \centering
    \includegraphics[scale=0.5]{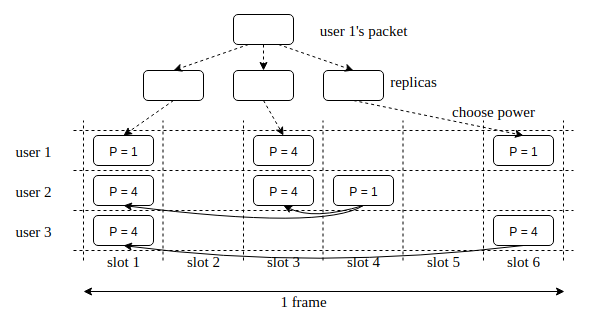}
    \caption{Illustration of the IRSA-PC model with $N=3$ and $M=6$. The set of
    powers is $\mathcal{P} = \{1,4\}$, and the powers chosen by the users are
    indicated in the Fig.. The capture threshold $\beta$ is 2. In slots 1 and 3, none of the users have $SIR \geq \beta$ in the first iteration. User 2's packet is
captured in slot 4 as it has no collisions. User 3's packet is captured in slot
6 as its $SIR \geq \beta$. In the next iteration, users 2 and 3's packets are
subtracted from slot 1 by inter-slot cancellation which enables the capture of
user 1's packet. Also, in slot 3, user 2's packet is subtracted using
intra-slot cancellation which enables the capture of user 1's packet in slot 3
as well.}
    \label{fig:csa_time_model}
\end{figure}
\begin{comment}\textcolor{blue}{[AAA: this para can probably go in the Introduction/Related-Work section?]
    However, using DE on a general IRSA-PC model is challenging primarily because it is difficult to
analytically keep track of SIR and probability of capture through the iterations of SIC\cite{4036333,
6155698}. Previous works have sought to overcome this challenge without considering power control and focussing on
specific channel models.  \cite{liva} used DE to analyze IRSA under the
simple collision channel model.  In \cite{capture_fading}, the authors show that $SIR$ follows the
Pareto distribution for IRSA in the 
Rayleigh-fading channel model, thus enabling DE analysis.
In \cite{near_far}, the authors studied IRSA under the path-loss channel model, however they
only perform Monte-Carlo simulations for DE. A primary
contribution of our work is to use approximations to the SIR and capture model
in order to formulate DE equations with power control. Our approximations are
semi-rigorously justified and are numerically verified. }
\end{comment}

%[ZZZ : Make one figure for this bipartite graph in the system model itself and move it there]
For our purposes, it suffices to summarize the bipartite-graph by its user (and
slot) -node degree
distribution, which denotes the fraction of user (and slot) nodes with a particular degree.
With our IRSA-PC protocol, the probability that a user creates $l$
replicas of his packet (which means that the corresponding user-node has a degree $l$ in the
bipartite graph) is equal to $\Lambda_{l}$. In the large system limit, the
user-node degree distribution of the graph will be nearly identical to this
probability distribution $\Lambda_{l}$. Therefore, we use $\{\Lambda_{l}\}$ to denote both the
probability distribution of the number of packet replicas, as well as the user-node degree
distribution of the bipartite-graph.
 Similarly, let $\{\psi_{l}\}$ represent
the slot-node degree distribution, where $\psi_{l}$ represents the
probability that a slot node has a degree $l$. These distributions can be
succinctly represented in polynomial form as: $\Lambda(x) = \sum_{l} \Lambda_{l}x^{l}$
and $\psi(x) = \sum_{l} \psi_{l}x^{l}$. It is useful to also define edge-perspective
degree distributions. Let $\lambda_{l}$ be the probability that an edge is
connected to a user node of degree $l$. Similarly, let $\rho_{l}$ be the
probability that an edge is connected to a slot node of degree $l$. Their
corresponding polynomial representations are: $\lambda(x) = \sum_{l}
\lambda_{l}x^{l-1}$ and $\rho(x) = \sum_{l} \rho_{l}x^{l-1}$. The edge and node
perspective distributions can be computed from each other with the following
simple relations: $\lambda(x) = \Lambda'(x)/\Lambda'(1)$, $\rho(x) =
\psi'(x)/\psi'(1)$, $\Lambda(x) = \int_{0}^{x} \lambda(x)/ \int_{0}^{1}
\lambda(x)$, and $\psi(x) = \int_{0}^{x} \rho(x)/ \int_{0}^{1} \rho(x)$. Since
the slots where packets are transmitted by each user are chosen uniformly at random, the slot node-perspective and edge-perspective degree
distributions are determined by the load $g$ and repetition rate $R$ as: $\psi(x)
= \rho(x) = e^{-gR(1-x)}$ (for the derivations, refer \cite{liva}).

\begin{comment}
\begin{figure}[h]
    \centering
    \includegraphics[scale=0.5]{}
	\caption{Graph construction for the example in Fig.
	\ref{fig:csa_time_model}.}
    \label{fig:csa_graph_model}
\end{figure}
\end{comment}

\begin{figure}[H]
\centering
    \begin{tikzpicture}[scale = 0.6, square/.style={regular polygon,regular polygon sides=4}]
        \node at (0,0) [circle,draw, minimum size = 0.75cm] (u1) {};
        \node at (2,0) [circle, draw, minimum size = 0.75 cm] (u2) {};
        \node at (4,0) [circle, draw, minimum size = 0.75 cm] (u3) {};
        \node at (0,0) {$1$};
        \node at (2,0) {$2$};
        \node at (4,0) {$3$};
        \node at (2,-1) {$\textrm{Users}$};
        \node at (-1,2) [square,draw,minimum size = 1cm] (s1) {};
        \node at (0.2,2) [square,draw,minimum size = 1cm] (s2) {};
        \node at (1.4,2) [square,draw,minimum size = 1cm] (s3) {};
        \node at (2.6,2) [square,draw,minimum size = 1cm] (s4) {};
        \node at (3.8,2) [square,draw,minimum size = 1cm] (s5) {};
        \node at (5,2) [square,draw,minimum size = 1cm] (s6) {};
        \node at (-1,2) {$1$};
        \node at (0.2,2) {$2$};
        \node at (1.4,2) {$3$};
        \node at (2.6,2) {$4$};
        \node at (3.8,2) {$5$};
        \node at (5,2) {$6$};
        %\node at (-1,3) {$\textrm{Slot } 1$};
        %\node at (0.2,3) {$\textrm{Slot } 2$};
        %\node at (1.4,3) {$\textrm{Slot } 3$};
        \node at (2,3) {$\textrm{Slots}$};
        %\node at (3.8,3) {$\textrm{Slot } 5$};
        %\node at (5,3) {$\textrm{Slot } 6$};
        \draw[] (u1.north) -- (s1.south);
        \draw[] (u1.north) -- (s3.south);
        \draw[] (u1.north) -- (s6.south);
        \draw[] (u2.north) -- (s1.south);
        \draw[] (u2.north) -- (s3.south);
        \draw[] (u2.north) -- (s5.south);
        \draw[] (u3.north) -- (s1.south);
        \draw[] (u3.north) -- (s6.south);
    \end{tikzpicture}
    \caption{Graph construction for the example in Fig.
	\ref{fig:csa_time_model}.}
    \label{fig:csa_graph_model}
\end{figure}
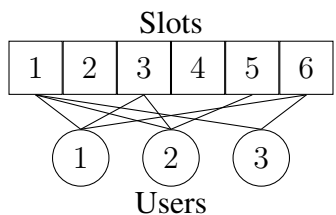

\section{Slotted Aloha}
\label{sec:SA}

\begin{comment}
Before discussing the IRSA-PC model and analyzing the performance of that model,
we first describe the \emph{vanilla Slotted Aloha (SA)} scheme over the
collision channel. Here, the system is given by $\left\{ M, \P, \Delta, \beta \right\}$, with the
set with power levels being a singleton set, $\P = \left\{ P \right\}$, and thus the power choice
distribution being $\Delta = \left\{ 1 \right\}$. The capture threshold $\beta$ can be an arbitrary
number greater than 1.
\end{comment}

\begin{comment}
In the vanilla Slotted ALOHA (SA) communication scheme each
user independent of other users chooses a uniformly random slot to transmit
their packet and does not create multiple replicas of their packet. Therefore
$\P = \{P\}, \Delta = \{1\}$ and $\Lambda(x) = x$.
Under the collision channel assumption, a packet can only be decoded if it is
the only packet transmitted in a given slot. In the case when $N, M \to \infty$,
we have that $T_{\text{SA}} = ge^{-g} \leq 1/e$.
%Hence, throughput is essentially the probability of exactly one transmission
%happening in a particular slot. In the asymptotic case of number of users $N \to \infty$, and number
%of slots $M \to
%\infty$, throughput for the vanilla SA scheme can written as a function of the
%load $g = N/M$, as $ T_{\text{SA}} = ge^{-g}$. The capacity is achieved for $g = 1,
%T_{\text{SA}}^\star = 1/e$. 
Since each user transmits only one packet in one frame, one cannot
leverage inter-slot SIC by performing inter-slot decoding. Additionally since all the users
transmit the packet at the same power level, intra-slot SIC via capture effect
cannot be leveraged since $\beta > 1$.
\end{comment}

Before analyzing IRSA-PC, we characterize the throughput (Theorem \ref{theorem:dpc_sa_thruput}) when PC is used with Slotted Aloha (SA) 
that is a special case of IRSA-PC with $\Lambda(x)=1$.  Recall that without PC, the throughput of SA is $T_{SA} = ge^{-g}$, with a
capacity of $1/e$ achieved at $g=1$.
Consider a system, called SA-DPC (stands for Slotted Aloha with dual power control) with parameters 
$\left\{ M, \P = \left\{ P_1, P_2 \right\},
\Delta = \left\{ \delta, 1-\delta \right\}, \beta, \Lambda = \{1\} \right\}$, where a user wishes to transmit a single packet with no replicas, and 
the transmit power for the single packet is chosen as $P_1$ with probability $\delta$, and $P_2$ with
probability $1-\delta$. Once the transmit power is chosen, the packet is transmitted in one of the $M$ slots chosen uniformly at random.
\begin{comment}
\textbf{In the current setting we assume an interference limited channel without  path loss, fading nor shadowing, etc  }. 
Accordingly we consider that the  users choose to transmit a message at two power levels say $P_h$ (for high power level) and $P_l$ (for low power level). 
Let us define the $\textrm{SIR}_i$ of the $i$ user in slot $j$ in an interference limited channel as: 

\begin{equation}
\text{SIR}_i = \frac{P_{i,j}}{\sum_{m \in R_j \backslash \{i\}} P_{m,j}}
\end{equation}

where $R_j$ is the set of users transmitting the message in slot $j$.
We denote he number of messages received at slot $j$ as $|R_{j}| = r_j$. 
\end{comment}
\begin{comment}
In the vanilla Slotted Aloha scheme, if more than one user transmites we say
that a collision occurs and none of the messages can be ``decoded''.
But if we consider the capture effect, message corresponding to user $i$ can be
decoded if $\text{SIR}_i \geq \beta$ where $\beta$ is a channel parameter. 
%%AA: Define beta
In all our analysis, we will focus on the narrow-band case where $\beta > 1$.
This assumption simplifies the analysis because at each decoding step, at most
one message can have an SIR greater than the threshold $\beta$.  
%AA: what do you mean here by this assumption...?
\end{comment}

With $P_1>P_2$, define $k \triangleq P_{1}/\beta P_{2}$. With
$\beta > 1$, this implies that decoding of a packet in a slot with collisions is possible only if it has power $P_{1}$, 
and there are a maximum of $k$ other packets in the slot, all with lower transmit power $P_{2}$. The following lemma gives a bound on the probability
of the event that there are $k$ packets transmitted in a single slot by $N$
nodes. The proofs of this section are can be found in Appendix 
\ref{appendix:proof_SA}.
\begin{lemma}
    Consider a system with $M$ slots, and let there be $N$ users trying to communicate packets. For
    the fixed load $g = N/M$, the probability that a slot has $k$ packets transmitted in it (i.e.,
    each packet in that slot has $k-1$ interfering packets) under the Slotted Aloha scheme(either
    vanilla SA or SA-DPC) is $\mathcal{O}\left( \frac{1}{\textrm{poly}(k)}\right)$.
\label{lm:k_to_inf_approx}
\end{lemma}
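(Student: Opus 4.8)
The plan is to notice first that the power-choice distribution $\Delta$ plays no role in this statement: under both vanilla SA and SA-DPC every one of the $N$ users independently selects its single transmission slot uniformly at random among the $M$ slots, so the number of packets $X_m$ landing in any fixed slot $m$ has the same law, namely $X_m \sim \mathrm{Binomial}(N, 1/M)$, irrespective of the transmit powers. Hence it suffices to bound $\Pr[X_m = k]$ and the ``either vanilla SA or SA-DPC'' clause is handled by this single observation.

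Next I would write $\Pr[X_m = k] = \binom{N}{k}\left(\tfrac{1}{M}\right)^{k}\left(1-\tfrac{1}{M}\right)^{N-k} \le \frac{N^{k}}{k!}\cdot\frac{1}{M^{k}} = \frac{g^{k}}{k!}$, using $\binom{N}{k}\le N^{k}/k!$, $\left(1-\tfrac{1}{M}\right)^{N-k}\le 1$, and $N/M = g$. The point worth emphasizing is that this bound is uniform in $M$; in particular it survives the $M\to\infty$ regime in which the rest of the paper operates, where $X_m$ converges in distribution to $\mathrm{Poisson}(g)$ and $\Pr[X_m = k]\to g^{k}e^{-g}/k!$. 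One could equivalently invoke the Poisson limit theorem first and run the remaining step on $g^{k}e^{-g}/k!$.

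Finally I would convert the factorial decay into the claimed $\mathcal{O}(1/\textrm{poly}(k))$ bound via Stirling: since $k! \ge (k/e)^{k}$ we get $\frac{g^{k}}{k!}\le \left(\frac{ge}{k}\right)^{k}$, and for $k \ge 2eg$ this is at most $2^{-k}$, which is $o(k^{-d})$ for every fixed $d$. Thus for any polynomial $p$ there is a constant $C$ with $\Pr[X_m = k]\le C/p(k)$ for all $k$, the finitely many small values of $k$ being absorbed into $C$; this is exactly the asserted bound.

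I do not expect a genuine obstacle here. The only points requiring care are stating precisely in what sense ``$\mathcal{O}(1/\textrm{poly}(k))$'' is meant (domination by the reciprocal of an arbitrary fixed polynomial, with a constant independent of $k$), and keeping the binomial estimate uniform in $M$ so that the conclusion applies both to finite frames and in the $M\to\infty$ limit used elsewhere. If one prefers to avoid Stirling, the crude bound $k! \ge (k/3)^{3}$ valid for $k\ge 3$ (or any polynomial lower bound on $k!$) already suffices, since $g^{k}/k!\to 0$ faster than any such polynomial.
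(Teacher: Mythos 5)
Your proposal is correct and follows essentially the same route as the paper: reduce to the binomial probability $\binom{N}{k}(1/M)^k(1-1/M)^{N-k}$, control it by (roughly) $g^k/k!$, and observe that factorial decay beats any polynomial in $k$. The only cosmetic differences are that you obtain a bound uniform in $M$ rather than invoking the Poisson/Stirling approximation for large $N$, and you finish via $k!\ge (k/e)^k$ (yielding exponential decay) where the paper peels off $m$ factors of $k!$ to get $\mathcal{O}(1/k^m)$ for each fixed $m$; both steps are valid and the conclusion is the same.
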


\begin{remark}
\label{rmk:k_to_inf_approx}
Lemma \ref{lm:k_to_inf_approx} points out that as $k$ increases, the probability of having $k$ packets transmitted in some slot $j$
decreases at least as fast as $\frac{1}{k^p}$ for some finite $p$. Thus, for a slot to have at most $k$ other transmissions with power $P_2$ with small probability, $k$ has to be chosen suitably large.
\begin{comment}
While this
result is aysmptotic is nature, numerically taking $k = 5$ and approximating it
with $k = \infty$ are equivalent i.e $\text{Pr}_j(5) \approx \lim_{n \to \infty}
\text{Pr}_j(n)$. This approximation has been verified by simulations as
well. In rest of our analysis, we will approximate finite $k$ with $k = \infty$
for the purposes of simplification.
\end{comment}
However, notice that in the SA-DPC, we have defined that $P_1 = k\beta P_2$, hence taking
$k$ large enough requires $P_1$ to be very large, which is an unrealistic assumption to
make.
Hence we choose $k$ to be some integer such that $P_1$ is not unrealisably large
and at the same time the probability of $k$ or more packets getting transmitted together in one
slot is low ($\sim 10^{-2}$). After some experimentations, we found $k = 5$ to
be a suitable value. So while all our
results would follow by taking $k \to \infty$, for the purposes of constructing
the set $\mathcal{P}$ with powers (details in upcoming sections) and for
experiments, we would take $k = 5$ and get a very close approximation to
the case of $k \to \infty$.
\end{remark}
\begin{comment}
It can be verified numerically that for $N \geq 100$ and $k \geq 3$, the above term becomes negligibly small. We will use this fact for simplifying the expressions in the following lemma:
\end{comment}
We will now describe a recursive method to compute the throughput in the SA-DPC
scheme. 

\begin{lemma}
    Consider a SA-DPC system,  
    with power levels satisfying
    $P_1 \geq k \beta P_2$ for some integer $k$. If $k$ is sufficiently large (Remark \ref{rmk:k_to_inf_approx}), then the probability
    of a packet with power $P_1$ being decoded in its slot is $g \delta e^{-g\delta}$, where
    $g=N/M$ is the load.
\label{lm:count_capture}
\end{lemma}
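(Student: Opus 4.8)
The plan is to show that a $P_1$-packet is decoded in a slot precisely when that slot carries exactly one $P_1$-transmission and at most $k$ $P_2$-transmissions, and then to evaluate this probability in the large-system limit, where it factors into a product of Poisson probabilities that collapses to $g\delta e^{-g\delta}$ as $k$ grows. Throughout I read "the probability of a $P_1$-packet being decoded in its slot" as the probability that a fixed slot witnesses the (necessarily unique, since $\beta>1$) decoding of a $P_1$-packet — equivalently the expected number of $P_1$-packets decoded per slot — and I work in the regime $N,M\to\infty$ with $g=N/M$ fixed, as in the system model.

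Step 1 (reduction to an intra-slot event). Fix a slot and let $X$ and $Y$ be the numbers of $P_1$- and $P_2$-transmissions it receives. If $X\ge 2$, each $P_1$-packet suffers interference at least $P_1$, so its SIR is at most $1<\beta$ and nothing is captured; if $X=0$ there is no $P_1$-packet; and if $X=1$ the lone $P_1$-packet has $\mathrm{SIR}=P_1/(YP_2)$, which meets the threshold iff $Y\le P_1/(\beta P_2)$, an inequality guaranteed by $Y\le k$ since $P_1\ge k\beta P_2$. Because in SA-DPC every user sends a single replica, the packets co-located with a stuck $P_1$-packet are all $P_2$-packets in mutual collision that can never be peeled from any slot, so no inter-slot cancellation can later rescue that $P_1$-packet; hence the decoding event for a $P_1$-packet in this slot is exactly $\{X=1,\ Y\le k\}$.

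Step 2 (large-system probability). The counts of $P_1$-users, $P_2$-users, and absent users in the slot form a multinomial vector with $N$ trials and cell probabilities $\delta/M$, $(1-\delta)/M$, $1-1/M$, so
\[
\Pr(X=1,\,Y=j)=N\binom{N-1}{j}\Big(\tfrac{\delta}{M}\Big)\Big(\tfrac{1-\delta}{M}\Big)^{j}\Big(1-\tfrac{1}{M}\Big)^{N-1-j}.
\]
Letting $N,M\to\infty$ with $N/M=g$, the usual Poisson limit gives $\Pr(X=1,\,Y=j)\to g\delta\,e^{-g\delta}\cdot\frac{(g(1-\delta))^{j}}{j!}\,e^{-g(1-\delta)}$; i.e. $(X,Y)$ converges to independent $\mathrm{Poisson}(g\delta)$ and $\mathrm{Poisson}(g(1-\delta))$ variables. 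Summing over $0\le j\le k$,
\[
\Pr(X=1,\,Y\le k)\ \longrightarrow\ g\delta\,e^{-g\delta}\sum_{j=0}^{k}\frac{(g(1-\delta))^{j}}{j!}e^{-g(1-\delta)}=g\delta\,e^{-g\delta}\,\Pr\!\big(\mathrm{Poisson}(g(1-\delta))\le k\big).
\]

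Step 3 (letting $k$ be large) and the main obstacle. By Lemma~\ref{lm:k_to_inf_approx} (see also Remark~\ref{rmk:k_to_inf_approx}), the probability that a slot carries more than $k$ packets — in particular more than $k$ low-power packets — vanishes as $k\to\infty$, so $\Pr(\mathrm{Poisson}(g(1-\delta))\le k)\to1$ and for $k$ sufficiently large the right-hand side is $g\delta e^{-g\delta}$ up to a negligible term, which is the claim. The delicate points are the order of the two limits — first $M\to\infty$ to obtain the Poisson form, then $k\to\infty$ to discard the Poisson tail, with the error controlled uniformly over the relevant range of $g$ — and making the single-replica, no-inter-slot-rescue observation of Step 1 precise; neither is hard, but both deserve an explicit line.
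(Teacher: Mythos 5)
Your proposal is correct and follows essentially the same route as the paper's proof: both decompose the decoding event over the number of low-power interferers, evaluate each term via the multinomial-to-Poisson limit as $N,M\to\infty$, and then use the large-$k$ assumption (Lemma \ref{lm:k_to_inf_approx}) to extend the sum over interferer counts to infinity, yielding $g\delta e^{-g\delta}$. Your Step 1 observation that a stuck $P_1$-packet cannot be rescued by inter-slot cancellation in SA is a small explicit justification the paper leaves implicit, but it does not change the argument.
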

\begin{proof}[Proof Sketch]
The set of packets transmitted in the same slot by different nodes are called interferers for each other.
    Here we will provide a proof sketch following an intuition from Lemma \ref{lm:k_to_inf_approx}.
    A packet with transmit power $P_1$ can get decoded in a slot if it has upto $k$ interferers with transmit power $P_2$.
    Following Lemma \ref{lm:k_to_inf_approx}, as $k$ is made large, the probability that a packet
    has greater than $k$ interferers becomes very small. Therefore, for an appropriately large enough choice of $k$, with high probability, a
    packet with power $P_1$ can be decoded in a slot, as long as there is no other packet with 
    power level $P_1$ transmitted in the same slot. This is equivalent to a system with load $g\delta$, where $\delta$ is the probability of choosing power level $P_1$ for any node. Therefore,
    the probability that a packet with power $P_1$ is
    decoded in any slot is $g\delta e^{-g\delta}$.
\end{proof}

\begin{comment}
We will now introduce the {\it dual-power-control} scheme. Since we assume an uncoordinated system, each user decides independently to transmit the message with the power given by a probability distribution:

$\delta$ is the power control parameter which can be tuned to achieve high throughputs.

The throughput can be calculated in a recursive fashion by first decoding the high power packets, and then the low power packets. The calculation is shown
below: 
\end{comment}
\begin{theorem}
\label{theorem:dpc_sa_thruput}
Consider an SA-DPC system, with the power levels being such that $P_1 \geq k\beta P_2$, for some
integer $k$. For sufficiently large $k$ and $N,M \rightarrow
\infty$), the throughput of the system is $T_{\text{SA-DPC}} = g\delta e^{-g\delta} + (1 + g\delta)g(1-\delta)e^
{-g}$, where $g = N/M$ is the load.
\end{theorem}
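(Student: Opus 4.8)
The plan is to compute the throughput by decomposing it into two contributions: packets transmitted at the high power level $P_1$, and packets transmitted at the low power level $P_2$. By Lemma~\ref{lm:count_capture}, a packet with power $P_1$ is decoded in its slot with probability $g\delta e^{-g\delta}$ (this already accounts for the "at most $k$ interferers" caveat of Lemma~\ref{lm:k_to_inf_approx}, absorbed into the assumption that $k$ is sufficiently large). Since there are $N = gM$ users and a fraction $\delta$ of all transmissions use power $P_1$, the expected number of high-power packets decoded per slot is $g\delta e^{-g\delta}$, which is the first term. The remaining work is to count how many low-power packets get decoded, which happens in two ways: (i) a $P_2$-packet is alone in its slot (no high-power packet and no other low-power packet), or (ii) a $P_2$-packet shares its slot with exactly one $P_1$-packet that itself gets decoded, and after that high-power packet is cancelled via intra-slot SIC the $P_2$-packet becomes the unique remaining packet.

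First I would handle case (i): the probability that a given slot contains exactly one transmission and that transmission is at power $P_2$. A transmission lands in a fixed slot with probability $1/M$ and is low-power with probability $1-\delta$; conditioning on this, the other $N-1$ users must avoid the slot, giving $\frac{N}{M}(1-\delta)\left(1-\frac{1}{M}\right)^{N-1} \to g(1-\delta)e^{-g}$ as $N,M\to\infty$ (the familiar Poissonization argument, as in the vanilla SA computation). Next I would handle case (ii): I want the expected number of slots containing exactly one $P_1$-transmission and exactly one $P_2$-transmission, because in such a slot the high-power packet is decoded (it has one low-power interferer, and $1 \le k$, so its SIR exceeds $\beta$), and after cancelling it the low-power packet is alone. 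The probability a fixed slot has exactly one $P_1$-packet and exactly one $P_2$-packet is $\binom{N}{1}\binom{N-1}{1}\frac{\delta}{M}\cdot\frac{1-\delta}{M}\left(1-\frac{1}{M}\right)^{N-2}$, which tends to $(g\delta)(g(1-\delta))e^{-g}$. One must also argue that a $P_2$-packet sharing a slot with a decoded $P_1$-packet and one or more additional $P_2$-packets does \emph{not} get decoded (since after removing the $P_1$-packet the remaining low-power packets still collide, all at equal power, and $\beta>1$) — so only the "exactly one $P_1$, exactly one $P_2$" configuration contributes, plus potentially configurations with one $P_1$-packet, one surviving $P_2$-packet and extra $P_2$-packets only if $k$ permits the $P_1$-packet to survive but then the $P_2$-packets still collide; this is why the clean two-term formula emerges. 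Summing (i) and (ii): $g(1-\delta)e^{-g} + g\delta \cdot g(1-\delta)e^{-g} = (1+g\delta)g(1-\delta)e^{-g}$, the second term of the claimed expression. Adding the first term completes the proof.

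The main obstacle — and the place where the "sufficiently large $k$" hypothesis does real work — is justifying that the \emph{only} slot configurations that yield a decoded $P_2$-packet are the two enumerated above, i.e. that higher-occupancy configurations contribute a vanishing fraction. A $P_2$-packet could in principle be decoded in a slot with one $P_1$-packet and several $P_2$-packets if, after the $P_1$-packet is cancelled, somehow one $P_2$-packet captures — but with $\beta>1$ and equal low powers this is impossible, so those slots contribute nothing. Conversely a $P_1$-packet with between $2$ and $k$ low-power interferers is still decodable, yet none of its low-power slot-mates can be decoded there afterwards; such packets are already counted (or not) in the first term via Lemma~\ref{lm:count_capture} and do not add to the low-power count. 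The remaining care is the standard one: controlling the $O(1/M)$ corrections in the binomial-to-Poisson limits and invoking Lemma~\ref{lm:k_to_inf_approx} to discard the $O(1/\mathrm{poly}(k))$-probability events where a slot has more than $k$ packets, so that the recursive bookkeeping (first decode all $P_1$-packets, then re-examine slots for freshly-isolated $P_2$-packets) is exact in the limit. I would organize the write-up as: (1) state the two decoding events for a tagged low-power packet; (2) Poissonize each probability; (3) invoke Lemma~\ref{lm:count_capture} for the high-power term; (4) sum and simplify.
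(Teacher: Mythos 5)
Your proposal is correct and follows essentially the same route as the paper's proof: decompose the throughput into the $P_1$ contribution (taken directly from Lemma~\ref{lm:count_capture}) and the $P_2$ contribution, with the latter split into the ``alone in the slot'' and ``exactly one $P_1$-interferer, which is captured and cancelled'' cases, each Poissonized to give $g(1-\delta)e^{-g}$ and $g^2\delta(1-\delta)e^{-g}$ respectively. Your additional discussion of why no other slot configuration yields a decoded $P_2$-packet (equal low powers cannot capture when $\beta>1$) is a point the paper leaves implicit, but it does not change the argument.
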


\begin{proof}[Proof Sketch]
    From Lemma \ref{lm:count_capture}, we have the throughput contribution from packets with transmit power
    $P_1$. A packet with power $P_2$ cannot be decoded in slots when the number of packets with power $P_1$ is greater than
    $1$. Probability
    of a slot having no packet with power $P_1$  is $ (1-1/M)^{\delta N} \approx e^{- g
    \delta}$. From Lemma
    \ref{lm:count_capture}, probability of a slot having 1 packet with power $P_1$ is $g\delta e^{-g\delta}$.
    Hence, the probability that a packet with power $P_2$ in a slot faces no interference from a 
    packet with power $P_1$ is $g\delta e^{-g\delta} + e^{-g\delta}$.
    Since there are about $N(1-\delta)$ users that transmit a packet with power $P_2$, the probability
    that exactly one of them transmits a packet in any slot is $g(1-\delta)e^{-g(1-\delta)}$. Hence,
    probability that a packet with power $P_2$ is decoded in a slot is $ (1 + g\delta)g(1-\delta)e^
    {-g}$.
\end{proof}

Full proof of Lemma \ref{lm:count_capture} and Theorem 
\ref{theorem:dpc_sa_thruput} can be found in Appendix \ref{appendix:proof_SA}.

\begin{remark}Setting $\delta = 0$ or $\delta = 1$, we get the well-known result that for Slotted Aloha without PC,
$T_{\text{SA}} = ge^{-g}$ which is maximised for $g = 1$ giving $T^\star_{
\text{SA}} = 1/e$.
\end{remark}
Using Theorem \ref{theorem:dpc_sa_thruput}, we can optimize $\delta$ to obtain the maximum
throughput $T_{SA-DPC}$ for the SA-DPC, for a given value of the load. The maximum value of throughput possible is $0.658$, which is
achieved for a load of $g=1.75$ and $\delta=0.4$.
Let $\Delta T := T_{\text{SA-DPC}} - T_{\text{SA}}$ be the throughput gain with SA-DPC over SA. Then for any non-trivial
$\delta$, we have that $\Delta T = g^2\delta(1-\delta)e^{-g}$. Hence $\forall
\delta \in (0,1), \Delta T > 0$, which implies that for all loads, choosing a
SA-DPC with two power levels results in throughput gain over vanilla Slotted Aloha which
can be seen in Fig. \ref{fig:compare_sa_sa_dpc}.
%\end{minipage}
%\hspace{.3 cm}
%\begin{minipage}{6.5cm}
\begin{figure}
\centering
\includegraphics[scale = 0.5]{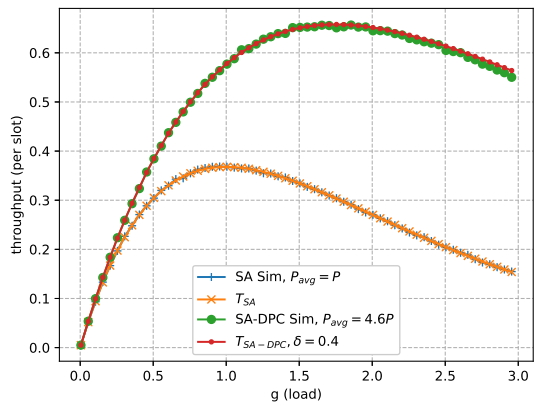}
\caption{Plot comparing vanilla SA and
SA-DPC}
\label{fig:compare_sa_sa_dpc}
\end{figure}

\subsection{$n$ Level Power Control in Slotted ALOHA}
In the $n$-level power control scheme for slotted aloha (SA-nPC), we generalise
the SA-DPC scheme from two power levels to $n$-power levels. In this
generalization, users can now choose to transmit their packet with power $P_i
\in \mathcal{P}$ where $|\mathcal{P}|=n$. Assume that $P_i \geq k \beta P_{i+1},
\forall i \in [n-1]$, where $[m] \triangleq \{1,2,\dots,m\}$

\begin{theorem}
\label{theorem:npc_sa_thruput}
Consider an SA-nPC system $\left\{ M, \P = \left\{ P_1, \dots, P_n \right\}, \Delta = \left\{
        \delta_1,\dots, \delta_n
\right\}, \beta, \Lambda = \{1\} \right\}$, with the power levels being such that $P_i \geq k\beta P_{i+1}$, for all
$i < n$ and for some
integer $k$. Let the number of users be $N$. For sufficiently large $k$ and number of users and
slots being large($N,M \rightarrow
\infty$), the throughput of the system is $T_{\text{SA-nPC}} = \sum_{i=1}^n \left(\prod_{j=1}^{i-1}(1 +
g\delta_j)e^{-g\delta_j}\right)g\delta_i e^{-g \delta_i}$, where $g = \frac{N}{M}$ is the load.
\end{theorem}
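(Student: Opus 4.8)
The plan is to mirror the proof of Theorem~\ref{theorem:dpc_sa_thruput}: reduce the SIC decoder's behaviour to a purely per-slot, intra-slot count, and then evaluate the resulting probabilities in the large-system limit. Since $\Lambda=\{1\}$, every user transmits a single packet with no replicas, so there is no inter-slot cancellation at all: what the decoder recovers in a given slot depends only on the multiset of packets landing in that slot, and the $M$ slots decouple completely. Fix a slot and let $n_i$ be the number of packets transmitted in it with power $P_i$, $i\in[n]$. The key structural claim I would establish is: under the large-$k$ approximation (so that, by Lemma~\ref{lm:k_to_inf_approx}, a slot holds more than $k$ packets only with probability $\mathcal{O}(1/\textrm{poly}(k))$, which I drop), a packet with power $P_i$ in this slot is decoded by SIC if and only if $n_i=1$ and $n_j\le 1$ for every $j<i$.

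I would prove this claim by induction on the power index $i$. For $i=1$: if $n_1\ge 2$ then every $P_1$ packet has $\mathrm{SIR}<1<\beta$, and since $\beta>1$ no packet of any level is captured at the first iteration, so — absent any replicas to bring help from outside — the slot is frozen forever and nothing in it is decoded; if $n_1=1$, then $\mathrm{SIR}=P_1/\sum_{j\ne 1}P^{(j)}_{\mathrm{rec}}\ge P_1/(kP_2)\ge\beta$ whenever the slot holds at most $k+1$ packets, which is the generic case for large $k$ (Lemma~\ref{lm:k_to_inf_approx}, Remark~\ref{rmk:k_to_inf_approx}), so exactly the unique $P_1$ packet is decoded, matching the $n$-level analog of Lemma~\ref{lm:count_capture}. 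For the inductive step, assume the claim for all indices below $i$. By the inductive hypothesis the SIC iterations clear every level-$j$ packet with $j<i$ precisely when $n_j\le 1$ for all such $j$; otherwise the smallest doubly-occupied lower level is never resolved (again because $\beta>1$), which leaves it and everything above it, including all $P_i$ packets, permanently blocked. When all lower levels are cleared, the $P_i$ packet is in exactly the situation of the $i=1$ analysis relative to the levels $>i$ (using $P_i\ge k\beta P_{i+1}$), hence decodable iff $n_i=1$. This proves the claim.

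Given the claim, the expected number of level-$i$ packets decoded in a fixed slot is $\Pr[\,n_i=1,\ n_j\le 1\ \forall j<i\,]$, since in that event exactly one $P_i$ packet is decoded and otherwise none. In the limit $N,M\to\infty$ with $g=N/M$, each of the $N$ users independently lands in the fixed slot with power $P_i$ with probability $\delta_i/M$, so $(n_1,\dots,n_n)$ is multinomial and converges to a vector of independent $\mathrm{Poisson}(g\delta_i)$ variables; hence $\Pr[n_i=1]=g\delta_i e^{-g\delta_i}$, $\Pr[n_j\le 1]=(1+g\delta_j)e^{-g\delta_j}$, and by independence $\Pr[\,n_i=1,\ n_j\le 1\ \forall j<i\,]=g\delta_i e^{-g\delta_i}\prod_{j=1}^{i-1}(1+g\delta_j)e^{-g\delta_j}$. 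Summing over $i$ and using that the throughput equals the expected number of packets decoded in one slot (by symmetry over the $M$ slots) gives $T_{\text{SA-nPC}}=\sum_{i=1}^n\big(\prod_{j=1}^{i-1}(1+g\delta_j)e^{-g\delta_j}\big)g\delta_i e^{-g\delta_i}$. Specializing to $n=2$ recovers Theorem~\ref{theorem:dpc_sa_thruput}, and to $n=1$ (so $\delta_1=1$) recovers $T_{\text{SA}}=ge^{-g}$.

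The hard part is the structural claim, and specifically making the interplay between the one-packet-per-level peeling and the large-$k$ truncation precise: one must verify that the rare slots with more than $k$ packets cannot inflate the count beyond the $\mathcal{O}(1/\textrm{poly}(k))$ error already controlled by Lemma~\ref{lm:k_to_inf_approx}, and that a doubly-occupied lower level really does block all higher levels (this is where $\beta>1$, hence ``at most one capture per iteration'', is essential, together with the absence of replicas that rules out any outside help). The probabilistic step — multinomial-to-independent-Poisson convergence — and the final summation are routine, and the complete argument would be deferred to Appendix~\ref{appendix:proof_SA} alongside the other proofs of this section.
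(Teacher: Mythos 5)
Your proposal is correct and follows essentially the same route as the paper's proof in Appendix~\ref{appendix:proof_SA}: decompose the throughput by power level, characterize decodability of a level-$i$ packet by the condition that its own level is singly occupied and every higher-power level has at most one packet (with lower-power occupancy unconstrained under the large-$k$ approximation), and evaluate via the Poisson limit. The only cosmetic difference is that you state this condition as a single iff proved by induction on the level index and compute $\Pr[n_i=1,\ n_j\le 1\ \forall j<i]$ as one product using independence, whereas the paper enumerates the $2^{i-1}$ combinations of $0$ or $1$ higher-power interferers and sums them — these are the same calculation.
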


Proof is similar to that of Theorem \ref{theorem:dpc_sa_thruput} and can be
found in Appendix \ref{appendix:proof_SA}.

\section{IRSA-PC}
\label{sec:irsa-pc}

In this section, we will formulate a theoretical analysis of the IRSA-PC protocol using the graphical model representation (Section
\ref{subsec:graphical_model}), where for ease of explanation, we only  consider that each node uses
only two power levels (which we denote as IRSA-Dual-Power-Control or IRSA-DPC). 
The analysis presented in this section can be generalized to
$n$ power levels, called the IRSA-nPC protocol, with details provided in  Appendix \ref{appendix:irsa-npc}.
Consider a system $\{M,\P, \Delta, \beta, \Lambda \}$ and a given load $g$, where we are interested in finding its capacity. We
begin by discussing the problem of computing the \emph{throughtput} $T
(g,M,\P,\Delta,\beta,\Lambda)$, which in itself is a non-trivial problem. In
particular, we are interested in computing the asymptotic throughput $T_{\infty}
(g,\P,\Delta,\beta,\Lambda)$ given in Definition \ref{def:asymptotic_throughput}.
\begin{definition}\label{def:pl}
Let $P_L(g,\P,\Delta,\beta,\Lambda)$
denote the per-slot packet loss probability, i.e. the probability that a user
packet is not decodable by the receiver in some slot at the end of the decoding
process. 
\end{definition}

From Definition \ref{def:asymptotic_throughput}, it follows that
\begin{align}
\label{def:asympototic_throughput_packet_loss_prob}
T_{\infty}(g,\P,\Delta,\beta,\Lambda) &= g\left(1 - P_{L}
(g,\P,\Delta,\beta,\Lambda) \right).
\end{align}
%Let $n$ denote the number of power levels used by any node, i.e., $n=|\P|$. The
%case of $n = 1$ (vanilla IRSA) has been analysed in \cite{liva}. In this
%section, we would focus only on the case of $n = 2$ (IRSA-DPC). 
Following the 
graphical model representation (Section
\ref{subsec:graphical_model}), during the SIC decoding process, let $p_i$ denote the
probability that an edge connected to a slot node is unknown at iteration $i$,
and $q_i$ denote the probability that an edge connected to a user node is unknown at
iteration $i$.
The DE generates a sequence of probabilities $q_1 \to p_1 \to q_2 \to p_2 \to
\dots \to p_\infty$ and the following lemma characterizes this
sequence.

\begin{lemma}[Density evolution recursion for IRSA-DPC]
    Consider an IRSA-DPC protocol \newline
    $\left\{ M, \P = \left\{ P_1, P_2 \right\},
    \Delta = \left\{ \delta, 1- \delta \right\}, \beta, \Lambda \right\}$, and $R$
    is the average repetition rate defined in Section \ref{sec:System-Model}.
    Let $P_1 \geq k \beta P_2$ for some integer $k$. If $k$ is large enough, for a 
constant load $g = N/M$ and $M \rightarrow \infty$, the sequences of probabilities $\left\{ p_i
\right\}_{i\geq 1}$ and $\left\{ q_i \right\}_{i\geq 1}$ evolve as follows,
\begin{align}
\label{eq:q_1}
\text{(Initial Condition)} \quad q_1 &= 1, \\
\label{eq:p_iteration}
\text{(IRSA-DPC Slot Node DE Update)} \quad p_i &\approx f_p(q_i) , \quad i\geq 1 \\
\label{eq:q_iteration}
\text{(IRSA-DPC User Node DE Update)} \quad q_{i+1} &= f_q(p_i), \quad i\geq 1
\end{align}
where, $f_q(p) = \lambda(p)$ and $f_p(q) = 1 - (1 - \delta)e^{-g\delta q}
- \delta e^{-g q \delta R} - \delta(1-\delta)g q Re^{-gqR}$.
\label{lm:p_irsa_2lvl}
\end{lemma}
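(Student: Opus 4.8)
The plan is to run the standard density-evolution (DE) argument for IRSA-type ensembles from \cite{liva}, and to isolate the single place where dual power control modifies it, namely the slot-node update. In the regime $M\to\infty$ with $g=N/M$ fixed, the depth-$d$ neighbourhood of a uniformly random edge of $\mathcal{G}$ converges, for each fixed $d$, to a locally tree-like ensemble in which a slot node has $\mathrm{Poisson}(gR)$ other incident edges (because $\rho(x)=\psi(x)=e^{-gR(1-x)}$), while the offspring of a user node follow the edge-perspective distribution $\lambda$; along distinct child edges the ``unknown'' SIC messages become asymptotically independent, and the per-replica powers are i.i.d.\ ($P_1$ with probability $\delta$, $P_2$ with probability $1-\delta$) and independent of the graph. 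Granting this standard structure, it suffices to establish the three recursions on the tree ensemble, where $p_i$ (resp.\ $q_i$) denotes the probability that a slot-to-user (resp.\ user-to-slot) message is an erasure at iteration $i$. Equation \eqref{eq:q_1} is immediate: before the first iteration nothing has been decoded, so every user-to-slot message is an erasure, giving $q_1=1$.

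For the user-node update \eqref{eq:q_iteration}, power control plays no role: decoding any one replica of user $u$ recovers the whole packet and hence all of $u$'s edges. Conditioning on $u$ having degree $l$ (edge-perspective probability $\lambda_l$), the edge $e=(u,s)$ is still an erasure from the user side at iteration $i+1$ exactly when all $l-1$ of $u$'s other slots returned erasures at iteration $i$, which has probability $p_i^{l-1}$ by independence. Averaging over $l$ gives $q_{i+1}=\sum_l\lambda_l p_i^{l-1}=\lambda(p_i)=f_q(p_i)$, exactly as in plain IRSA.

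The slot-node update \eqref{eq:p_iteration} is the crux. Fix the edge $e=(u,s)$, condition on its power level, and condition on the number of other edges at $s$, which is $\mathrm{Poisson}(gR)$; each such edge leads to an independently-unknown user (probability $q_i$) carrying an independent power. Thinning the Poisson, the number $A$ of still-unknown interferers at $s$ of power $P_1$ is $\mathrm{Poisson}(gR\delta q_i)$, the number $B$ of still-unknown interferers of power $P_2$ is $\mathrm{Poisson}(gR(1-\delta)q_i)$, and $A,B$ are independent (known interferers are cancelled and do not matter). Using $\beta>1$ and $P_1\ge k\beta P_2$, a short case analysis of the in-slot SIC/capture rule shows: if $e$ carries power $P_1$, then $u$ is decodable at $s$ exactly when $A=0$ and $B\le k$ (no competing high-power packet, and few enough low-power ones); if $e$ carries power $P_2$, then $u$ is decodable at $s$ exactly when $B=0$ and $A\le 1$, since a lone $P_1$-interferer is itself captured and cancelled first, whereas any surviving $P_2$-interferer keeps the SIR below $\beta$. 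Hence, using independence of $A$ and $B$,
\begin{align*}
p_i &= \delta\bigl(1-\Pr[A=0]\,\Pr[B\le k]\bigr)+(1-\delta)\bigl(1-\Pr[B=0]\,\Pr[A\le1]\bigr)\\
    &= \delta\bigl(1-e^{-gR\delta q_i}\,\Pr[B\le k]\bigr)+(1-\delta)\Bigl(1-(1+gR\delta q_i)\,e^{-gRq_i}\Bigr).
\end{align*}
By Lemma \ref{lm:k_to_inf_approx}, the probability that a slot carries more than $k$ packets of a given power decays like $1/\mathrm{poly}(k)$, so for $k$ large we may replace $\Pr[B\le k]$ by $1$; this replacement is precisely the source of the ``$\approx$'' in \eqref{eq:p_iteration}, and expanding the resulting expression yields $p_i\approx f_p(q_i)$.

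The step I expect to be the main obstacle is the first paragraph's ``granting this standard structure'': making rigorous the locally-tree-like convergence of $\mathcal{G}$, the asymptotic independence of the incoming SIC messages (which is what lets us treat $A$ and $B$ as independent thinned Poissons), and the claim that discarding the ``more than $k$ interferers'' events introduces an error that remains negligible through all iterations $i$. These are handled by the same graph-concentration (Wormald / martingale) and monotone fixed-point arguments that underpin the DE analysis of classical IRSA in \cite{liva}, which we import; the genuinely new content of the proof is the slot-node capture case analysis above.
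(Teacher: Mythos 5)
Your proposal is correct and follows essentially the same route as the paper's proof: the same slot-node case analysis (a high-power replica is decodable iff it is the unique unknown high-power packet in the slot with at most $k$ low-power interferers; a low-power replica is decodable iff it has no unknown low-power interferer and at most one high-power interferer, which is captured and cancelled first), the same large-$k$ truncation justified by Lemma \ref{lm:k_to_inf_approx}, and the same user-node update $q_{i+1}=\lambda(p_i)$. The only difference is cosmetic — you apply Poisson thinning to the slot degree directly, whereas the paper first writes the finite binomial sum over the coefficients $w_{l,t}$ and then passes to the Poisson limit via $\rho(x)=e^{-gR(1-x)}$ — and your final expression $1-\delta e^{-gR\delta q}-(1-\delta)e^{-gRq}-\delta(1-\delta)gRq\,e^{-gRq}$ agrees with the one derived in the paper's appendix (the exponent of the $(1-\delta)$ term in the lemma's displayed $f_p$ appears to be a typo for $-gqR$).
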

Proof of Lemma \ref{lm:p_irsa_2lvl} can be found in Appendix \ref{app:lemDE}.

Let $p_\infty$ denote the probability that an edge connected to a slot node is
unknown at the end of the SIC decoding process. Note that in all practical
settings, $p_{{\mathsf L}} \approx p_\infty$ for some large but finite number of iterations ${\mathsf L}$.
Let at the start of the SIC decoding process, a particular user node
has $l$ edges which are all unknown. The packet corresponding to that user is
non-decodable if and only if at the end of the SIC
decoding process, the user still has $l$ edges which are all unknown. 
Hence the probability that the packet of user with degree $l$ is not decoded is
the probability that all the $l$ edges are unknown and due to the implicit
assumption of independence made in the DE analysis, the probability of packet
(of the user with $l$ edges) not being decoded is $\prod_
{i= 1}^l p_\infty = p_\infty^l$. Averaging over the distribution
of edges of the user node
$\{\Lambda_l\}$, the average packet loss probability (Definition \eqref{def:pl}) is 
\begin{align}
P_{L}(g,\P,\Delta,\beta,\Lambda) = \sum_{l}\Lambda_l p_\infty^l = \Lambda
(p_\infty).
\label{eq:packet_loss_prob}
\end{align}
Therefore, we can rewrite \eqref{def:asympototic_throughput_packet_loss_prob} as
\begin{align}
T_{\infty}(g,\P,\Delta,\beta,\Lambda) &= g\left(1 - \Lambda(p_\infty)\right).
\label{eq:asymptotic_throughput}
\end{align}

From \eqref{eq:asymptotic_throughput} and Lemma \ref{lm:p_irsa_2lvl}, we
infer that 
given the load $g$ and large system $\{M,\P,\Delta,\beta,\Lambda\}$ with $M \to
\infty$ and $\P = \{P_1,P_2\}$ such that $P_1 \geq k\beta P_2$, we can
compute $T_{\infty}(g,\P,\Delta,\beta,\Lambda)$. For example, the curve \emph{IRSA-DPC DE}
in Fig. \ref{subfig:2lvlcompare} plots $T_{\infty}(g,\P,\Delta,\beta,\Lambda)$
as a function of the load $g$ for the case of $\P = \{10,1\}, \Delta = 
\{0.4,0.6\},\beta = 2, \Lambda(x) = 0.5x^2 + 0.28x^3 + 0.22x^8$.

Given that we have an expression for the asymptotic throughput $T_\infty
(g,\P,\Delta,\beta,\Lambda)$ in \eqref{eq:asymptotic_throughput}, we can think
of optimizing $T_\infty(g,\P,\Delta,\beta,\Lambda)$ with respect to different
parameters like $g$ and $\Lambda$ to find the optimal asymptotic throughput. First let us
consider the case of
optimizing $T_\infty(g,\P,\Delta,\beta,\Lambda)$ for a fixed system $
\{M,\P,\Delta,\beta,\Lambda\}$ with the load $g$ as the optimization parameter, i.e. the asymptotic capacity $T_\infty^\star(g,\P,\Delta,\beta,\Lambda)$ 
(Definition \ref{def:asymptotic_capacity}).
Notice in Lemma \ref{lm:p_irsa_2lvl}, the relation between $p_i$ and $g$ is
non-convex and hence the objective function $T_\infty(g,\P,\Delta,\beta,\Lambda)= g(1 - \Lambda(p_\infty))$ is also
non-convex due to the composition of $\Lambda$ and $p_\infty$, and hence does
not admit any simple closed form expression.
A brute
force approach to maximize $T_\infty(g,\P,\Delta,\beta,\Lambda)$ can be used for this purpose, where the space of $g$ can be
discretized, and binary search like technique can be used to find
the optimal $g$. The finer the discretization, the finer the precision of the
approximation of $T_\infty^\star(\P,\Delta,\beta,\Lambda)$.

Next, we present a remark which will help us redefine the asymptotic capacity
$T_\infty^\star(\P,\Delta,\beta,\Lambda)$, and simplify the optimization of
asymptotic throughput for the case of IRSA-DPC.

\begin{remark}\label{rem:lossless}
We now make an important observation from our simulation results, but without proof. 
Notice that in Fig. \ref{subfig:2lvlcompare}, the capacity, as defined in Definition
\ref{def:asymptotic_capacity}, is achieved for a load $g$ such that $T_\infty^*
(\P,\Delta, \beta, \Lambda) =g$. After that point, there is a sudden drop in the throughput 
(one could call this a phase transition point). Therefore, we call the region $g
\leq T^\star_\infty$ as the \emph{lossless region}, and the region
$g > T^\star_\infty$ as the \emph{lossy region}.  We have noticed this
phenomenon in all our experiments with IRSA-PC, although a rigorous proof for
this remains elusive. However, we will use this fact in the further sections in
order to derive some theoretical results. Note that this phenomenon seems to be true only
for IRSA-PC and not for SA-DPC (as can be seen in Fig. 
\ref{subfig:2lvlcompare}). We believe that
this phenomenon occurs in schemes for which a DE analysis is accurate. DE
analysis is valid for IRSA-PC but not for SA-DPC.
\label{rm:max_throughput}
\end{remark}

\subsection{Finding asymptotic throughput $T_\infty^\star(\P,\Delta,\beta,\Lambda)$ for fixed $\Delta, \Lambda$}
As a consequence of Remark \ref{rm:max_throughput}, we can express the
asymptotic capacity $T_\infty^\star(\P,\Delta,\beta,\Lambda)$ as:
\begin{equation}
    T_\infty^\star(\mathcal{P}, \Delta, \beta,  \Lambda)
         = \sup \{g \geq 0 \text{ }| \text{ }\liminf_{i \to \infty} p_i = 0\},
         \label{eq:capacity}
\end{equation}
where the sequence $\left\{ p_i \right\}$ is computed with the DE equations as described in Lemma
\ref{lm:p_irsa_2lvl} with $p_1 = f_p(1)$. The constraint $\liminf_{i \to \infty}
p_i = 0$ follows from the observation made in Remark \ref{rm:max_throughput}
that the asymptotic capacity of a system is achieved at the maximum load $g$
such that $P_L(g,\P,\Delta,\beta,\Lambda) = 0$ which is equivalent to $\Lambda
(p_\infty) = 0$ which in turn is equivalent to the constraint $\liminf_{i \to
\infty} p_i = 0$. A necessary and
sufficient condition for the constraint $\liminf_{i \to \infty} p_i = 0$ to be
true is that $q > f_{q}(f_{p}(q))$
for every $q \in (0,1]$ (this is clearly a
sufficient condition as it leads to a decreasing sequence $q_{i}$, it
also turns out to be a necessary condition\cite{1795974}), where $f_p,f_q$ are
specified in Lemma \ref{lm:p_irsa_2lvl}. Since the necessary and sufficient
condition needs to hold for each $q \in (0,1]$, it is impractical to verify
this and hence the simplest way to check if $\liminf_{i \to \infty} p_i = 0$ is
to perform the DE using Lemma \ref{lm:p_irsa_2lvl} for large enough
iterations. {\bf This is why having Lemma \ref{lm:p_irsa_2lvl} is useful because
without the characterization of the DE equations, we can not compute the
asymptotic throughput $T_\infty(g,P,\Delta,\beta,\Lambda)$}.
From our simulations, we found that $500$ iterations of DE are sufficient for
$p_i$ to converge to some $\epsilon$-neighbourhood $\mathcal{N}_{\epsilon}
(p_\infty),\epsilon =10^{-10}$ around $p_\infty$.

\subsection{Finding asymptotic throughput $T_\infty^\star(\P,\beta)$ optimized over $\Delta, \Lambda$}
Optimizing $T_\infty^\star(\P,\Delta,\beta,\Lambda)$ with respect to
$g,\Lambda,\Delta = \{\delta,1 - \delta\}$, using the brute force approach is
practically infeasible because of exponential complexity. Therefore, we make use of empirical black box non-convex solvers. We now
formally present the optimization problem which we feed into our black box
non-convex solver.
\begin{equation}
\begin{aligned}
    & \underset{g,\lambda_2, \dots, \lambda_{\text{deg}},\delta}{
    \text{maximize}}
\qquad  g \\
& \text{subject to} 
&  \lambda_i \geq 0 , \; i = 2, \ldots, \text{deg},  \sum \lambda_{i} = 1, \\
& & q > f_{q}(f_{p}(q)) , \; \forall q \in (0,1], 1 \geq \delta \geq 0,
% & & \delta_j \geq 0 , \; j = 1, 2 \\
% & & \sum \delta_{i} = 1, \\
\end{aligned}
\label{eq:irsa_pc_optimization}
\end{equation}
where $\{\lambda_i\}_{i = 2}^{\text{deg}}$ are parameters of the
edge-perspective degree distribution of the graph defined
in Section \ref{sec:System-Model}, and $f_{q}(\cdot)$, $f_{p}(\cdot)$ are functions parameterized by DE in
Lemma \ref{lm:p_irsa_2lvl}. Note here that if we fix $\Lambda$ and $\delta$,
$\{\lambda_i\}$ in turn gets fixed and the solver essentially solves for
$T_{\infty}^\star(\P,\Delta,\beta,\Lambda)$.

%We describe the actual method to solve \eqref{eq:irsa_pc_optimization} as follows. 
%[XXX: Give all the details to solve the problem numerically] 
We use the Differential Evolution algorithm to
solve \eqref{eq:irsa_pc_optimization} and use the \textit{MATLAB} code 
provided in \cite{diffevolution}. To run the
optimization program, we simply specify the optimization program as
stated in \eqref{eq:irsa_pc_optimization}.
Note that the last constraint
$q > f_{q}(f_{p}(q))$ needs to be applied for every point $q \in (0,1]$, which
leads to an uncountable
number of constraints. This is {\bf where Lemma \ref{lm:p_irsa_2lvl}} becomes useful, and we use the 
derived expressions for $f_{p}(q)$ and $f_{q}(p)$ in Lemma \ref{lm:p_irsa_2lvl}. To be specific, we 
pick a large number of points in the interval $ (0,1]$, and
apply the constraints $q > f_{q}(f_{p}(q))$ on only those points. 
Recall that the  maximum repetition
degree ``deg'' and the size of the set with powers available
$|\mathcal{P}|$ is the input to the optimization program
\eqref{eq:irsa_pc_optimization}, both of which are system parameters known
ahead of time, driven by constraints such as average power constraint etc.

Since the optimization problem \eqref{eq:irsa_pc_optimization} is
non-convex, the numerical solver's solution carries no guarantees with respect to the
global optimal capacity possible. Therefore, we supplement our results by deriving information theoretic throughput upper bounds next, and in section \ref{subsubsec:upper_bound_comparisons} we
show that the solution returned by the numerical solver to
\eqref{eq:irsa_pc_optimization} is close to the upper bounds for most of the
systems that we simulated.
	
\subsection{Upper bounds on the asymptotic throughput for IRSA-DPC}

% Before deriving the upper bounds, we will first derive a useful Lemma. The following lemma
% essentially says that the throughput of a IRSA-DPC system $\left\{ M , \P = \left\{ P_1, P_2 \right\}, \Delta
% = \left\{ \delta, 1- \delta \right\}, \beta, \Lambda \right\}$
% can be upper bounded using DE 
% analysis even if the ratio of its power $P_1/P_2$ is not high enough for the DE given in Lemma
% \ref{lm:p_irsa_2lvl} to be accurate. This Lemma helps us in deriving upper
% bounds on throughput for
% systems for which the the gap in power levels is not high engough for DE analysis in Lemma \ref{lm:p_irsa_2lvl} to
% be accurate by constructing a new system for
% which DE is accurate. Also, note that for the purpose of notational convenience,
% we will denote the asymptotic capacity $T_\infty(g,\P,\Delta,\beta,\Lambda)$ for
% a given system $\{M,\P,\Delta,\beta,\Lambda\}$ as $T_\infty$.

In this section, we will derive upper bounds on the throughput of IRSA system with 2 power levels using Lemma \ref{lm:p_irsa_2lvl} and Remark
\ref{rm:max_throughput}. Since the DE equations of Lemma \ref{lm:p_irsa_2lvl} are
only accurate for systems with high multiplicative-gap $P_1/P_2$ in power levels, in order for Lemma \ref{lm:p_irsa_2lvl} to  be applicable for a general 
system with arbitrary power levels, we need the following lemma (Proof in Appendix \ref{app:lem:monotone}) that states that larger the ratio,
$P_1/P_2$, higher the throughput.

\begin{lemma}\label{lem:monotone}
    Consider two  communication systems following IRSA-DPC protocol: System 1 given by the parameters $\left\{ M, \P^1, \Delta,
    \beta, \Lambda \right\}$ 
    and system 2 given by $\left\{ M, \P^2, \Delta, \beta, \Lambda \right\}$. Let 
    $\P^1 = \left\{ P^1_1, P^1_2 \right\}$ and $\P^2=\left\{ P^2_1, P^2_2 \right\}$ denote the power levels 
    in the two systems respectively, with $\Delta = \left\{ \delta, 1-\delta \right\}$ being the
    common power-choice distribution for $\P^1$ and $\P^2$. Let $k_1 = P^1_1/\beta
    P^1_2$ and $k_2 = P^2_1/\beta P^2_2$. If $k_1 \geq k_2 \geq
    1$, then for a 
    given load $g$, $\quad \
            T(g, M, \P^1, \delta, \Lambda ) \geq T(g, M, \P^2, \delta,
        \Lambda)$.
\end{lemma}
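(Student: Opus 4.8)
Since the statement concerns the finite-$M$ throughput $T(g,M,\cdot)$ rather than its asymptotic version, my plan is to avoid density evolution entirely and argue directly on sample paths via a coupling. Both systems share the same $M$, the same load $g$ (hence the same number of users $N=gM$), the same maximum repetition degree, the same degree distribution $\Lambda$, and the same power-choice distribution $\Delta=\{\delta,1-\delta\}$; only the numerical power levels differ. So I would couple them to run on one common random bipartite graph $\mathcal{G}=\{U,S,E\}$ (the slot choices of every replica) and one common assignment to each edge of a label in $\{\mathrm{hi},\mathrm{lo}\}$, with label $\mathrm{hi}$ chosen independently with probability $\delta$. In System $j$ the label $\mathrm{hi}$ means transmit power $P_1^j$ and $\mathrm{lo}$ means $P_2^j$. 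The crucial observation is that the capture rule depends on the power values only through $\lfloor k_j\rfloor$, $k_j=P_1^j/(\beta P_2^j)$: a $\mathrm{hi}$-edge in a slot whose other currently-undecoded edges are $t$ many $\mathrm{lo}$-edges is decodable iff $P_1^j/(tP_2^j)\ge\beta$, i.e.\ iff $t\le k_j$; the remaining cases (a slot with a unique undecoded edge, or with exactly one $\mathrm{hi}$ and one $\mathrm{lo}$ undecoded edge, in which the $\mathrm{hi}$ edge is captured and the $\mathrm{lo}$ edge then decoded) do not depend on $k_j$ at all.

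Next I would phrase one SIC step as a monotone set operator. For $K\subseteq E$ (the already-known edges), let $\Phi^j(K)$ be the union of $K$, of all edges that become decodable in some slot when exactly $E\setminus K$ is undecoded (using the System-$j$ capture rule above), and of all edges incident to a user having at least one edge in $K$ (inter-slot cancellation). The terminal known-edge set of System $j$ under the coupling is the least fixed point $K^j_\infty=\bigcup_{t\ge 0}(\Phi^j)^t(\emptyset)$; in particular this shows the decoded set is independent of the order in which slots are processed. I would then record two facts. First, each $\Phi^j$ is monotone, $K\subseteq K'\Rightarrow\Phi^j(K)\subseteq\Phi^j(K')$: shrinking the undecoded set in a slot can only create, never destroy, decoding opportunities, and the other two components of $\Phi^j$ are visibly monotone. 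Second, cross-system domination $\Phi^1(K)\supseteq\Phi^2(K)$ for every $K$: since $\lfloor k_1\rfloor\ge\lfloor k_2\rfloor\ge 1$, any in-slot configuration that permits an edge to be peeled under threshold $k_2$ also permits it under $k_1$, and the inter-slot component of $\Phi^j$ is the same for both systems. Combining the two facts by induction on $t$ gives $(\Phi^1)^t(\emptyset)\supseteq(\Phi^2)^t(\emptyset)$ for all $t$, hence $K^1_\infty\supseteq K^2_\infty$ on every sample path of the coupling.

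To finish, I would translate this back to throughput. A user is successfully decoded precisely when at least one of its edges is known at termination (the inter-slot rule then forces all of them known), so $K^1_\infty\supseteq K^2_\infty$ implies the set of decoded users in System $1$ contains that of System $2$ pointwise under the coupling. Dividing by $M$ and taking expectation over the common randomness yields $T(g,M,\mathcal{P}^1,\delta,\Lambda)\ge T(g,M,\mathcal{P}^2,\delta,\Lambda)$, as claimed.

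The step that needs the most care is the cross-system domination, and specifically the short case analysis hidden inside the interaction of monotonicity and domination: when one passes from System 2's undecoded set in a given slot to System 1's, the latter may be a \emph{strict} subset (System 1 has decoded more), so one must verify that every edge System 2 peels in that slot is still peeled, or already known, in System 1. Concretely, a ``$\mathrm{hi}$ together with $t\le k_2$ undecoded $\mathrm{lo}$ edges'' configuration in System 2 either collapses to a lone edge in System 1, or remains a ``$\mathrm{hi}$ together with $t'\le t\le k_2\le k_1$ undecoded $\mathrm{lo}$ edges'' configuration — decodable either way — and the $t\in\{0,1\}$ cases are handled identically. I would spell these few cases out once; the remainder is bookkeeping.
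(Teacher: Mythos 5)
Your proposal is correct and is essentially the paper's own argument: couple the two systems on a common bipartite graph with common hi/lo labels, show by induction over SIC iterations that every packet decoded in System 2 is decoded (no later) in System 1 because $k_1\ge k_2$ only enlarges the set of capturable configurations, and then average over the common randomness. Your packaging of the induction as domination of iterates of a monotone peeling operator $\Phi^j$ is a cleaner way to organize the same case analysis (monotonicity absorbs the ``System 1 may already have decoded strictly more'' cases that the paper handles by hand), but it is not a different route.
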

%Proof is pro
%\begin{proof}[Proof Sketch] Consider that a packet-replica $p$ was captured in
%    some slot in some iteration of SIC in system 2. This means that $p$ was with power $P^2_1$ and had
%    at most $k_2$ interferers of $P^2_2$ at that iteration. If the packet $p$ were in system 1 with
%    the source of randomness being the same, then it would have power $P^1_1$ and it would have
%    $k_2$ interferers with power $P^1_2$. Since $k_1 \geq k_2$, packet $p$ would be captured in system
%    1 as well. This gives us the required throughput inequality. The rigorous proof follows by using
%    the previous argument along with an induction argument, and it can be found
%    in the technical report [].
%\end{proof}

Using Lemma \ref{lem:monotone}, we next derive an upper bound on the throughput as a function of  the average repetition rate $R$ of the
repetition scheme.
For the rest of this section, by IRSA-DPC system, we mean a system $\left\{ M, \P = \left\{ P_1, P_2 \right\}, \Delta =
    \left\{\delta, 1-\delta\right\}, \beta, \Lambda \right\}$.

\begin{theorem}[Upper Bound 1]
    Consider an IRSA-DPC system, where $R$ is the average repetition
    rate, $R = \sum_l l\Lambda_l$. With load $g$, if an asymptotic throughput
    $T_\infty$ is achievable on this system, then the following relation must be
    satisfied:
\begin{equation}
    \frac{(\delta^{2}-2)}{RT_\infty} + e^{-RT_\infty} \left( \frac{(1-\delta^2)}{RT_\infty} + \delta(1-\delta)\right) 
    + \frac{e^{-RT_\infty\delta}}{RT_\infty } + \frac{1}{R} \leq 0.
    \label{eq:upperbound_1_temp}
\end{equation}
\label{th:upperbound_1_temp}
\end{theorem}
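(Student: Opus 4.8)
The plan is to combine three ingredients already in hand — the density-evolution recursion of Lemma~\ref{lm:p_irsa_2lvl}, the monotonicity Lemma~\ref{lem:monotone}, and the lossless-region observation of Remark~\ref{rm:max_throughput} — and then to run a classical area / EXIT-chart argument on the DE curves.

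First I would reduce to the regime where Lemma~\ref{lm:p_irsa_2lvl} holds exactly. By Lemma~\ref{lem:monotone}, among all IRSA-DPC systems sharing the same $\Delta,\beta,\Lambda$ (hence the same $R$), the throughput is nondecreasing in the ratio $k=P_1/\beta P_2$, so any $T_\infty$ achievable for the given system is also achievable in the limit $k\to\infty$, where Lemma~\ref{lm:p_irsa_2lvl} gives the exact DE with $f_q(p)=\lambda(p)$ and $f_p(q)=1-(1-\delta)e^{-gqR}-\delta e^{-g\delta qR}-\delta(1-\delta)gqRe^{-gqR}$. Hence it suffices to upper-bound the throughput of this limiting system. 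Next I would pass to the lossless operating load: since $T_\infty$ is achievable we have $T_\infty\le T_\infty^\star$, and by Remark~\ref{rm:max_throughput} the interval $[0,T_\infty^\star]$ is exactly the lossless region, so at load $g'=T_\infty$ the system is lossless, i.e.\ $P_L=\Lambda(p_\infty)=0$, i.e.\ $\liminf_i p_i=0$. By the necessary condition recorded below \eqref{eq:capacity} (from \cite{1795974}), this forces $f_q(f_p(q))<q$ for all $q\in(0,1]$, that is $\lambda(f_p(q))<q$ on $(0,1]$.

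The main computation is then to integrate this pointwise inequality. Note $\lambda$ is strictly increasing on $[0,1]$ with $\lambda(0)=0$ (degree-one user nodes must be absent in a lossless system, since otherwise $q_i$ is bounded away from $0$), so it has a strictly increasing inverse $\lambda^{-1}:[0,1]\to[0,1]$ and $\lambda(f_p(q))<q$ yields $f_p(q)<\lambda^{-1}(q)$ on $(0,1]$ — without needing any monotonicity of $f_p$ itself. Integrating over $q\in[0,1]$ and using the elementary area identity $\int_0^1\lambda^{-1}(q)\,dq=1-\int_0^1\lambda(x)\,dx$ together with $\int_0^1\lambda(x)\,dx=\sum_l\lambda_l/l=\sum_l\Lambda_l/R=1/R$, I get $\int_0^1 f_p(q)\,dq\le 1-\tfrac1R$. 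It then remains to evaluate $\int_0^1 f_p(q)\,dq$ using $\int_0^1 e^{-cq}\,dq=(1-e^{-c})/c$ and $\int_0^1 qe^{-cq}\,dq=(1-e^{-c}(1+c))/c^2$ with $c=gR=RT_\infty$ (the substitution $g'=T_\infty$ from the previous step); writing $A=RT_\infty$, one finds $\int_0^1(1-f_p(q))\,dq=\tfrac1A\big[(2-\delta^2)-e^{-A}\big((1-\delta^2)+\delta(1-\delta)A\big)-e^{-\delta A}\big]$, and $\int_0^1 f_p(q)\,dq\le 1-\tfrac1R$ rearranges term by term into exactly \eqref{eq:upperbound_1_temp}.

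The genuinely delicate steps are the two reductions, not the calculus: the use of Lemma~\ref{lem:monotone} must keep $\Delta,\Lambda,\beta$ (hence $R$ and the target inequality) fixed while only sending $k\to\infty$, and the replacement of the operating load by $T_\infty$ rests on the unproven phase-transition claim Remark~\ref{rm:max_throughput}, so the bound is only as rigorous as that remark. A minor technical point is the applicability of the $\lambda^{-1}$/area argument when $\lambda$ is not a bijection of $[0,1]$ (i.e.\ $\Lambda_1>0$), but in that case $\liminf_i p_i=0$ already fails, so the lossless hypothesis is vacuous and there is nothing to prove.
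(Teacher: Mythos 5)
Your proposal is correct and follows essentially the same route as the paper: reduce to the large-$k$ regime via Lemma~\ref{lem:monotone}, invoke Remark~\ref{rm:max_throughput} to work at the lossless load $g=T_\infty$, and turn the DE stability condition into the area inequality $A_p+A_q\le 1$ on the EXIT chart, with the same evaluation of $\int_0^1 f_p(q)\,dq$. The only difference is cosmetic: you derive the area inequality explicitly by integrating $f_p(q)<\lambda^{-1}(q)$ and using $\int_0^1\lambda^{-1}=1-\int_0^1\lambda$, whereas the paper asserts the non-intersection-implies-area-bound step geometrically.
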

\vspace{-0.4in}
\begin{proof}
      First, as a consequence of Lemma \ref{lem:monotone}, we can upper bound the throughput on
    the system $\left\{ M, \left\{ P_1, P_2 \right\}, \Delta, \beta, \Lambda \right\}$ by the throughput on another
    system $\left\{ M, \left\{ P'_1, P'_2 \right\}, \Delta , \beta, \Lambda \right\}$ with the
    power levels $P'_1$ and $P'_2$ being such that $P'_1/\beta P'_2$ is large enough for the DE equations in
    Lemma \ref{lm:p_irsa_2lvl} to be accurate.
So, for the rest of the proof, the analysis will be on the system $\left\{ M, \left\{ P'_1, P'_2
\right\}, \Delta , \beta, \Lambda \right\}$. The proof uses the DE equations: $q = f_{q}(p)$ and $p =
f_{p}(q)$. The DE can be visualized on an EXIT chart \cite{771431,957394} as shown in
Fig. \ref{fig:exit_chart}. 
%EXIT charts is a popular model used to represent recursive computations
%in the iterative error correcting codes literature .

Define the area between $f_{q}(p)$ as defined in \eqref{eq:q_iteration} and the
x-axis as $A_{q}$, and the area between $f_{p}(q)$ as defined in \eqref{eq:p_iteration}
and the y-axis as $A_{p}$: 
    $A_{q} = \int_{0}^{1} f_{q}(p)dp,\  \text{and} \ A_{p} = \int_{0}^{1} f_{p}(q)dq$.
To upper bound the asymptotic throughput $T_\infty(g,\P,\Delta,\beta, \Lambda)$ for a particular load $g$, it is
sufficient to upper bound the asymptotic capacity $T_\infty^{*}(\P,\Delta,\beta, \Lambda)$. 
As postulated in Remark \ref{rm:max_throughput}, the capacity
is achieved when the system is in the lossless region. In terms of the DE equations, the system is
in the lossless region implies that $p_i,q_i \rightarrow 0$ as $i \rightarrow \infty$. A necessary and sufficient 
condition for $p_i,q_i \rightarrow 0$ is that the two curves $f_{p}, f_{q}$ in the
EXIT chart in Fig. \ref{fig:exit_chart} are non-intersecting\cite{csa_paper}. Clearly, a
necessary condition for the two curves to be non-intersecting is that the area
covered by the two curves is not greater than 1: $A_{p} + A_{q} \leq 1$. The respective 
areas can be computed as:
\begin{equation*}
    A_{q} = \int_{0}^{1} \lambda(p)dp = \frac{1}{\Lambda'(1)} = \frac{1}{R}, \\
 \end{equation*}
    \begin{equation*}
    A_{p} = \int_{0}^{1} f_{p}(q)dq = \frac{\delta^{2}-2}{RT} + e^{-RT} \left(
    \frac{1-\delta^2}{RT} + \delta(1-\delta) \right) + \frac{e^{-RT\delta}}{RT} + 1.
\end{equation*}
Substituting this in $A_{p}+A_{q} \leq 1$ gives us the result.
%    We will provide a proof sketch here, and the full proof can be found in the technical report
%    \cite{}. As a consequence of Lemma \ref{lem:monotone}, it is sufficient to consider a
%    system where the powers are such that $P_1/\beta\P_2$ is large enough for the DE equations in
%    Lemma \ref{lm:p_irsa_2lvl} to be accurate. The proof uses the DE equations: $q = f_{q}(p)$ and $p =
%    f_{p}(q)$. From Remark \ref{rm:max_throughput}, we know that the maximum throughput (i.e.,
%    capacity) is achieved in the lossless region (Remark \ref{rem:lossless}). Using EXIT chart analysis\cite{771431,957394}, a
%    necessary condition for the system to be in the lossless region is $\int_{0}^{1} f_{q}(p)dp +
%    \int_{0}^{1} f_{p}(q)dq \leq 1$ (details in the full proof in the technical report). 
%    Evaluating this inequality using equation
%    \eqref{eq:q_iteration} and \eqref{eq:p_iteration} gives the result.
\end{proof}

Theorem \ref{th:upperbound_1_temp} gives an upper bound for all repetition
distributions $\Lambda(x)$ such that their rate is $R$. In the next result, we derive an
upper bound for all $\Lambda(x)$'s, irrespective of their rates.

\begin{figure}[h]
    \centering
    \includegraphics[scale=0.6]{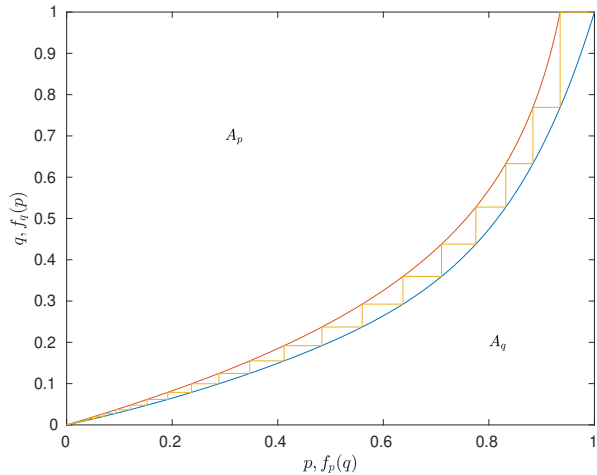}
    \caption{Exit Chart depiction of DE. $f_{q}(p)$ is plotted on the y-axis as
    a function of $p$ in the x-axis. $f_{p}(q)$ is plotted on the x-axis as
a function of $q$ in the y-axis. The DE recursion starts at the point (1,1),
and then proceeds by projections onto the $f_{q}(p)$ curve and $f_{p}(q)$ curve
sequentially. The point eventually converges to $(p_\infty, q_\infty)$. So, in 
the lossless region, the DE converges to (0,0) on the chart.
In the lossy region, the DE does not converge to (0,0).}
    \label{fig:exit_chart}
\end{figure}

\begin{corollary}[Rate Independent Upper Bound]\label{cor:riub}
%    Consider a system with two power levels
%$P_1, P_2$ satisfying $P_1 > k\beta P_2$, and the probability of packet-power being $P_1$ be $\delta$.
%Then, the asymptotic throughput $T_\infty$ for any load $g$ of any IRSA-DPC scheme on this system satisfies
%$T_\infty \leq (2 - \delta^2)$.
    For an IRSA-DPC system, the asymptotic throughput
$T(g, M, \P , \Delta, \beta, \Lambda)$ for any
load $g$ satsifies $T_\infty \leq 2 -
\delta^2$.
%\begin{equation*}
%    T_\infty \leq 2 - \delta^2
%\end{equation*}
\label{cor:rate_independent_ub}
\end{corollary}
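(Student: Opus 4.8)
The plan is to obtain Corollary \ref{cor:riub} as a direct, rate-free consequence of the necessary condition \eqref{eq:upperbound_1_temp} in Theorem \ref{th:upperbound_1_temp}, by a short algebraic rearrangement that eliminates the dependence on $R$.

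First I would dispose of the degenerate case $T_\infty=0$: since $\delta\in[0,1]$ we have $2-\delta^2\ge 1>0$, so the claimed inequality holds trivially. Assume henceforth $T_\infty>0$, and recall $R\ge 1>0$ for any IRSA repetition distribution. I would then multiply \eqref{eq:upperbound_1_temp} through by the positive quantity $RT_\infty$. The term $(\delta^2-2)/(RT_\infty)$ becomes $\delta^2-2$, the term $1/R$ becomes $T_\infty$, and after grouping the two exponential contributions this yields
\begin{equation*}
T_\infty \;\le\; (2-\delta^2) \;-\; e^{-RT_\infty}\!\left((1-\delta^2) + RT_\infty\,\delta(1-\delta)\right) \;-\; e^{-RT_\infty\delta}.
\end{equation*}

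Next I would argue that the two subtracted terms are nonnegative. Because $\delta\in[0,1]$ we have $1-\delta^2\ge 0$ and $\delta(1-\delta)\ge 0$, and $RT_\infty\ge 0$, so the bracketed factor is $\ge 0$; together with $e^{-RT_\infty}>0$ and $e^{-RT_\infty\delta}>0$ this shows both subtracted terms are $\ge 0$ (indeed the second is strictly positive). Discarding them can only increase the right-hand side, so $T_\infty\le 2-\delta^2$, which is the claim. One could equivalently derive this by letting $R\to\infty$ in \eqref{eq:upperbound_1_temp} --- the exponential terms, including $RT_\infty\,e^{-RT_\infty}$, vanish and one is left with $\delta^2-2+T_\infty\le 0$ --- but the direct bound above is cleaner and even shows the inequality is strict whenever $T_\infty>0$.

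There is no real obstacle here: all the substantive work is already contained in Theorem \ref{th:upperbound_1_temp} (in particular its EXIT-chart area argument via Lemma \ref{lm:p_irsa_2lvl} and Remark \ref{rm:max_throughput}), and the only points to verify in the corollary are that $T_\infty>0$ legitimises multiplying by $RT_\infty$ and that the signs of the discarded terms are pinned down by $\delta\in[0,1]$; both are immediate.
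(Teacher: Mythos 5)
Your proposal is correct and follows essentially the same route as the paper: both arguments drop the two nonnegative exponential terms from \eqref{eq:upperbound_1_temp} and clear the positive factor $RT_\infty$ to arrive at $T_\infty \le 2-\delta^2$, differing only in the order of these two steps. The extra observations (handling $T_\infty=0$, strictness, and the $R\to\infty$ limit) are fine but not needed.
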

\begin{proof}
The second and third term in the LHS of \eqref{eq:upperbound_1_temp} are
non-negative. Therefore, a necessary condition for
\eqref{eq:upperbound_1_temp} to be satisfied is: $\frac{(\delta^{2}-2)}{RT} +
 \frac{1}{R} \leq 0$, which gives us our result for all $R > 0$.
\end{proof}

\begin{corollary}[Upper Bound 2]
Let an asymptotic throughput of $T_\infty$ be achievable for a IRSA-DPC system. Let $l(p)$ be a line
which is tangent to the curve $f^{-1}_{p}(p)$ that passes through the point
$ (1,1)$. Let the point of contact of $l(q)$ and $f^{-1}_{p}(p)$ be
$ (p_{c},q_{c})$. Define $A_{min}$ as the area between the line $l(p)$ and the
curve $f^{-1}_{p}(p)$ from the range $ (p_c,q_c)$ to $ (1,1)$. Then, we have
$A_{p} + A_{q} + A_{min} \leq 1$.
\label{cor:upper_bound_3}
\end{corollary}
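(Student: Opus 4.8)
The plan is to sharpen the area argument of Theorem~\ref{th:upperbound_1_temp} by producing a further region of the unit square $[0,1]^2$ that must stay uncovered whenever the decoder succeeds. As there, I would first invoke Lemma~\ref{lem:monotone} to replace the given system by one (same $\delta$, same $\Lambda$) whose power ratio is large enough that the density-evolution recursion of Lemma~\ref{lm:p_irsa_2lvl} is accurate, and Remark~\ref{rm:max_throughput} to argue that achievability of $T_\infty$, at the lossless boundary where the load equals $T_\infty$, forces the two EXIT curves of Fig.~\ref{fig:exit_chart} to be non-intersecting inside $[0,1]^2$. Writing $C_q$ for the graph $q=f_q(p)$ and $C_p$ for the graph $p=f_p(q)$, this non-intersection says $f_p(q)\leq f_q^{-1}(q)$ for all $q\in(0,1]$, i.e.\ $C_p$ lies weakly to the left of $C_q$, so that the staircase started at $(1,1)$ descends to the origin through the open tunnel between the two curves.

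The core of the proof is then to display three pairwise (essentially) disjoint subsets of $[0,1]^2$. Let $R_q=\{(p,q):q\leq f_q(p)\}$ (area $A_q$) and $R_p=\{(p,q):p\leq f_p(q)\}$ (area $A_p$); that $R_p\cap R_q$ is null is precisely the observation already used in Theorem~\ref{th:upperbound_1_temp}. The third set is the wedge $W$ bounded on the left by $C_p$ and on the right by the line $l$ that is tangent to $C_p$ and passes through $(1,1)$, taken over the band of heights $q\in[q_c,1]$ between the contact point $(p_c,q_c)$ and $(1,1)$; its area is exactly the $A_{min}$ in the statement (and $A_{min}\geq 0$, so the bound strengthens $A_p+A_q\leq 1$). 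I would check that $W\cap R_p$ is null because $W$ lies to the right of $C_p$ — the tangent to the convex curve $q=f_p^{-1}(p)$ lies below it, hence $l$ lies to the right of $C_p$ in the $p$-direction — and that $W\cap R_q$ is null as follows: by the tunnel condition $C_p$ is contained in the convex set $\{(p,q):q\geq f_q(p)\}$ (the epigraph of the convex function $f_q=\lambda$), and $(1,1)$ lies on its boundary since $\lambda(1)=1$; hence the segment of $l$ from $(p_c,q_c)$ to $(1,1)$ stays in this convex set, so at every height $q\in[q_c,1]$ the point of $l$ has $p\leq f_q^{-1}(q)$, which forces every point of $W$ to satisfy $q\geq f_q(p)$, i.e.\ to lie outside the interior of $R_q$. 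Summing areas inside the unit square yields $A_p+A_q+A_{min}\leq 1$.

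The step I expect to be the main obstacle is justifying the convexity/concavity facts that make this geometry legitimate. That $f_q=\lambda$ is convex and increasing on $[0,1]$ with $\lambda(1)=1$ is immediate from its being a polynomial with non-negative coefficients. The delicate part is that $f_p$ should be concave on $[0,1]$: this makes $R_p$ a convex set and $f_p^{-1}$ a convex function, and since $f_p(1)<1$ the corner $(1,1)$ lies outside $R_p$, so a supporting line of $R_p$ through $(1,1)$ — which is the tangent line $l$ to $C_p$, with interior contact point $(p_c,q_c)$ — exists. Differentiating the expression for $f_p$ in Lemma~\ref{lm:p_irsa_2lvl} shows $f_p''\leq 0$ on $[0,1]$ whenever $\delta\leq\tfrac12$ (and more generally when the load--rate product $gR$ is not too small), which covers the operating regimes of interest; when concavity fails one can still use any supporting line of the hypograph of $f_p$ through $(1,1)$, at the cost of a possibly smaller $A_{min}$. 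With this in hand, the remaining area bookkeeping is routine.
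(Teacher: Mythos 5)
Your proof is correct and follows essentially the same route as the paper's: the non-intersection of the two EXIT curves together with the convexity of $f_q=\lambda$ confines the wedge between $l$ and $f_p^{-1}$ to the complement of both the $A_p$- and $A_q$-regions, and summing areas inside the unit square gives $A_p+A_q+A_{min}\leq 1$. The only substantive difference is that you make explicit the concavity of $f_p$ (equivalently, convexity of $f_p^{-1}$) needed for the tangent line through $(1,1)$ to lie below the curve and for $A_{min}$ to avoid the $A_p$-region — a point the paper leaves implicit in the corollary's hypothesis; in fact a direct computation shows $f_p''\leq 0$ on $[0,1]$ for every $\delta\in[0,1]$ and every $gR>0$, so your restriction to $\delta\leq\tfrac{1}{2}$ and the fallback to a supporting line are not needed.
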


\begin{proof}
    As mentioned in the proof of Theorem \ref{th:upperbound_1_temp}, a necessary and sufficient condition for the density evolution of $p,q$ to converge to $0$ is that the two curves $f_{q}(p)$ and $f^{-1}_{p}(q)$ do not intersect. This implies that the curve $f_{q}(p)$ lies below the curve $f^{-1}_{p}(q)$. Therefore, $f_{q}(p_{c}) \leq q_{c}$. Also, observe that, $f_{q}(p) = \lambda(x)$, is a convex function (since, $\lambda(x)$ is a polynomial with non-negative coefficients). Hence, $f_{q}(p)$ lies below the line $l(x)$ (which is above the line connecting two points of $f_{q}(p)$). We have shown that the curve $f_{q}(p)$ cannot enter the area between the line $l(x)$ and $f^{-1}_{p}(x)$, which gives us our result.
\end{proof}

\begin{theorem}[Upper Bound 3]
    For an IRSA-DPC system, let the probability of a packet being repeated twice, $\Lambda_{2}$, be fixed. If an asymptotic
 throughput $T_\infty$ is achievable for some load $g$ on this system, then it
 satisfies $T_\infty < \min\big\{ 2-\delta^2, \frac{1}{2 \left(1 + 2\delta^2 -
 2\delta \right) \Lambda_2 } \big\}$.
%\begin{equation}
%    T_\infty < \min\Big\{ 2-\delta^2, \frac{1}{2 \left(1 + 2\delta^2 - 2\delta
%    \right) \Lambda_2 } \Big\}.
%\label{eq:upper_bound_2}
%\end{equation}
\label{th:upper_bound_2}
\end{theorem}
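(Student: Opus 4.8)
The bound $T_\infty < 2-\delta^2$ is exactly Corollary \ref{cor:rate_independent_ub}, so the whole task is to prove the second half of the minimum, namely $T_\infty < \frac{1}{2(1+2\delta^2-2\delta)\Lambda_2}$; taking the smaller of the two then gives the statement. The plan is to extract this second bound from the density-evolution characterization of the lossless region, exactly the ingredients already assembled for Theorem \ref{th:upperbound_1_temp}. As there, I would first invoke Lemma \ref{lem:monotone}: since throughput is monotone nondecreasing in the power ratio $P_1/\beta P_2$, the supremum of $T_\infty$ over all systems sharing a given $\Delta,\Lambda$ is attained in the large-ratio limit, where the DE recursion of Lemma \ref{lm:p_irsa_2lvl} is the exact description; hence it suffices to upper bound $T_\infty$ in that regime, where $f_q(p)=\lambda(p)$ and $f_p$ is the explicit function of Lemma \ref{lm:p_irsa_2lvl}.

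The core step is a linearization (local stability) analysis of the combined DE map $q\mapsto f_q(f_p(q))$ at the origin. By Remark \ref{rm:max_throughput}, the asymptotic capacity $T_\infty^\star$ equals the phase-transition load $g^\star$, and on the lossless side $T_\infty=g$; since the throughput drops sharply past $g^\star$, every achievable $T_\infty$ satisfies $T_\infty\le g^\star$. Being in the lossless region is, by the necessary-and-sufficient criterion quoted in Section \ref{sec:irsa-pc} (citing \cite{1795974}), equivalent to $q> f_q(f_p(q))$ for all $q\in(0,1]$; by continuity this holds with $\ge$ at $g=g^\star$. Dividing by $q$ and letting $q\downarrow 0$ — using $f_p(0)=0$ and (since $\Lambda_1=0$ is forced for the lossless region to be nonempty) $f_q(0)=0$ — yields the necessary condition $f_q'(0)\,f_p'(0)\le 1$. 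Now $f_q'(0)=\lambda'(0)=\lambda_2=2\Lambda_2/R$ from the edge-/node-perspective relations of Section \ref{sec:System-Model}, while a short computation (differentiating the three exponential terms of the slot-node update of Lemma \ref{lm:p_irsa_2lvl} at $q=0$) gives $f_p'(0)=gR\big((1-\delta)^2+\delta^2\big)=gR(1+2\delta^2-2\delta)$. The factor $R$ cancels in the product, leaving $2\Lambda_2\,g\,(1+2\delta^2-2\delta)\le 1$, i.e. $g^\star\le \frac{1}{2\Lambda_2(1+2\delta^2-2\delta)}$.

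Putting it together: any achievable $T_\infty$ obeys $T_\infty\le g^\star\le \frac{1}{2\Lambda_2(1+2\delta^2-2\delta)}$ and also $T_\infty\le 2-\delta^2$ by Corollary \ref{cor:rate_independent_ub}, which is the claim. The strictness of the first bound follows because the lossless condition is an open (strict) condition, so $g^\star$ is an unattained supremum, and by the sharp drop past the phase transition (Remark \ref{rm:max_throughput}) no load actually realizes throughput $g^\star$. The step I expect to be the main obstacle is making rigorous that the linearized inequality $f_q'(0)f_p'(0)\le 1$ is genuinely \emph{necessary} — this is where one must lean on the quoted necessary-and-sufficient DE criterion rather than re-deriving it — together with pinning down $f_p'(0)$ exactly (the $(1-\delta)^2+\delta^2$ combination) and cleanly handling the $\Lambda_1=0$ caveat; the remaining algebra, in particular the cancellation of $R$, is routine. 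One could instead route through Corollary \ref{cor:upper_bound_3} by lower-bounding $A_{\min}$ via the slope of $f_p^{-1}$ at the origin, but that reproduces the same inequality far less transparently.
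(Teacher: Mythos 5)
Your proposal is correct and takes essentially the same route as the paper's proof: the $2-\delta^{2}$ term is quoted from Corollary \ref{cor:rate_independent_ub}, and the second term is obtained by reducing to the large power-ratio regime via Lemma \ref{lem:monotone} and then linearizing the composed DE map $q\mapsto f_{q}(f_{p}(q))$ at $q=0$, where $f_{q}'(0)=\lambda_{2}=2\Lambda_{2}/R$ and $f_{p}'(0)=gR\bigl(1-2\delta+2\delta^{2}\bigr)$ cancel the factor $R$ exactly as in the paper. Your additional care about $\Lambda_{1}=0$ and the strictness of the inequality is a minor refinement beyond what the paper records, not a different argument.
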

\begin{proof}
Corollary \ref{cor:rate_independent_ub} includes communication schemes with arbitrary rates, which
clearly includes schemes with a particular fixed $\Lambda_2$ parameter. Therefore, $(2-\delta^2)$
is an upper bound on schemes with the fixed parameter $\Lambda_2$. The other term in the upper bound
expression is obtained by upper bounding the capacity as derived next.
Again, as a consequence of Lemma \ref{lem:monotone}, we can derive an upper bound using
the DE analysis from Lemma \ref{lm:p_irsa_2lvl}. Combining
\eqref{eq:p_iteration} and \eqref{eq:q_iteration}, the DE recursion is written as  $q_{i+1} =
f_{q}(f_{p}(q_{i}))$. As mentioned in Remark \ref{rm:max_throughput}, the capacity
is equal to the maximum load such that the system is still in the
lossless region. Therefore, the DE recursion at the capacity achieving load $g$
has to be such that $q > f_{q}(f_{p}(q))$
for every $q \in
(0,1]$. In the region where $q \rightarrow 0$, by taking the partial derivative
of $f_{q}(f_p(q))$ with respect to $q$ ,this condition becomes
equivalent to $\frac{\partial f_{q}(f_{p}(q))}{\partial q} \Bigg|_{q=0} < 1$.
This expression can be evaluated to obtain the result.

\begin{align*}
    \frac{\partial f_{p}(q)}{\partial q} \Bigg|_{q=0} &=  RT_\infty(1-\delta) +
        RT_\infty\delta^{2} - RT_\infty\delta(1-\delta), \quad
    \frac{\partial f_{q}(p)}{\partial p} \Bigg|_{p=0} = \lambda_{2} \\
    \frac{\partial f_{q}(f_{p}(q))}{\partial q} \Bigg|_{q=0} &= \frac{\partial f_{p}(q)}{\partial q} \Bigg|_{q=0}\frac{\partial f_{q}(p)}{\partial p} \Bigg|_{p=0} = \lambda_{2} \left( RT_\infty(1-\delta) +
        RT_\infty\delta^{2} - RT_\infty\delta(1-\delta) \right) \\
        &\stackrel{(a)}{=} 2\Lambda_{2}T_\infty \left( 1 + 2\delta^{2} - 2\delta
        \right), 
\end{align*}
where (a) is obtained by using the relation $R\lambda_{2} = 2\Lambda_{2}$ and rearranging the terms. Now, using the inequality, $\frac{\partial f_{q}(f_{p}(q))}{\partial q}\Bigg|_{q=0} < 1$, completes the proof.

%[Proof Sketch]
%The DE recursion at the capacity achieving load $g$
%has to be such that $q > f_{q}(f_{p}(q))$
%for every $q \in
%(0,1]$. In the region where $q \rightarrow 0$, by taking the partial derivative
%of $f_{q}(f_p(q))$ with respect to $q$ ,this condition becomes
%equivalent to $\frac{\partial f_{q}(f_{p}(q))}{\partial q} \Big|_{q=0} < 1$.
%Using the partial derivatives $\frac{\partial f_{p}(q)}{\partial
%q} \Big|_{q=0} =  RT_\infty(1-\delta) + RT_\infty\delta^{2} - RT_\infty\delta
%(1-\delta)$ and $\frac{\partial f_{q}(p)}{\partial p} \Big|_{p=0} = 2\Lambda_
%{2}/R$ and applying chain rule gives us the required result.
\end{proof}
\subsubsection{Upper Bound Comparisons}
\label{subsubsec:upper_bound_comparisons}
%This section contains some comparisons of achieved throughputs to the
%respective upper bounds for parameter choices as follows. 
To compare the different upper bounds, we choose $\P = \left\{ 10, 1 \right\}$, and the capture threshold $\beta = 2$.
The power choice distribution is $\Delta = \left\{ \delta, 1- \delta \right\}$, where we will present the
results for the different values of $\delta$. 
%$\Lambda^{\text{liva}}$ distribution is the optimal distribution for the case $\delta =1$ according
%to Liva in \cite{liva}. The
%distributions $\Lambda^{1}$ and $\Lambda^{2}$ have been optimized according to \eqref{eq:irsa_pc_optimization} for the respective parameters.
Since the different upper bounds deal with different parameters being fixed, we
group the upper bounds and the parameters they depend on with the same column
color in Table I.

\begin{table}[h]
    \centering
\begin{tabular}{|c|c|>{\columncolor[gray]{0.9}}c|c|>{\columncolor[gray]{0.9}}c|>
{\columncolor[gray]{0.9}}c|>{\columncolor[gray]{0.7}}c|>{\columncolor[gray]
{0.7}}c|c|}
\hline
\textbf{$\Lambda(x)$}   & \textbf{$\delta$} & $R$ & \textbf{$T^{\star}_\infty$} 
& \textbf{UB1} &
 \textbf{UB2} & $\Lambda_2$ & \textbf{UB3} & \textbf{Rate Ind. UB}  \\ \hline
 $\Lambda^{\text{liva}}(x) = 0.5x^2 + 0.28x^3 + 0.22x^8$ & 1 & 3.6                 & 0.938            & 0.9695        
       & 0.962  &  0.5 & 1  & 1                       \\ \hline
       $\Lambda^{\text{liva}}(x) = 0.5x^2 + 0.28x^3 + 0.22x^8$ & 0.4 & 3.6              & 1.67             & 1.756    
       & 1.738   &  0.5  &  1.84   & 1.84            \\ \hline
       $\Lambda^{1}(x) = 0.56x^2 + 0.21x^3 + 0.23x^8$           & 0.4  & 3.59             & 1.67 & 1.756
       & 1.734   & 0.56 & 1.717  & 1.84     \\ \hline
       $\Lambda^{2}(x) = 0.6x^2 + 0.2x^3 + 0.2x^8$           & 0.6  & 3.4             & 1.517 & 1.589
       & 1.579  & 0.6 & 1.581    & 1.64            \\ \hline
\end{tabular}
\caption{Comparisons of different Upper Bounds. 
%The upper bounds UB1 and UB3 are for the
%corresponding repetition rate $R$, and the probability $\delta$ of the packet power being 10(the
%higher power). $T_\infty^\star$ is the asymptotic capacity achieved by the
%corresponding scheme $\Lambda$, which was found using DE analysis.}
 }
\label{tab:ub1_ub3}
\end{table}
\vspace{-.45in}

\begin{figure}
    \centering
    \includegraphics[scale=0.5]{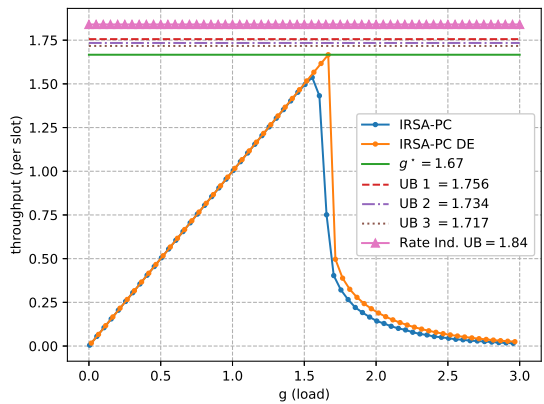}
    \caption{Comparing the various upper bounds on an IRSA-DPC system.
        The system is  $\left\{ M, \P = \left\{ 10,1 \right\}, \Delta = \{0.6,0.4\}, \beta = 2,
        \Lambda^1(x) \right\}$, where 
  $\Lambda^{1}(x) = 0.56x^2 + 0.21x^3 + 0.23x^8$ }
    \label{fig:upper_bound}
\end{figure}

\section{Numerical Simulations and Comparisons}\label{sec:sim}
In this section, we provide exhaustive simulation results to
supplement our analytical results. For all experiments in this section, unless
otherwise mentioned, the capture threshold is
$\beta=2$ and the number of slots is $M=1000$. For all our results presented
below, we will use $\Lambda(x) = \Lambda^{\text{liva}}(x) = 0.5x^2 + 0.28x^3 +
0.22x^8$, unless stated otherwise. For the case of dual power control in SA-DPC
and IRSA-DPC, we consider $\P = \{10P,P\}, \Delta = \{\delta = 0.4, 1- \delta
= 0.6\}$, where $P$ is the minimum transmitted power required such
that the packet is decodable at the receiver without any interference. The
choice of $\delta$ is motivated by the fact that $\delta = 0.4$ is the optimal
for IRSA-DPC for the $\Lambda^{\text{liva}}$ distribution (found using simulations). For the
case of $n = 3$ power
levels in IRSA-3PC, we consider $\P = \{100P,10P,P\},\Delta = 
\{0.27,0.39,0.34\}$.
$P_{\text{avg}}$ denotes the average power spent by the user in transmitting the
packets in a particular frame. We have that $P_{\text{avg}} = R \sum_{i = 1}^
{|\P|}P_i \delta_i$, where $R$ is the average repetition rate defined in Section
\ref{sec:System-Model}. For all our
results presented in this section, we present the throughput averaged over 100
iterations for each load $g$. 

In Tables \ref{tab:repetition_rate_avg_power} and III, we give a summary of throughput improvement with our proposed strategy, where Capacity (Sim) means the chosen value of $M$, while for Capacity (Asym) $M \to \infty$. 

\begin{table}
    \centering
\begin{tabular}{|c|c|c|c|c|c|}
\hline
Scheme  & $R$ & $\sum_{i = 1}^{|\P|}P_i\delta_i$& $P_{\text{avg}}$ & Capacity 
(Sim) & Capacity (Asym.) \\ \hline
 SA & $1$ & $P$ & $P$ & $0.367$ & $0.367$ \\ \hline
 SA-DPC & $1$ & $4.6P$ & $4.6P$ & $0.624$ & $0.657$ \\ \hline
 IRSA &  $3.6$ & $P$ & $3.6P$  & $0.841$ & $0.916$       \\ \hline
 IRSA-DPC & $3.6$ & $4.6P$ & $16.56P$ & $1.551$ & $1.667$           \\ \hline
 IRSA-3PC & $3.6$ & $31.24P$ & $112.46P$ & $1.941$ & $2.016$       \\ \hline
\end{tabular}
\caption{Repetition rate $R$ and average power $P_{\text{avg}}$ of different
schemes.}
    \label{tab:repetition_rate_avg_power}
\end{table}

Along with the simulations we also plot the DE curve 
(suffixed by DE in the legends) to
show the performance of the schemes in the asymptotic setting $ (M \to \infty)$.

%%For the simulations, we consider the system described above and generate random
%instances of the problems. Then, we employ the iterative SIC decoding to decode
%as many packets as possible. In each iteration, whenever a
%particular packet is decoded, it would be removed from the system along with
%all its repetitions and the decoding would
%continue till no further packet could be decoded. At the termination of the
%decoding process i.e when we reach an iteration where no further decoding is
%possible, we compute the total number of decoded users as a fraction of the
%number of slots and that provides us the simulated throughput. 

%[XXX:Make this picture larger, two plots in one row.]

\begin{figure}
\label{fig:simulations_comparisons}
\centering
\begin{subfigure}{0.48 \textwidth}
\centering
\includegraphics[width = \linewidth]{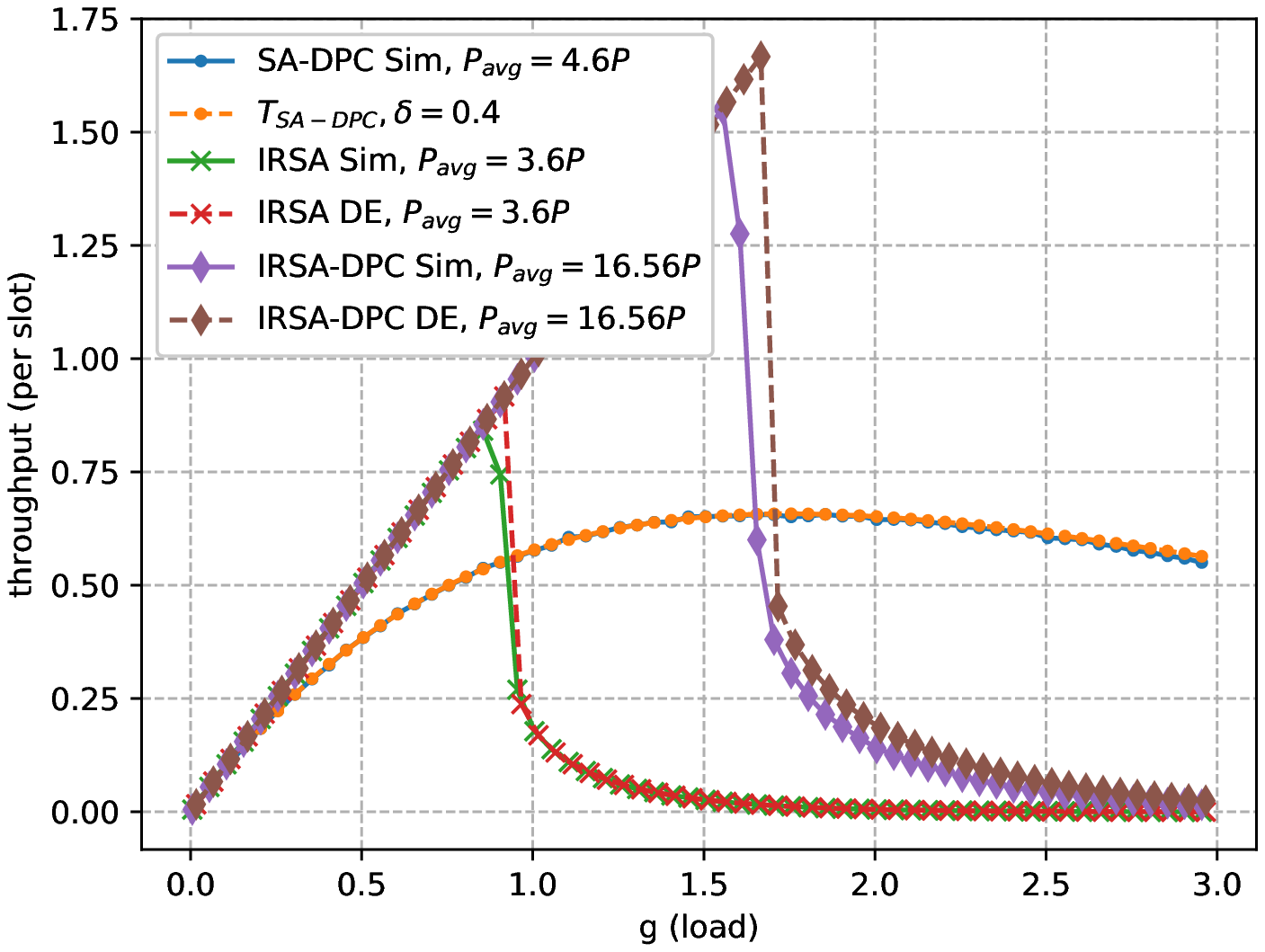}
\caption{}
\label{subfig:2lvlcompare}
\end{subfigure}
\begin{subfigure}{0.48 \textwidth}
\centering
\includegraphics[width = \linewidth]{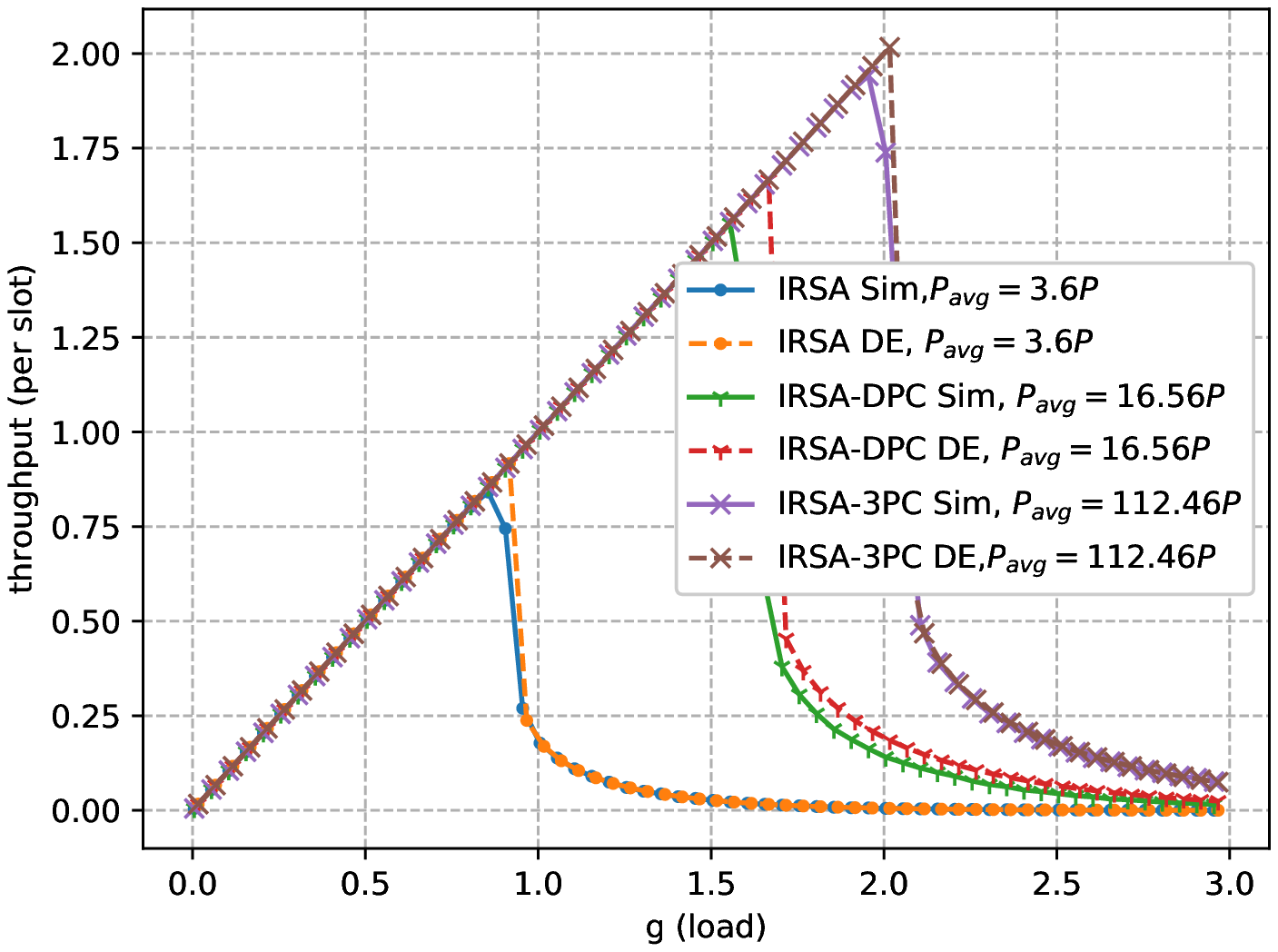}
\caption{}
\label{subfig:nlvlcompare}
\end{subfigure}
\begin{subfigure}{0.48 \textwidth}
\centering
\includegraphics[width = \linewidth]{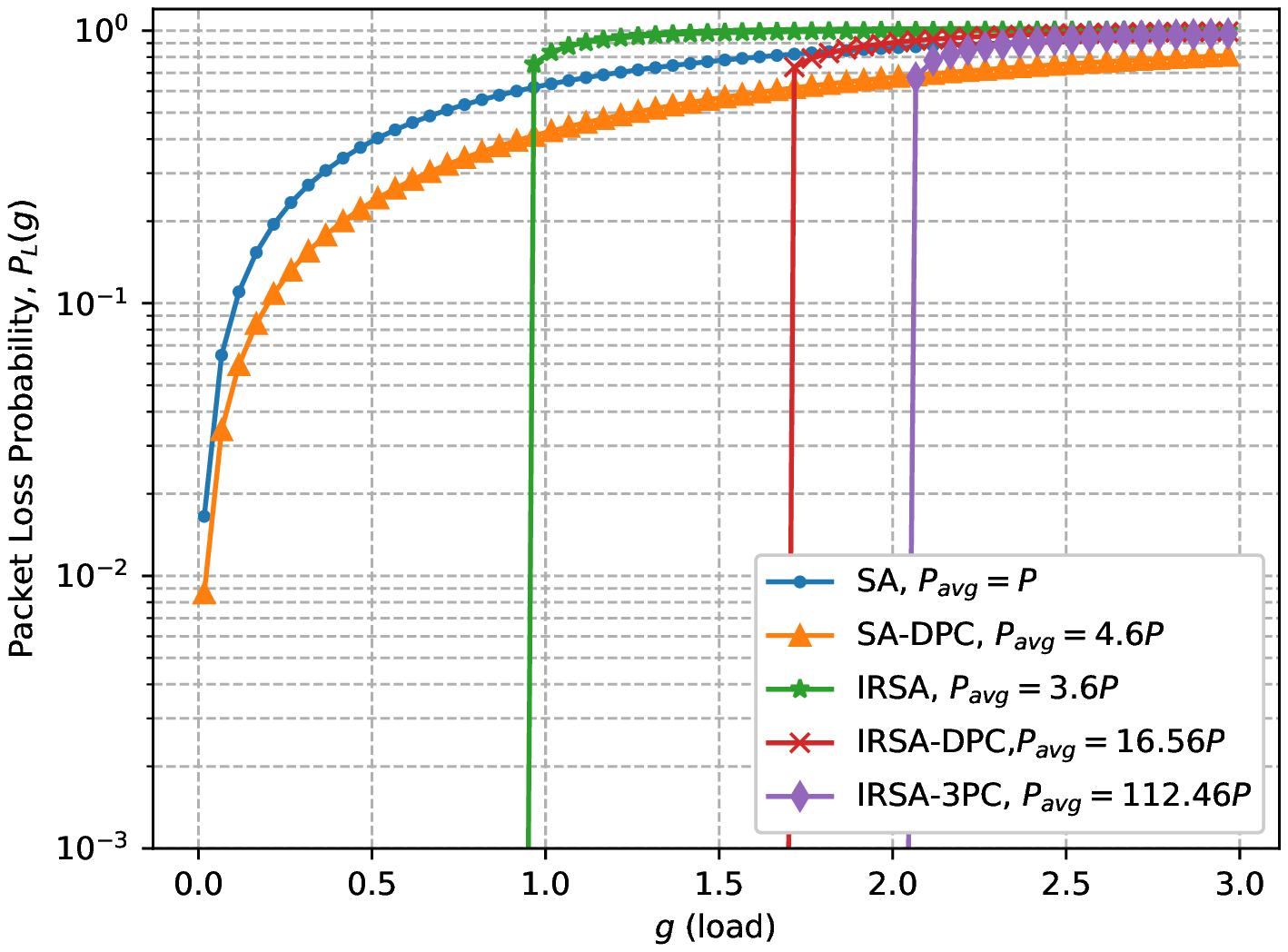}
\caption{}
\label{subfig:packet_loss_prob}
\end{subfigure}
\begin{subfigure}{0.48 \textwidth}
    \centering
    \includegraphics[width = \linewidth]{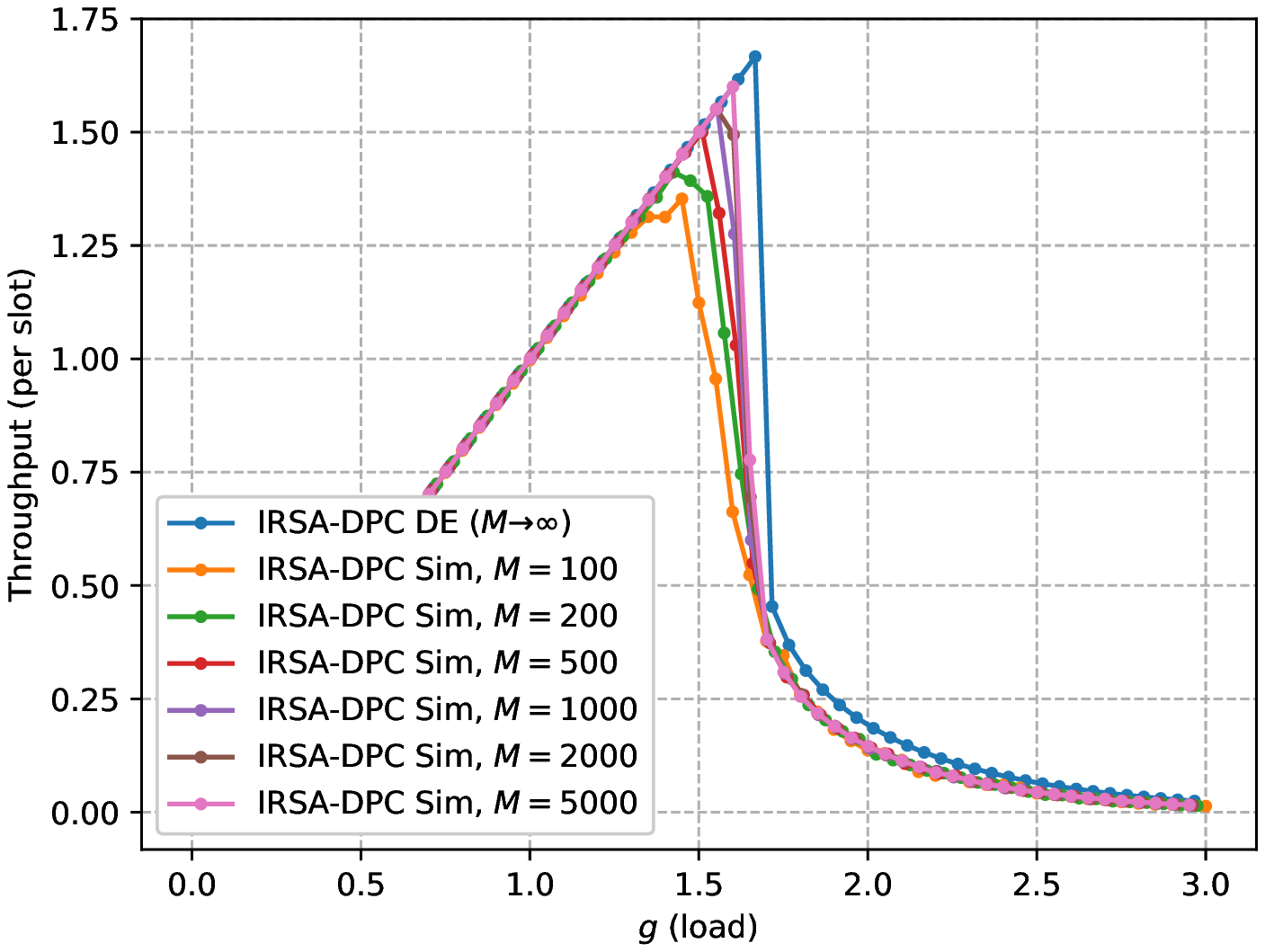}
    \caption{}
    \label{subfig:varyM}
\end{subfigure}
\begin{subfigure}{0.48 \textwidth}
    \centering
    \includegraphics[width = \linewidth]{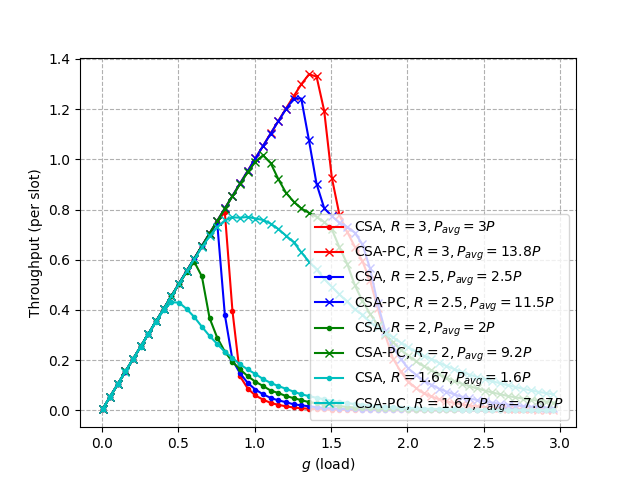}
    \caption{}
    \label{subfig:csa_csa_pc_multiple_rates}
\end{subfigure}
\begin{subfigure}{0.48 \textwidth}
    \centering
    \includegraphics[width = \linewidth]{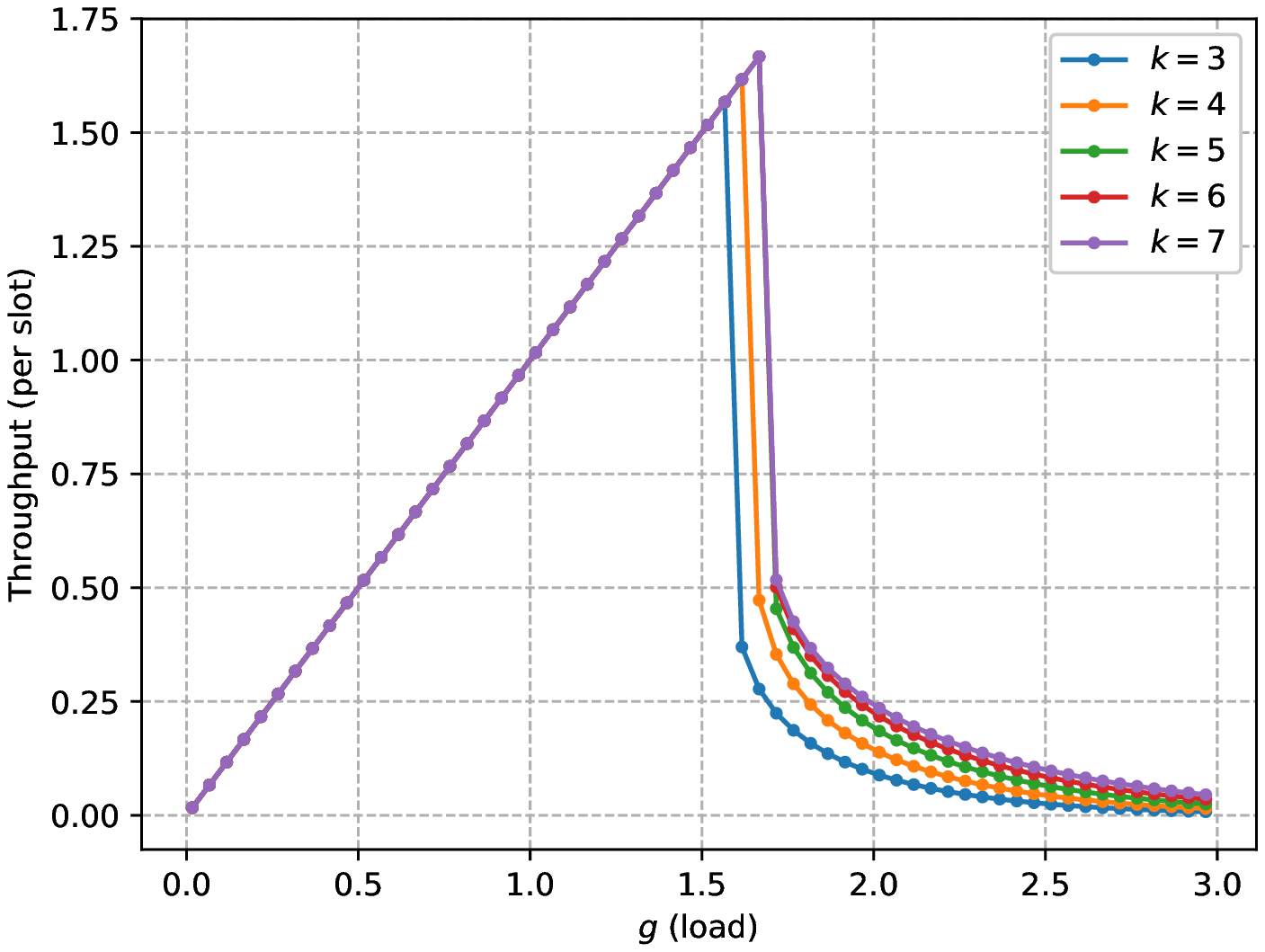}
    \caption{}
    \label{subfig:k_approximation}
\end{subfigure}
\caption{Throughput comparisons with description provided in Section \ref{sec:sim}.}
\end{figure}

Fig. \ref{subfig:2lvlcompare} highlights two types of gains in throughput due to: 
(a) \emph{coding} and (b) \emph{power multiplexing}. 
By comparing SA-DPC with IRSA-DPC, we can see the gains in throughput due
to the coding scheme that we employ. For instance, in the case of
SA-DPC and IRSA-DPC, the maximum throughput increases from $0.68$ (SA-DPC) to
$1.67$ (IRSA-DPC), accounting for $\sim 2.5 \times$ gains in throughput.
Similarly, comparing IRSA with IRSA-DPC, highlights the gains in throughput due to
power multiplexing. The maximum throughput increases from $0.92$ (IRSA) to $1.67$
(IRSA-DPC), accounting for $\sim 1.8 \times$ gains in throughput. It is also
important to note here that these gains in throughput come at the cost of 
increased average power $P_{\text{avg}}$.

Fig. \ref{subfig:nlvlcompare} compares the throughput of IRSA-nPC protocol with $n$ power levels for
$n = 1$ (IRSA), $n = 2$ (IRSA-DPC) and $n = 3$ (IRSA-3PC). Fig. \ref{subfig:nlvlcompare}
substantiates the intuition that as the number of power levels $n$ increases, so
does the maximum throughput, albeit at the cost of increased average power.
Hence there is a trade-off between the maximum achievable throughput and $P_
{\text{avg}}$ and the choice of using one protocol over the other would be subject
to other design constraints and parameters. 
%It is important to note here that
%the gains in throughput is not linear with the increase in number of power
%levels $n$. 
It can be inferred from Fig. \ref{subfig:nlvlcompare} that as $n$
increases, the proportional gain in throughput decreases.

Fig. \ref{subfig:packet_loss_prob} highlights the packet loss probability $P_L
(g,\P,\Delta,\beta,\Lambda)$  (Definition \eqref{def:pl}) for SA,
SA-DPC, IRSA, IRSA-DPC and IRSA-3PC. For IRSA, IRSA-DPC, IRSA-3PC, the
packet loss probability $P_L$, i.e. the probability that a users' packet is not
decodable is $\approx 0$ for loads $g$ less than the respective capacities of
each of the protocols. Under these loads, i.e. $g$ less than the respective capacities, we
have lossless transmission which we alluded to in Remark 
\ref{rm:max_throughput}. At the respective capacities of IRSA, IRSA-DPC and
IRSA-3PC, we notice a phase transition where till we approach the capacities, the packet
loss probability is $\approx 0$ and then suddenly increases and keeps increasing
as the load $g$ increases beyond the respective capacities. One interesting
point to note here is the fact that as the
load $g$ increases, we see that SA-DPC has the least packet loss probability and
therefore highest throughput as the load $g$ increases. Therefore for systems
which need to support large loads $g$, SA-DPC is superior in comparison to the
other considered protocols.

Fig. \ref{subfig:varyM} demonstrates the approximation of the simulated
throughput in comparison to the asymptotic throughput obtained using the DE
equations. The DE curve corresponds to the case of $M \to \infty$ and the
simulations for some finite $M$. This highlights that as $M$ increases, the
approximation approaches the asymptotic throughput.

Fig. \ref{subfig:csa_csa_pc_multiple_rates} presents the simulation results for
CSA and CSA-DPC for different rates. CSA stands for Coded Slotted Aloha --
introduced in \cite{csa_paper}, which is an extension of IRSA. Contrary to IRSA, 
wherein the user packets are simply repeated, in CSA, the user packets
are encoded (using some linear codes) prior to the transmission in the MAC frame. 
The encoding is done through local component codes, drawn independently by the
users from a set of
component codes $\{\mathscr{C}_h\}_{h = 1}^{\theta}$. The p.m.f over these
component codes is denoted by $\{\Lambda_h\}_{h = 1}^{\theta}$. 
 We define the rate as $R := \frac{\sum_
{h=1}^\theta\Lambda_h n_h}{k}$ where $\theta$ is the total number of codes considered,
$\Lambda_h$ is the probability of choosing component code $\mathscr{C}_h$ whose
length is $n_h$. Each component code has a common dimension $k$. The user
packets are decoded on the receiver side by combining SIC with local component
code decoding to recover from collisions. CSA-DPC is the dual power control
variant of CSA, which is analogous to the IRSA-DPC protocol, in which each bit in
the MAC frame is transmitted with a power level drawn from the set $\{P_1,P_2\}$
with p.m.f. $\{\delta,1 - \delta\}$.
Since theoretically analysing power control for CSA-DPC protocol is
intractable, we present some empirical results using simulations comparing the
the throughput obtained using CSA and the dual power control CSA-DPC
protocol for different rates. The theoretical intractibility is due to lack of a
closed form expression for the density evolution equations
for the CSA-DPC protocol. 
%makes evalution of the asympototic throughput difficult.
We consider
rates $R = 3, 5/2, 2, 5/3$. For the component codes and the
respective probability distribution $\Lambda$ for $k= 2$, refer to Table II of 
\cite{csa_paper}.
The increase in
capacity in the case of CSA and CSA-PC for different rates are summarized in Table III.

Fig. \ref{subfig:k_approximation} illustrates the practical implication of
assuming $k$ being large enough, recall Lemma \ref{lm:k_to_inf_approx}. In Remark
\ref{rmk:k_to_inf_approx} we pointed out
that for practical purposes, $k=5$ can be considered to be a large enough $k$. Fig. \ref{subfig:k_approximation}
illustrates that point with reference to IRSA-PC. The system is $\{M, \P = 
\{k\beta P_2,P_2 \}, \Delta = \{0.4, 0.6 \},
\beta\}$, and the IRSA-PC protocol is $\Lambda(x) = 0.5x^2 +
0.28x^3 + 0.22 x^8$. One can observe that the throughput
curve converges as $k$ is made to grow from $k=3$ to $k=7$, and that the curves are virtually
identical beyond $k=5$. We have noticed this phenomenon in all our experiments for a range of
parameters of IRSA-PC.
\begin{table}[!h]\label{tab:csa}
\begin{center}
\begin{tabular}{| c | c | c | }
\hline
$R$ & CSA Capacity (Sim) & CSA-DPC Capacity (Sim)\\
\hline
$3$ & $0.789$ & $1.339$ \\
\hline
$5/2$ & $0.748$ & $1.242$ \\
\hline
$2$ & $0.592$ & $1.016$ \\
\hline
$5/3$ & $0.431$ & $0.771$ \\
\hline
\end{tabular}
\end{center}

\caption{Comparison of throughput with CSA and CSA-PC.}
\end{table}

\section{Path Loss Model}
\label{appendix:path-loss-model}
In prior work \cite{liva}, theoretical analysis for the throughput of the IRSA protocol without any power control has been carried out 
only for the ideal channel model, discussed in Section \ref{sec:System-Model}. 
A more realistic channel model called -- the {\it path-loss channel} model is parameterized by two parameters; $d_
{\text{min}}$ and the path-loss exponent
$\alpha$. With the {\it path-loss channel}, a base-station is the receiver, and the users are assumed to be
distributed according to some spatial distribution around the base-station in a circle of radius $r$. If a user
transmits a packet with a power $P$ from a distance of $d$, then the received
power is given by:

\begin{equation}
    P_{rec} = 
    \begin{cases}
	    P & d \leq d_{\text{min}}, \\ \label{pathlossmodel}
	    P \left( \frac{d}{d_{\text{min}}} \right)^{-\alpha} & d >
	        d_{\text{min}}.
    \end{cases}
\end{equation}

The goal of this section is to extend the approximate throughput guarantees of
the IRSA-nPC protocol with the ideal channel model to the  
path loss model when only a single power level is used by each node. This is  a fairly
challenging problem, and to
the best of our knowledge, \cite{near_far} is the only work in which the
authors have attempted to study the vanilla IRSA protocol (with a single power
level) with the path loss model, however, no closed form or explicit expression
for the density evolution
equations required for asymptotic analysis is provided there. 
Instead, the authors of 
\cite{near_far} use Monte-Carlo simulations to estimate the coefficients in the
density evolution equations. Therefore there is no principled way of computing
or approximating the throughput of the vanilla IRSA scheme under the path loss
model. 

We make a connection between the  IRSA protocol under the path loss model with a single power level and the IRSA-nPC under the ideal channel model, and leverage the results 
developed in earlier sections to approximate the throughput of  the IRSA protocol under the path loss model.
We make an additional assumption, that the knowledge of the spatial distribution of users is known.

%In our approach, by approximating the path-loss model to IRSA-nPC over an ideal channel, we provide
%a framework for theoritically analyzing it.

Before going into the
details of this approximation framework, it is instructive to understand in an
intuitive manner as to why it is difficult to analyse the IRSA protocol under the
path loss model. The heart of the analysis of the IRSA scheme is the capture
effect which
essentially says that packet of user $i$ in slot $m$ can be decoded as long as
$SIR^{(i,m)} = \frac{P_{rec}^{(i)}}{\sum_{n \in \mathcal{R}^{0}_{j}\setminus i}
P_{rec}^{(n)}} \geq \beta$. Assuming that the radial distance of all the users
in more than $d_{\min}$, the expression for $SIR^{(i,m)}$ under the path loss
model assumption is $SIR^{(i,m)} = \frac{P\left(\frac{r_i}{d_{\min}}\right)^
{-\alpha}}{\sum_{n \in \mathcal{R}^{0}_{j}\setminus i}P\left(\frac{r_n}
{d_{\min}}\right)^{-\alpha}}$ which can be simplified to $SIR^{(i,m)} = 
\frac{r_i^{-\alpha}}{\sum_{n \in \mathcal{R}^{0}_{j}\setminus i}
r_n^{-\alpha}}$ where $r_j$ is the radial distance of user $j$ from the
basestation. 

Assuming the users are distributed randomly around the basestation,
$r_n$'s are random variables, hence the
denominator is  a random sum of random variables $\sum_{n \in 
\mathcal{R}^{0}_{j}\setminus i} r_n^{-\alpha}$. Unlike \cite{capture_fading},
which showed that SIR follows the Pareto distribution under the Rayleigh fading
channel model, we are unable to characterize the distribution of the SIR in the path loss model. 
Hence, the need to approximate
the path loss model with a $n$-power level IRSA-PC model.
To do this, we show that the vanilla IRSA scheme under the path loss model, and
the IRSA-nPC shceme under the ideal channel assumption are
\emph{approximately} equivalent. 

\subsection{Discretization For Approximation}
Let $P_{\min}$ be the minimum power received at the receiver (or basestation)
that can be successfully decoded. If $P_{\min} >0$ implies there exists a
distance $d_{\max}$
such that a user can be decoded if and only if the distance of the user from the
basestation is less than $d_{\max}$.
In our framework, the first step is to discretize the set of possible
received powers due to path loss which is $[P_{\min}, P]$. Note that $P$ is the
common power with which all the users transmit their packets and if the
distance of the user from the basestation is less than $d_{\min}$, then the
received power at the base station is $P$. We can discretize this continuous set
into $n$ power levels where $n = \floor*{\frac{\log P/P_
{\min}}{\log k\beta}} + 1$ such that $P_i = k\beta P_{i+1}, P_1 = P, P_n = P_
{\min}$. This discretized set with powers is the same set $\mathcal{P}$ (by
construction) which is considered in the IRSA-nPC model. 
In the IRSA-nPC model apart from the power level, we also considered the
distribution $\Delta$ from which these power levels are sampled
independently by the users. In the path loss model, since all the users transmit
the packets with a
common power $P$, the variation in the power of the received packets arises due
to their respective distances from the base station. The spatial distribution of
the users around the basestation determines the $\Delta$ distribution. 
Note that in the path loss model, the received power $P_{
\text{rec}}^i$ and the radial distance of the user from the receiver $r_i$ are
related. Therefore we can discretize the distance set of user $[0, d_
{\max}]$ similar to how we discretized the set of received powers $[P_{\min},
P]$. For each $P_i \in \mathcal{P}, d_i = \left(\frac{P}{P_{i}}\right)^
{1/\alpha}d_{\min}$. Define $d_0 := - d_1, d_{n + 1} := d_n = d_
{\max}$. Let $N_i = |\{j \in U : \frac{d_i + d_{i-1}}{2} \leq r_j \leq 
\frac{d_i + d_{i+1}}{2}\}|$ denote the number of users whose radial distance
from the basestation lies between $\frac{d_i + d_{i-1}}{2}$ and $\frac{d_i + d_
{i+1}}{2}$. This gives us the $\Delta = \{\delta_i\}_{i}^n$ distribution as $\delta_i = 
\frac{N_i}{N}$ where $N_i$ is number which depends on how the users are
spatially distributied around the base-station.

\subsection{Case Study}
The best way to understand this framework
is with the help of an example. 
For this example let us assume that $P_{\min} = 0.01 P$.
Also make the additional assumption that $\beta = 2$ and the assumption of $k =
5$ follows from Remark \ref{rmk:k_to_inf_approx}.
Then it follows from the argument above that $d_{\max} \approx 4.64 d_{\min}$
and we can discretize this path loss model in $n = \floor*{\frac{\log 100}{\log
10}} + 1 = 3$ levels where $P_1 = P, P_2 = 0.1 P, P_3 = 0.01 P = P_{\min}$.

Let $L$ be the random variable denoting the position of a user relative to the
base station. In terms of the polar coordinates this random variable $L$ can be
characterized by two random variables $R_L$ and $\Theta_L$ which denote the
radial
distance of the user from the base station and the angle from a referenced axes
respectively.

Hence we can define $L := (R_L,\Theta_L)$. In our example let $R_L \sim 
\text{Unif}[0,d_{\max}],\Theta_L \sim \text{Unif}[0,2\pi]$ and assume that $R_L$
is independent of $\Theta_L$. Since $R_L$ is independent of $\Theta_L$, it
implies that the joint PDF of the position random variable $L$ can be written as
$f_L(r,\theta) = \frac{1}{2\pi d_{\max}}$. The reason for choosing such a simple assumption on the
spacial distribution of users is that the calculation of $\{\delta_k\}$ is
greatly simplified. Without loss of generality, we can assume $d_{\min} = 1$,
then we have that $d_1 = d_{\min} = 1, d_2 = \left(\frac{P}{0.1 P}\right)^{
\frac{1}{3}} \times d_{\min} = 2.15, d_3 = d_{\max} = 4.64$. Note that while
calculating $\{\delta_k\}$, for this example, since the radial distance of the
user is independent of the angle the user makes from a referenced axis. Also
since the radial distance of the user is uniformly distributed between $0$ and
$d_{\max}$, it follows that $\delta_i = \frac{\frac{d_i + d_{i+1}}{2} - \frac{d_
{i} + d_{i-1}}{2}}{d_{\max}} = \frac{d_{i+1} - d_{i-1}}{2 d_\max}$. Since $d_0 =
-d_1, d_{4} = d_3$, we have that $\delta_1 = \frac{d_1 + d_2}{2 d_{\max}}
\approx 0.34, \delta_2 = \frac{d_3 - d_1}{2 d_{\max}} \approx 0.39 , \delta_3 = 
\frac{d_3 - d_2}{2 d_{\max}} \approx 0.27$.

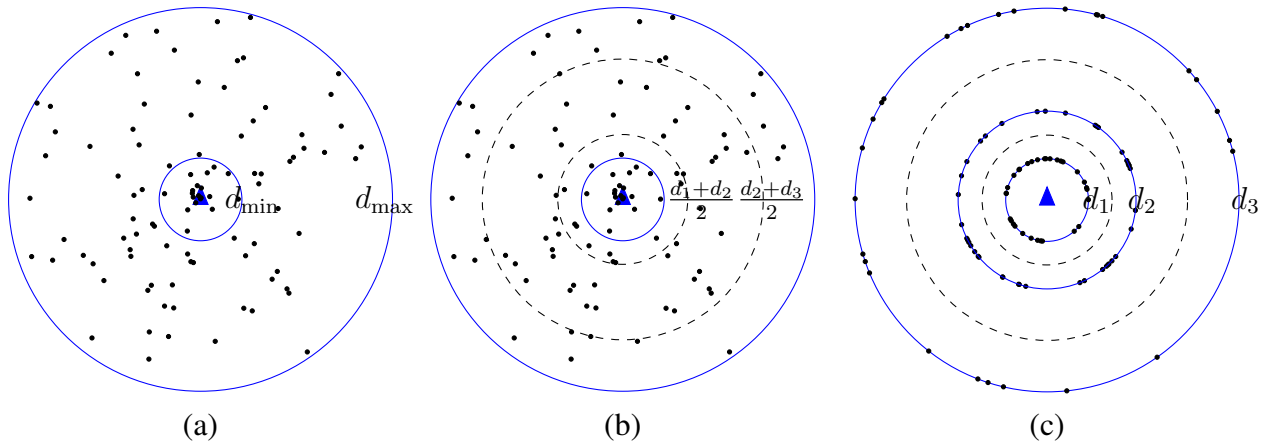
\begin{figure}[h]
\minipage{0.32\textwidth}
  \begin{tikzpicture}[scale = 0.55]
        \node[draw=blue,fill=blue,isosceles triangle,shape border
        rotate=90,minimum width=0.2cm,minimum height=0.2cm,inner sep=0pt] at (0,0) {};
        \draw[blue] (0,0) circle (1 cm);
        \node at (1.25, 0) (a) {$d_\min$};
        \draw[blue] (0,0) circle (4.64 cm);
        \node at (4.45,0) (c) {$d_{\max}$};
        \foreach \p in \points{
	        \draw[fill = black] \p circle (0.05 cm);
	    }
        %\draw[dashed] (0,0) circle (1.57 cm);
        %\draw[dashed] (0,0) circle (2.65 cm);
        %\draw[dashed] (0,0) circle (3.95 cm);
	   \node at (0,-5.5) (a) {(a)};
     \end{tikzpicture}
\endminipage \hfill
\minipage{0.32\textwidth}
 \begin{tikzpicture}[scale = 0.55]
        \node[draw=blue,fill=blue,isosceles triangle,shape border
        rotate=90,minimum width=0.2cm,minimum height=0.2cm,inner sep=0pt] at (0,0) {};
        \draw[blue] (0,0) circle (1 cm);
        %\node at (1.2, 0) (a) {$r_0$};
        \draw[blue] (0,0) circle (4.64 cm);
        %\node at (4.25,0) (c) {$r_{\max}$};
        \foreach \p in \points{
	        \draw[fill = black] \p circle (0.05 cm);
	    }
        \draw[dashed] (0,0) circle (1.57 cm);
        \node at (1.9,0) (r1) {$\frac{d_1 + d_2}{2}$};
        %\draw[dashed] (0,0) circle (2.65 cm);
        \draw[dashed] (0,0) circle (3.395 cm);
        \node at (3.6,0) (r2) {$\frac{d_2 + d_3}{2}$};
        \node at (0,-5.5) (b) {(b)};
	 \end{tikzpicture}
  %\caption{A really Awesome Image}\label{fig:awesome_image2}
\endminipage\hfill
\minipage{0.32\textwidth}%
  \begin{tikzpicture}[scale = 0.55]
        \node[draw=blue,fill=blue,isosceles triangle,shape border
        rotate=90,minimum width=0.2cm,minimum height=0.2cm,inner sep=0pt] at (0,0) {};
        \draw[blue] (0,0) circle (1 cm);
        \node at (1.2,0) (d1) {$d_1$};
        %\node at (1.2, 0) (a) {$r_0$};
        \draw[blue] (0,0) circle (4.64 cm);
        \node at (4.8,0) (d3) {$d_3$};
        %\node at (4.25,0) (c) {$r_{\max}$};
        \foreach \p in \dispoints{
	        \draw[fill = black] \p circle (0.05 cm);
	    }
        \draw[dashed] (0,0) circle (1.57 cm);
        %\draw[dashed] (0,0) circle (2.65 cm);
        \draw[dashed] (0,0) circle (3.395 cm);
        \draw[blue] (0,0) circle (2.15 cm);
        \node at (2.3,0) (d2) {$d_2$};
        \node at (0,-5.5) (c) {(c)};
	 \end{tikzpicture}
\endminipage
\caption{Sub-Fig. (a) describes a random distribution of the users around the
basestation where the radius and angle with respect to a referenced axis are
sample uniformly and independently from Unif$[0,d_{\max}]$ and Unif$[0,2\pi]$
respectively. In sub-Fig. (b), the circular region around the base-station is
divided into zones. Users whose radial distance from the base-station is between
$0$ and $\frac{d_1 + d_2}{2} \approx 1.57$ belong to zone 1. Users whose
radial distance is between $\frac{d_1 + d_2}{2}$ and $\frac{d_2 + d_3}{2}$
belong to zone 2 and similarly, for the users whose radial distance is between
$\frac{d_2 + d_3}{2}$ and $d_3$ belong to zone 3. In approximating the
path-loss model with the $n$-power level IRSA model, we approximate that for
all the users in zone 1, the received power at the base-station is $P$ which is
equivalent to approximating the radial distance of all the users in zone 1 with
$r_i = d_1$ for all users $i$ in zone 1. Similarly we can approximate the radial
distance of all the users in zone 2 with $d_2$ and hence their received power
of all the users in zone 2 can be approximated to be $0.1P$ and the same hold
approximation is made for zone 3 users where their received power is
approximated to be $0.01P$. This approximation is shown in sub-Fig. (c).}
\end{figure}

\subsection{Closeness of approximation}
%While we do not give any theoretical guarantee as to how close the approximation
%of the path loss model is to the actual path loss model, we show via
%simulation results that the throughput achieved in both the cases are
%numerically quite close.
Due to the lack of any theoretical guarantees on the throughput of the vanilla
IRSA under the path loss model, we are unable to provide any theoretical
guarantees on how close the approximation of the throughput under the two
settings are. However we show through simulations that the throughput achieved
in the two settings i.e vanilla IRSA scheme under the path loss model and the
IRSA-nPC scheme under the ideal channel model are numerically very close.
%For the case-study example considered above, we approximate the path loss model
%by discretizing and approximating the path loss model (with no power-control) by an IRSA-3PC
%model over an ideal channel. 
We provide the simulation results for the case study described above in Fig.
\ref{fig:path_loss_sim}.

\begin{figure}[h]
\centering
\includegraphics[width = 0.6\linewidth]{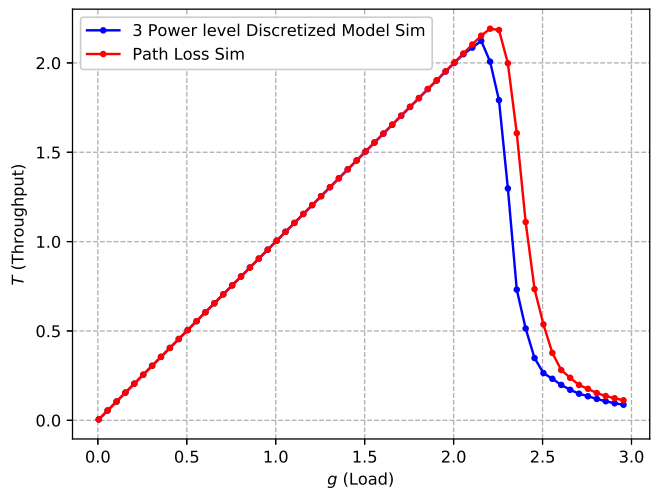}
\caption{For this simulation, we fix the number of
slots $M = 1000$. Each users
independently decided how many times to repeat her message according to the
distribution $\Lambda^{\text{liva}}(x) = 0.5x^2 + 0.28x^3 + 0.22x^8$. The radial
distance of
all the users is sampled independently from a uniform distribution over $
[0,d_{\max}]$. Other parameters for the simulation (which are mentioned in the
system model) are $\beta = 2, \alpha = 3, k = 5$.}
\label{fig:path_loss_sim}
\end{figure}
As is evident from the simulation results above, the throughput of
the IRSA-3PC scheme under the ideal channel model lower bounds the
throughput of the vanilla IRSA scheme under the path loss model. The intuition for this claim is as
follows. The set of powers received at the basestation under the path loss model
is $[P_{\min}, P]$ and the set of power received at the basestation with the
IRSA-nPC scheme under the ideal channel model is $\mathcal{P}$. We have that
$\mathcal{P} \subset [P_{\min}, P]$. Note that the condition $P_{i} \geq
k\beta P_{i-1}$ used in constructing the set $\mathcal{P}$ and for the analysis
of IRSA-nPC are only sufficient conditions and made for analytical tractibility.
It is possible for the packets to be decoded when the gap between the
subsequent power levels is not greater than $k \beta$, as we have assumed in the
IRSA-nPC scheme and this is true in the
case of the vanilla IRSA scheme under the path loss model and hence we observe
that more packets get decoded under the vanilla scheme with the path loss model
than in the IRSA-nPC scheme with the ideal channel model. 
\begin{comment}
There is a lot of diversity in the received powers at the base
station in the
path loss model due to the $\left(\frac{d}{d_{\min}}\right)^{-\alpha}$ factor in
the path loss model and this diversity helps in performing the decoding due to
the capture effect. But this diversity in the received power is not
mathematically tractable which makes calculation of density evolution equations
really hard. By approximating the received powers by the discretized powers in
IRSA-$n$PC, we give up on the diversity for the sake of
tractibility. This reduced diversity in the received powers, under-estimates
the decoding due to the capture effect and hence gives a lower bound on the
achieveable throughput in the path loss case.
\end{comment}

\section{Conclusion}
%In this work, we built upon previous works such as  
%\cite{capture_fading,near_far} to analyze IRSA (first introduced
%in \cite{liva}) over realistic channel models.
 In prior work, for RAPs, using
redundancy in time (through repetition in IRSA), throughput was shown to
increase fundamentally over SA. We showed that by randomizing the transmit power for each packet,
another fundamental improvement can be obtained, and the throughout barrier of
unity can be breached, showing that there is no 'loss in throughput' because of lack of
coordination in the distributed system.
To this end, we introduced multi-level power control, where each user transmits its packet with power level chosen randomly according to a distribution. We set-up an optimization problem which enabled us to
optimize over the power-level distribution  and the RAP to
obtain the maximum throughput, where the complexity of the optimization problem is regulated by derivation of success probability evolution equations. We also provided upper bounds on the best possible throughputs possible, that were shown to be close to  
the achievable throughputs (solutions of the  optimization problems).
%For lack of space, in this paper, we only considered two-level power control, and a simple channel model, but the results for multiple power levels, and 
%a more realistic channel model can be found in our technical report \cite{}.

%We then used this model to approximate the realistic path-loss
%channel model, and optimize  the packet repetition scheme for it. We show
%through simulation that this is a reasonable approximation, and that the
%approximation helps in finding better packet-repetition distributions $\Lambda$ 
%for the path-loss model. 

\bibliographystyle{ieeetr}
\bibliography{sample}

\vspace{-0.2in}
\begin{appendices}
%%%%%%%%%%%%%%%%%%%%%%%%%%%%%%%%%%%%%%%%%%%%%%%%%%
% Proof of Section on SA
%%%%%%%%%%%%%%%%%%%%%%%%%%%%%%%%%%%%%%%%%%%%%%%%%%
\section{Proof of Lemma \ref{lm:p_irsa_2lvl}}\label{app:lemDE}
\begin{proof}
$q_1 =1$ because at the start of SIC decoding process, all edges connected to all user nodes are unknown.
At iteration $i$ of the SIC decoding process, consider a slot node with degree
$l$ as shown in Fig. \ref{subfig:slot-node}. WLOG, consider the edge $e_0$ and
consider $t$ other edges which are still unknown at iteration $i$. Therefore the
remaining $l - 1 - t$ edges are known (or resolved) by iteration $i$. Denote by
$w_{l,t}$ the probability that at a
particular iteration an edge (in Fig. \ref{subfig:slot-node}, edge $e_0$)
connected to a slot-node of degree $l$ with $t$ of the other edges being
unknown, becomes known at that iteration. If $t = 0$, then $l - 1$ have been
resolved and edge $e_0$ can be resolved by subtraction, hence $w_{l,0} = 1$. If
$t> k$, then $SIR < \beta$ and the edge $0$ can not be decoded since capture
effect can not be used and therefore $w_{l,t} = 0, \forall t > k$. If $t = 1$,
then we can resolve edge $0$ if and only if edge $1$ and edge $0$ have different
power levels i.e edge $0$ is $P_1$ and edge $1$ is $P_2$ and vice-versa. In
either of the case, due to the choice of $P_1,P_2$, we have that $SIR > \beta$
and we can leverage the capture effect to resolve both the edges. Therefore we
have that $w_{l,1} = 2\delta(1 - \delta)$. For the case of $t \in [2,k]$, the
only way edge $0$ can be resolved is if edge $0$ is $P_1$ and rest of the $t$
unknown edges are $P_2$ and we can resolve edge $e_0$ due to the capture effect
and we have that $w_{l,t} = \delta (1 - \delta)^t$ where $\delta$ is the
probability of edge $e_0$ being of power $P_1$ and $ (1 - \delta)^t$ corresponds
to the probability of $t$ unknown edges being of power $P_2$. Therefore, we have
that
\begin{equation}
w_{l,t} = \mathbbm{1}_{\{t = 0\}} + 2\delta(1 - \delta)\mathbbm{1}_{\{t = 1\}} +
\delta(1 - \delta)^t \mathbbm{1}_{\{t \in [2,\min\{k,l-1\}]\}}
\label{eq:wlt_2power_lvel}
\end{equation}
Let $1 - p$ denote the probability that edge $e_0$ is resolved after iteration
$i$. From the definition of $q_i$ and the fact that $t \in [0,l-1]$, it follows
that,
\begin{equation}
1 - p = \sum_{t = 0}^{l - 1} w_{l,t} {l - 1 \choose t} q_i^t(1 - q_i)^{l - t
- 1} = w_{l,0}(1 - q_i)^{l-1} + \sum_{t = 1}^{l-1} w_{l,t} {l-1 \choose t}q_i^t
(1-q_i)^{l - 1 - t},
\label{eq:1}
\end{equation}
where the term $w_{l,0}(1 - q_i)^{l-1}$ is due to interference cancellation and
$\sum_{t = 1}^{l-1} w_{l,t} {l-1 \choose t}q_i^t
(1-q_i)^{l - 1 - t}$ is due to the capture effect. Since we ignore noise in our model, $w_{l,0} = 1$. 
The term ${l-1 \choose t}$ refers to the number of
combinations of the $t$ other edges that are known.
Using the tree analysis argument presented in \cite{luby1998analysis}, by
averaging $p$ over the \emph{edge distribution} defined in Section 
\ref{subsec:graphical_model} and from the definition of $p_i$ it follows that
\begin{align}
p_i = \sum_{l = 1}^{N}\rho_l\left(1 - \sum_{t = 0}^{l - 1} w_{l,t} {l - 1 \choose t} q_i^t(1 - q_i)^{l - t
- 1} \right),
\label{eq:2}
\end{align}
where $\rho_l$ is the probability that an edge is connected to a slot node of
degree $l$. We simplify \eqref{eq:2} as follows $p_i $
\begin{align}\nonumber
         & \stackrel{(a)}{\approx} 1 - \sum_{l=1}^{N} \rho_{l} \left
        (\left(1 - q_i\right)^{l-1} + 
             \sum_{t=1}^{l-1} \delta (1 - \delta)^t 
            {l-1 \choose t} q_i^t (1- q_i)^{l-1-t} + \left(l - 1\right)\delta(1
        - \delta)q_i(1-q_i)^{l-2}\right), \\\nonumber
 & \stackrel{(b)}{=} 1 - \sum_{l=1}^{N} \rho_{l} \left(\left(1
            - q_i\right)^{l-1} + \delta\left(\left(1 - \delta q_i \right)^{l-1}
            - \left(1 -
            q_i\right)^{l-1}\right) + \left( l - 1\right)\delta(1 -
        \delta)q_i(1-q_i)^{l-2}\right), \\
       & \stackrel{(c)}{=} 1 - \left( 1 - \delta \right) e^{-gq_iR} -
        \delta e^{-gq_i\delta R} -
        \delta(1-\delta)gq_iRe^{-gq_iR},
    \end{align}
where (a) is obtained by substituting \eqref{eq:wlt_2power_lvel} in
\eqref{eq:1} and using the large $k$ approximation, $\min \{ k,l-1 \}$ =
$l-1$ (see Figure \ref{subfig:k_approximation} to see
the effect of this approximation), (b) is
obtained by using the binomial formula, and (c) is obtained by substituting the 
expression for the slot edge-perspective degree distribution, $\rho(x) = e^{-gR
(1-x)}$ as $N \to \infty$ because $M \to \infty$ for a constant load $g$.

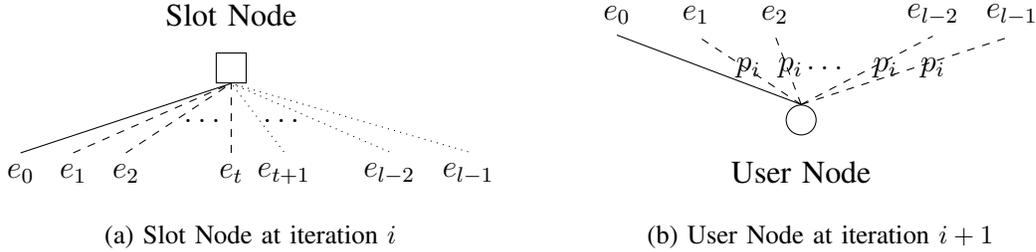
\begin{figure}[H]
    \begin{subfigure}{0.45 \textwidth}
    \centering
    \begin{tikzpicture}[scale = 0.7, square/.style={regular polygon,regular polygon sides=4}]
        \node at (0,2) [square,draw,minimum size = 0.25cm] (s1) {};
        \node at (0,3) {$\textrm{Slot Node}$};
        \node at (-4,0) (u1) {$e_0$};
        \node at (-3,0) (u2) {$e_1$};
        \node at (-2,0) (u3) {$e_2$};
        \node at (-0.5,1) (u8) {$\dots$};
        \node at (0,0) (u4) {$e_t$};
        \node at (1,0) (u5) {$e_{t+1}$};
        \node at (3,0) (u6) {$e_{l-2}$};
        \node at (4.5,0) (u7) {$e_{l-1}$};
        \node at (1,1) (u9) {$\dots$};
        \draw[] (u1.north) -- (s1.south);
        \draw[dashed] (u2.north) -- (s1.south);
        \draw[dashed] (u3.north) -- (s1.south);
        \draw[dashed] (u4.north) -- (s1.south);
        \draw[dotted] (u5.north) -- (s1.south);
        \draw[dotted] (u6.north) -- (s1.south);
        \draw[dotted] (u7.north) -- (s1.south);
    \end{tikzpicture}
    \caption{Slot Node at iteration $i$}
    \label{subfig:slot-node}
    \end{subfigure}
    \begin{subfigure}{0.45\textwidth}
	\centering
    \begin{tikzpicture}[scale = 0.7, square/.style={regular polygon,regular polygon sides=4}]
        \node at (2,0) [circle, draw, minimum size = 0.25 cm] (u2) {};
        \node at (2,-1) {$\textrm{User Node}$};
        \node at (-1.5,2) [] (s1) {$e_0$};
        \node at (0,2) [] (s2) {$e_1$};
        \node at (1.5,2) [] (s3) {$e_2$};
        \node at (2.5,1) [] (s6) {$\dots$};
        \node at (4.5,2) [] (s4) {$e_{l-2}$};
        \node at (6.0,2) [] (s5) {$e_{l-1}$};
        \node at (1,1) [] (t2) {$p_i$};
        \node at (1.8,1) [] (t3) {$p_i$};
        \node at (3.6,1) [] (t4) {$p_i$};
        \node at (4.5,1) [] (t5) {$p_i$};
        \draw[] (u2.north) -- (s1.south);
        \draw[dashed] (u2.north) -- (s2.south);
        \draw[dashed] (u2.north) -- (s3.south);
        \draw[dashed] (u2.north) -- (s4.south);
        \draw[dashed] (u2.north) -- (s5.south);
    \end{tikzpicture}
    \caption{User Node at iteration $i+1$}
    \label{subfig:user-node}
    \end{subfigure}
    \caption{Pictorial description of user and slot node with degree $l$}
\end{figure}
\vspace{-0.3in}
To derive \eqref{eq:q_iteration}, first consider a user node of degree $l$ at iteration
$i+1$ (see Fig. \ref{subfig:user-node}). 
Consider one particular edge $e_0$ out of the $l$ edges. Let each of the other $l-1$
edges remain
unknown on the slot-node side in the previous iteration $i$ independently with probability
$p_i$ each. Since the edge $e_0$ remains unknown at iteration $i+1$ only if all
the other $l-1$ edges are unknown this iteration, the probability that edge $e_0$ remains unknown at
iteration $i+1$ is $p_i^{l-1}$. 

Again using the tree analysis of \cite{luby1998analysis}, by averaging over the
edge distribution, we have that $q_{i + 1} = \sum_{l} \lambda_l p_i^{l - 1} =
\lambda(p_i)$, where recall that $\lambda_l$ denotes the probability that an
edge is connected to a user node of degree $l$ and $q_{i+1}$ is the probability
that an edge connected to a user node is unknown at iteration $i + 1$.
\end{proof}

\section{Proof of Lemma \ref{lem:monotone}}
\label{app:lem:monotone}
We consider two systems, where in System 1, the power levels are $\mathcal{P}
= \{P_1, P_2\}$ with $P_1 \geq k_1 P_2$, while in System 2, the power levels
are $\mathcal{Q} = \{Q_1, Q_2\}$ with $Q_1 \geq k_2 Q_2$, where we enforce that $k_1 \geq k_2$. 
%Note that
%the quantity $T(g,M,\mathcal{P}, \delta, \Lambda)$ is the throughput of a system
%$\{M, \mathcal{P}, \delta, \Lambda\}$ in expectation i.e we average over the
%randomness of the repetition code, choice of slots in which the packets are
%transmitted and power of the packets. 
In order to prove the result, we will
abstract out the randomness (slot location of packets etc.) in the two systems, i.e., we will assume two duplicate randomly chosen
instances (defined ${\mathsf R}$) of the problem (throughput maximization), and the only difference will be the power level of
the packets that are transmitted by the users. Then the instance for System 1 will be ${\mathsf R} \cup \{P_1, P_2\}$, while the same instance with System 2 will be ${\mathsf R} \cup \{Q_1, Q_2\}$. 

We next claim that for the given pair of instances (as demonstrated in Fig. 
\ref{fig:monotonicity}), if a packet is decoded in System 2, it must have been
decoded in System 1, using induction.
Consider a packet $p_0$ that gets decoded in the first
iteration of the SIC in System 2. Assume that packet $p_0$ was transmitted in slot $m$. In this case, there are two cases to consider. 
\begin{enumerate}
	\item \textbf{Case 1 :} The packet $p_0$ had no collision in slot $m$. Since ${\mathsf R}$ is the same, this implies that $p_0$ would not have any collisions in System 1 as well, and gets decoded with System 1.
	\item \textbf{Case 2 :} The packet $p_0$ had collision in slot $m$. But since $p_0$ got decoded in first iteration of SIC in 
	System 2,  packet $p_0$ is the only packet with power level $Q_1$ in slot $m$, and there are at most $k_2$ other packets transmitted in the  slot $m$ with power level $Q_2$. Thus, since ${\mathsf R}$ is the same, in System $1$, we have $p_0$ as the only packet with power level of $P_1$ in slot $m$, and there are at most $k_2$ other packets transmitted in the  slot $m$ 
	with power level $P_2$. Since $k_1 \geq k_2$ by choice i.e., System 1 has more tolerance for interference, we get that $p_0$ will get decoded in the first
iteration of the SIC in System 1 as well.
\end{enumerate}

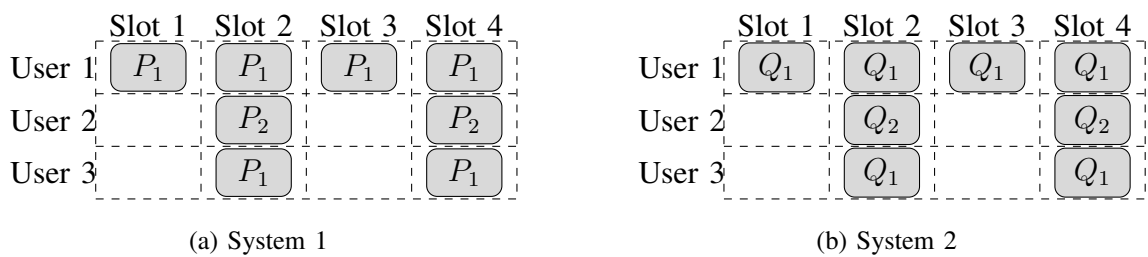
\begin{figure}[H]
	\begin{subfigure}[b]{0.5\textwidth}
	\centering
	\begin{tikzpicture}[scale = 0.7]
		% Basic Grid for the matrix
		\foreach \x in {0,...,-3}
		\draw[dashed] (0,\x) -- (8,\x);
		\foreach \x in {0,2,4,6,8}
		\draw[dashed] (\x,0) -- (\x,-3);
		\foreach \x in {1,2,3,4}
		\node at (2*\x-1,0.3) {Slot \x};
		\foreach \x in {1,2,3}
		\node at (-0.8, -\x + 0.5) {User \x};
		% Putting blocks for user 1
		\foreach \x in {1,2,3,4}
		\node [draw, fill= gray!30, shape=rectangle, minimum width=1cm, minimum
        height=0.5cm, anchor=center, rounded corners] at (2*\x-1,-0.5) {$P_1$};
		\foreach \x in {2,4}
		\node [draw, fill= gray!30, shape=rectangle, minimum width=1cm, minimum
        height=0.5cm, anchor=center, rounded corners] at (2*\x-1,-1.5) {$P_2$};
		\foreach \x in {2,4}
		\node [draw, fill= gray!30, shape=rectangle, minimum width=1cm, minimum
        height=0.5cm, anchor=center, rounded corners] at (2*\x-1,-2.5) 
        {$P_1$};
	\end{tikzpicture}
	\caption{System 1}
	\end{subfigure}
	\begin{subfigure}[b]{0.5\textwidth}
	\centering
	\begin{tikzpicture}[scale = 0.7]
		% Basic Grid for the matrix
		\foreach \x in {0,...,-3}
		\draw[dashed] (0,\x) -- (8,\x);
		\foreach \x in {0,2,4,6,8}
		\draw[dashed] (\x,0) -- (\x,-3);
		\foreach \x in {1,2,3,4}
		\node at (2*\x-1,0.3) {Slot \x};
		\foreach \x in {1,2,3}
		\node at (-0.8, -\x + 0.5) {User \x};
		% Putting blocks for user 1
		\foreach \x in {1,2,3,4}
		\node [draw, fill= gray!30, shape=rectangle, minimum width=1cm, minimum
        height=0.5cm, anchor=center, rounded corners] at (2*\x-1,-0.5) 
        {$Q_1$};
		\foreach \x in {2,4}
		\node [draw, fill= gray!30, shape=rectangle, minimum width=1cm, minimum
        height=0.5cm, anchor=center, rounded corners] at (2*\x-1,-1.5) 
        {$Q_2$};
		\foreach \x in {2,4}
		\node [draw, fill= gray!30, shape=rectangle, minimum width=1cm, minimum
        height=0.5cm, anchor=center, rounded corners] at (2*\x-1,-2.5) 
        {$Q_1$};
	\end{tikzpicture}
	\caption{System 2}
	\end{subfigure}
	\caption{Note that this instance of System 1 and System 2 only differ in
	the power levels.}
	\label{fig:monotonicity}
\end{figure}
\vspace{-0.3in}

\noindent 
\textbf{Induction Hypothesis : } Assume that if a packet $p$ is decoded by
iteration $t-1$ in System 2, then it is also decoded by iteration $t-1$ in System 1 as
well. (Induction Step :)  Now, consider a packet $p_t$ that got decoded at
iteration $t$ in System 2. Then, in System 1 we have 2 cases corresponding
to packet $p_t$:
    \begin{enumerate}
        \item \textbf{Case 1 : }$p_t$ got decoded before iteration $t$ in
        System 1.
        \item \textbf{Case 2 : }$p_t$ is not yet decoded by iteration $t$ in System
        1. Then we have 2 subcases:
            
            \begin{enumerate}
                \item \textbf{Case 2(a) : } If $p_t$ had no interferers at
                iteration $t$ in System 2: then by the induction hypothesis it has no interferers at
                iteration $t$ in System 1 as well. Therefore, it gets decoded in System 1 as well. 
                \item \textbf{Case 2(b) : } If $p_t$ had power $Q_1$ in System
                2 and had at most $k_2$ interferers with lower power $Q_2$:
                then, in System 1, the packet $p_t$ has power $P_1$, and again
                by induction hypothesis, it has at most $k_2$ interferers all
                with power $P_2$. Therefore, $p_t$ gets decoded at iteration
                $t$ in System 1 as well.
            \end{enumerate}
    \end{enumerate}

    This completes the proof by induction that under a fixed instantiation, if a packet $p$ gets decoded in System 2 by iteration $t$, then it is
    decoded by iteraton $t$ in System 1 as well.
    This means that, by the end of the SIC decoding process on a fixed instantiation, the number of
    packets decoded in System 1 is atleast as much as the number of
    packets decoded in System 2.
    Therefore, averaging over all the instantiations, we get 
        $T(g, M, \P^1, \delta, \Lambda ) \geq T(g, M, \P^2, \delta,
        \Lambda)$.

\section{Proofs for Section \ref{sec:SA}}

\label{appendix:proof_SA}

\begin{proof}[Proof for Lemma \ref{lm:k_to_inf_approx}]
The slots chosen by each of the users are independent and are uniformly
at random. Let $\textrm{Pr}_j(k)$ denote the probability that there are $k$
packet transmissions in slot $j$:
\begin{align*}
\textnormal{Pr}_j(k) &\stackrel{(a)}= {N \choose k} \left( \frac{1}{M} \right)^
{k} \left(1 - \frac{1}{M} \right)^{gM-k} \\
	&\stackrel{(b)}{\approx} \frac{g^{k}e^{-g}}{k!} \\
    &\stackrel{(c)}{\leq} \frac{g^m}{k(k-1)\dots(k-m+1)} \frac{g^{k-m}}{
    (k-m)!}e^{-g} \\
    &\stackrel{(d)}\leq \frac{K}{k^m} \\
    &\stackrel{(e)}= \mathcal{O}\left( \frac{1}{k^m} \right)
\end{align*}
where (a) follows from the fact that there are ${N \choose k}$ ways of having
$k$ packets transmitted and the probability of each such event is $\left( 
\frac{1}{M}\right)^{k} \left(1 - \frac{1}{M} \right)^{gM-k}$ (b) follows from the Stirlings'
approximation for large $N$, (c) follows for any finite $m$, (d) is
true for a suitably large enough constant $K$ and follows from the fact that $\frac{g^
{k-m}}{(k-m)!} \leq e^{g}$, (e) follows from the definition of $\mathcal{O}(.)$. 
Since this is true for any finite $m$, we get that $\text{Pr}_j(k) = 
\mathcal{O}\left(\frac{1}{ \textrm{poly}(k)}\right)$.
\end{proof}

\begin{proof}[Proof of Lemma \ref{lm:count_capture}] 
From the capture effect, it follows that a packet of higher power $P_1$ can be
decoded if there are at most $k$ interfering packets with lower power $P_2$. Let $A$ denote
the event that packet of higher power $P_1$ gets decoded when there are atmost
$k$ interferring packets. Let $B_i$ denote the event that one of the transmitted
packets is with power $P_1$ and there are exactly $i$ transmitted packets with power
$P_2$. It follows that $A = \cup_{i=0}^k B_i$ and $B_i \cap B_j = \phi, \forall
i,j$. It follows that
\begin{align}
\nonumber,
\mathbb{P}(A) &= \mathbb{P}\left(\cup_{i=0}^k B_i \right), \\
\nonumber
&\stackrel{(a)}= \sum_{i=0}^k \mathbb{P}(B_i), \\
\nonumber
&\stackrel{(b)}= \sum_{i=0}^k {N \choose i+1}\left(\frac{1}{M}\right)^{i+1}\left
(1 - \frac{1}{M} \right)^{N - i -1}{i+1 \choose 1}\delta(1 - \delta)^{i}, \\
\nonumber
&\stackrel{(c)}\approx \sum_{i=0}^\infty {N \choose i+1}\left(\frac{1}
{M}\right)^{i+1}\left(1 - \frac{1}{M} \right)^{N - i -1}{i+1 \choose 1}\delta(1
- \delta)^{i}, \\
\nonumber
&\stackrel{(d)}\approx \sum_{i=0}^\infty \frac{g^i}{i!}e^{-g}\delta(1-\delta)^i,
\\
\nonumber
&\stackrel{(e)}= g\delta e^{-g\delta},
\end{align}
where (a) follows from fact that $B_i \cap B_j = \phi$, (b) follows from the
fact that the definition of the event $B_i$, (c) follows from
Lemma \ref{lm:k_to_inf_approx} since $k$ is assumed to be sufficiently large, (d) follows
from the Stirlings' approximation for the term ${N \choose i+1}$ and $g = N/M$, 
(e) follows from the Taylor Series expansion of $e^x$.
\end{proof}

\begin{proof} [Proof of Theorem \ref{theorem:dpc_sa_thruput}]
Let $\bar{N}_1$ denote the average number with power level packet with power $P_1$s decoded
per slot and $\bar{N}_2$ denote the average number with power level packet with power $P_2$s
decoded. Then $T = \bar{N}_1 + \bar{N}_2$. From Lemma \ref{lm:count_capture} it
follows that $\bar{N}_1 = g\delta e^{-g \delta}$. For calculating the value of
$\bar{N}_2$, there are two cases where a lower power level packet with power $P_2$ is
decoded : (a) the lower power level packet is the only packet in a given slot, 
(b) there are two packets in a given slot - one of higher power level $P_1$ and
one of lower power level $P_2$. In case (b), the $P_1$ power level packet can be
decoded using the capture effect and using SIC, the higher power level packet
can be ``subtracted" to decode the lower power level packet. Let $\bar{N}_
{2a},\bar{N}_{2b} $ denote the average number of packets decoded in case (a) and
case (b) respectively. 
\begin{align*}
\bar{N}_2 &= \bar{N}_{2a} + \bar{N}_{2b} \\
&\stackrel{(a)}= {N \choose 1}\left( \frac{1}{M}\right)\left(1 - \frac{1}{M}
\right)^{gM - 1}\left( 1 - \delta\right) + 2{N \choose 2}\left( \frac{1}
{M}\right)^2\left(1 - \frac{1}{M}\right)^{gM - 2}\delta\left( 1 - \delta\right),
\\
&\stackrel{(b)}= g(1- \delta)e^{-g}\left(1 + g\delta\right),
\end{align*}
where (a) follows from the fact that $\bar{N}_{2a}$ is the probability that only
one low power packet transmission happens in a given slot and $\bar{N}_{2b}$ is
the probability that exactly two packets get transimitted and they are of
different power levels, (b) follows from taking $N \to \infty, M \to \infty$.
Combining the results from Lemma \ref{lm:count_capture} and calculation of
$\bar{N}_2$, $T_{\text{SA-DPC}} = g\delta e^{-g\delta} + (1 + g\delta)g
(1-\delta)e^{-g}$.
\end{proof}

\begin{proof} [Proof of Theorem \ref{theorem:npc_sa_thruput}]
Let $\bar{N}_i, \forall i \in [n-1]$ denote the average number of packet of
power level $P_i$ decoded. Then $T_{\text{nPC-SA}} = \sum_{i=1}^n \bar{N}_i$.
$\bar{N}_1 = g\delta_1 e^{-g \delta_1}$ follows from Lemma 
\ref{lm:count_capture}, $\bar{N}_2 = (1 + g\delta_1)g\delta_2 e^{-g(\delta_1
+ \delta_2)}$  follows from taking two cases - (a) there are at most $k$ lower
power level packets, (b) there are one packet with power level $P_1$ and at most
$k$ lower power level  packets. The calculation of $\bar{N}_2$ is along the same
lines as in the proof of Theorem \ref{theorem:dpc_sa_thruput}. For computing
$\bar{N}_3$, there are 4 cases: 
\begin{enumerate}[label = (\alph*)]
\item $0$ $P_1$ power level packets, $0$ $P_2$ power level packets and atmost
$k$ packets with power level $P_i, i > 3$.
\item $0$ $P_1$ power level packets, $1$ $P_2$ power level packets and atmost
$k$ packets with power level $P_i, i > 3$.
\item $1$ $P_1$ power level packets, $0$ $P_2$ power level packets and atmost
$k$ packets with power level $P_i, i > 3$.
\item $1$ $P_1$ power level packets, $1$ $P_2$ power level packets and atmost
$k$ packets with power level $P_i, i > 3$. 
\end{enumerate}
Summing across the 4 cases gives us that $\bar{N}_3 = \left( 1 + g \delta_1
\right)\left( 1 + g\delta_2 \right)g\delta_3 e^{-g(\delta_1 + \delta_2 +
\delta_3)}$. Consider the case of $\bar{N}_i$, there are $2^{i-1}$ cases to
consider which are similar to the cases considered for computing $\bar{N}_3$. In
those $2^{i-1}$ we consider all combinations of $0$ and $1$ packets with power
level $P_j, j < i$ and atmost $k$ packets with power level $P_j, j > i$. This
count is given as $\bar{N}_i = (1+g\delta_1)(1+g\delta_2)\dots(1+g\delta_
{i-1})g\delta_i e^{-g(\delta_1 + \delta_2 + \dots + \delta_i)} = \left(\prod_
{j=1}^{i-1}(1 + g\delta_j)e^{-g\delta_j}\right)g\delta_i e^{-g \delta_i}$.
Summing all these $\bar{N}_i$ gives us the required result.
\end{proof}

\section{}
\label{appendix:irsa-npc}
\subsection{IRSA-PC with Three Power Levels: IRSA-3PC}
\begin{lemma}
    Consider a system $\left\{ M, \P, \Delta, \beta \right\}$ with three power levels $\mathcal{P} = \{P_1,
P_2, P_3\}$ with the corresponding power choice distribution $\Delta = \{\delta_1,
\delta_2, \delta_3\}$. $M$ is the number of slos in one frame, and $\beta$ is the capture threshold. The power levels are selected such that $P_1 \geq k_1
\beta P_2, P_2 \geq k_2\beta P_3$ and  $k_1,k_2$ are sufficiently large positive integers.
Let the IRSA-PC scheme be given by the repetition distribution $\Lambda(x)$(which uniquely specifies
    the edge-perspective distributions $\lambda(x) = \sum \lambda_l x^{l-1}$ and $\rho(x) = \sum
\rho_l x^{l-1}$). 
Let
$q_i$ and $p_i$ denote the probability that an edge connected to a user node and
a slot node remain unknown respectively at iteration $i$ of SIC. Then as $M \to
\infty$, we have that 
\begin{equation}
\begin{split}
q_1 &= 1, \\
\forall i &\geq 1: \\
p_i &= 1 - \rho_1 - \rho_2\left((1 - q_i) + \sum_{j = 1}^3 \delta_j \left(1 -
\delta_j \right)q_i \right) - \sum_{l=3}^{N} \rho_l (1 - q_i)^{l-1}  \\ %done
&\qquad - \sum_{l=3}^{N} \rho_l \sum_{t = 1}^{l-1} \delta_1(1 - \delta_1)^t {l-1 \choose t}q_i^t(1-q_i)^{l-t-1} \\ %done
&\qquad - \sum_{l=3}^{N} \rho_l \sum_{t = 1}^{l-1} \delta_2\left(\delta_3^t + {t \choose 1}\delta_1
\delta_3^{t-1} \right){l-1 \choose t}q_i^t(1-q_i)^{l-t-1}\\ %done
&\qquad - \sum_{l=3}^{N}\rho_{l}\left(\delta_3 \left( 1 - \delta_3 \right){l-1
\choose 1} \left( 1- q_i\right)^{l-2}q_i + 2\delta_1 \delta_2 \delta_3 {l-1
\choose 2}\left(1 - q_i \right)^{l-3}q_i^2 \right) , \\
q_{i+1} &= \lambda(p_i).
\label{eq:3pow_lvl_2} 
\end{split}
\end{equation}
\label{lm:p_irsa_3lvl}
\end{lemma}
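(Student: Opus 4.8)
The plan is to mirror the two-step density-evolution argument of the proof of Lemma \ref{lm:p_irsa_2lvl} (Appendix \ref{app:lemDE}), now carrying three power levels through the slot-node update. The initial condition $q_1=1$ is immediate, since at the start of SIC no edge has been resolved. The user-node update is unchanged from the two-level case: an edge $e_0$ incident to a user node of degree $l$ is still unknown at iteration $i+1$ exactly when all the other $l-1$ edges of that user were unknown from the slot side at iteration $i$, an event of probability $p_i^{l-1}$ under the tree-independence assumption of \cite{luby1998analysis}; averaging over the edge-perspective degree distribution gives $q_{i+1}=\sum_l\lambda_l p_i^{l-1}=\lambda(p_i)$.

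The real content is the slot-node update. Fix a slot node of degree $l$, a tagged edge $e_0$, and suppose $t$ of the other $l-1$ edges are unknown at iteration $i$ (probability $\binom{l-1}{t}q_i^t(1-q_i)^{l-1-t}$), the remaining $l-1-t$ having already been resolved from their users' other slots and hence subtracted off. Writing $w_{l,t}$ for the conditional probability that $e_0$ becomes known this iteration, we have $1-p_i=\sum_l\rho_l\sum_{t=0}^{l-1}w_{l,t}\binom{l-1}{t}q_i^t(1-q_i)^{l-1-t}$. To get $w_{l,t}$ I would condition on the (independent) power of $e_0$ and of each of the $t$ interferers and trace the intra-slot cancellation cascade, which always proceeds from high power to low. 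Taking $k_1,k_2$ large enough that the constraints $t\le k_i$ never bind (the large-$k$ approximation, replacing $\min\{k_i,l-1\}$ by $l-1$, exactly as in Lemma \ref{lm:p_irsa_2lvl}), the decodable configurations are: (i) $t=0$, $e_0$ resolved by subtraction ($w=1$); (ii) $e_0$ has power $P_1$ and all $t$ interferers have power $P_2$ or $P_3$, contributing $\delta_1(1-\delta_1)^t$; (iii) $e_0$ has power $P_2$ and the $t$ interferers are either all $P_3$, or one $P_1$ (decoded and subtracted first, since its own $\mathrm{SIR}\ge k_1\beta/t\ge\beta$) with the rest $P_3$, contributing $\delta_2(\delta_3^t+t\,\delta_1\delta_3^{t-1})$; (iv) $e_0$ has power $P_3$ with $t=1$ and the lone interferer of higher power $P_1$ or $P_2$, contributing $\delta_3(1-\delta_3)$; (v) $e_0$ has power $P_3$ with $t=2$ and exactly one $P_1$ and one $P_2$ interferer, decoded in that order, contributing $2\delta_1\delta_2\delta_3$; all other configurations leave $e_0$ stuck. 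Substituting these $w_{l,t}$ into the averaged expression, handling the slot nodes of degree $1$ and $2$ separately (for degree $2$ the $\binom{l-1}{2}$ term vanishes) and collecting terms reproduces exactly \eqref{eq:3pow_lvl_2}, with $\rho_l$ the degree distribution induced by the load $g$ and repetition rate $R$, i.e. $\rho(x)=e^{-gR(1-x)}$, as in Section \ref{subsec:graphical_model}.

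The main obstacle is the case enumeration in (iii)--(v): one must argue that a lower-power tagged edge is recovered within a single iteration only if the interferers strictly above it can be peeled off one power level at a time, which — because $\beta>1$ allows at most one capture per slot per round — forces at most one interferer at each strictly higher power level and none at the same level, and one must keep the interferers already known from other slots (carried by the $(1-q_i)^{l-1-t}$ factor) cleanly separated from those still unknown. Everything else (the binomial bookkeeping, the passage $M\to\infty$, and, if desired, the partial simplification of the $\delta_1$-type sums via $\sum_t\binom{l-1}{t}((1-\delta_1)q_i)^t(1-q_i)^{l-1-t}=(1-\delta_1 q_i)^{l-1}$) is routine and parallels Appendix \ref{app:lemDE}.
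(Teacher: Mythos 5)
Your proposal is correct and follows essentially the same route as the paper's own proof: the user-node update and tree-independence argument are carried over verbatim from Lemma \ref{lm:p_irsa_2lvl}, and the slot-node update is obtained by the same exhaustive enumeration of decodable power configurations for the tagged edge (the $\delta_1(1-\delta_1)^t$, $\delta_2(\delta_3^t+t\delta_1\delta_3^{t-1})$, $\delta_3(1-\delta_3)$, and $2\delta_1\delta_2\delta_3$ cases, with degree-$1$ and degree-$2$ slot nodes treated separately). Your explicit SIR bookkeeping for the peeling order and the large-$k_1,k_2$ approximation match the paper's (terser) sketch.
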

\begin{comment}
The set with powers $\mathcal{P} = \{ P_{1}, P_{2}, P_{3} \}$ with $P_{1} \geq
k_{1}\beta P_{2}$ and $P_{2} \geq k_{2}\beta P_{3}$, where $k_{1}$ and
$k_{2}$ are positive integers. Computing the exact density evolution equations
is very cumbersome even for the three power level case. Hence, we only perform
analysis for the case of large $k_{1}$ and $k_{2}$(numerical experiments show
that $k_{1}, k_{2} \geq 4$ is sufficiently large). Motivated by the same
method as with the IRSA-PC with two power levels, we get the following update
equation at the slot node:

\begin{equation}
\begin{split}
p &= 1 - \rho_1 - \rho_2\left((1 - q) + \sum_{i = 1}^3 \delta_i \left(1 - \delta_i \right)q \right) \\
&\qquad - \sum_{l=3}^{N} \rho_l \left((1 - q)^{l-1} + \sum_{t = 1}^{l-1} \left(\delta_1(1 - \delta_1)^t + \delta_2\left(\delta_3^t + {t \choose 1}\delta_1 \delta_3^{t-1} \right)\right){l-1 \choose t}q^t(1-q)^{l-t-1}\right) \\
&\qquad - \sum_{l=3}^{N}\rho_{l}\left(\delta_3 \left( 1 - \delta_3 \right){l-1 \choose 1} \left( 1- q\right)^{l-2}q + 2\delta_1 \delta_2 \delta_3 {l-1 \choose 2}\left(1 - q \right)^{l-3}q^2 \right) 
\label{eq:3pow_lvl_2} 
\end{split}
\end{equation}
\end{comment}
The proof of Lemma \ref{lm:p_irsa_3lvl} follows on the same lines as Lemma 
\ref{lm:p_irsa_2lvl}. A brief sketch is provided as follows: 

\begin{itemize}
    \item Slote node with degree 1($\rho_1$ term): At this node, there are no collisions. Therefore it is decoded
        with probability 1.
    \item Slot node with degree 2($\rho_2$ term): At this node, the selected edge will be decoded in
        the current iteration as long as the other edge has a packet of a different power-level.
    \item Slot node with degree greater than 2:
        \begin{itemize}
            \item Term with $\delta_1 (1-\delta_1)^t$: this term corresponds to the case where the
                selected edge has a power of $P_1$, and the rest of the unkown edges have a lower power. In
                this case, the selected edge can be decoded in this iteration.
            \item Term with $\delta_2\left(\delta_3^t + {t \choose 1}\delta_1
                \delta_3^{t-1} \right)$: this corresponds to the case where the selected edge has a
                power of $P_2$. In this case, the selected edge is decoded in the current iteration
                if all the other unknown edges are of the lower power $P_3$, or if one of them is of
                power $P_1$ and the rest of them are with power $P_3$.
            \item Term with $\delta_3(1-\delta_3)$: This is the case where the selected edge has the
                lowest power $P_3$, and there is only one other unknown edge connected to the
                node, and it has a higher power of $P_1$ or $P_2$.
            \item Term with $\delta_1 \delta_2 \delta_3$: This is the case where the selected edge
                has the lowest power $P_3$, and the there are two other unknown edges connected to
                the node, with one with power $P_1$ and the other with power $P_2$.
        \end{itemize}
\end{itemize}

Note that the expression for $p_i$ in Lemma \ref{lm:p_irsa_3lvl} is rather
cumbersome and does not simplify into a closed form expression as in the case of
Lemma \ref{lm:p_irsa_2lvl}. As the value of $n$ increases, calculating a closed
form expression for $p_i$ become more and more complicated. Hence in the next
section, we discuss a recursive algorithm for calculating the coefficients $w_
{l,t}$ required in the expression for $p_i$.

\subsection{General $n$ Power Level with IRSA (IRSA-nPC)}
\label{subsec:general-n-power-level-irsa}
In this section we will set up DE equations for IRSA in systems with $n$ power
levels, which we will call IRSA-nPC. Let the system $\left\{ M, \P, \Delta, \beta, \Lambda \right\}$ be given by
the set with power levels $\P = \left\{ P_1, P_2, \dots, P_n \right\} $ with corresponding power
choice distribution $\Delta = \left\{ \delta_1, \delta_2, \dots, \delta_n \right\}$. 
As in the
previous sections, we assume that there is a sufficient multiplicative gap
between successive power levels: $P_{i} > k \beta P_{i-1}$, for some sufficiently large integer $k$. We need to
basically compute $w_{l,t}$, which is the probability that a a packet connected to a degree
$l$ slot with $t$ unresolved packets gets resolved at this particular time. We will express it as,

\begin{equation}
w_{l,t} = \sum_{i=1}^{n} \delta_{i} \sum_{j=0}^{t} {t \choose j} j! Pr_{i,j}^{\text{higher}} Pr_{i,t-j}^{\text{lower}}.
\label{eq:w_l_t_irsa_npc}
\end{equation}

This packet could be of any power, which is represented by $\sum \delta_{i}(\cdot)$ in
the above formula. Then it could have anything from 0 to $t$ higher-power
interferers( represented by $j$ in the above formula) and $t-j$ lower power
interferers respectively. The term $Pr_{i,j}^{\text{higher}}$ represents the
probability that a node with Power $i$ has $j$ higher power interferers, such
that it still gets decoded at this particular step. Similarly,
$Pr_{i,t-j}^{\text{lower}}$ represents the probability that a packet with power
$P_i$ has $t-j$ lower power interferers such that the chosen packet still gets
decoded. The ${t \choose j}$ term represents the choice of $j$ locations for
higher power interferers, and $j!$ represents the permutations in locations of
the higher power interferers possible. We will now show how to compute
$Pr_{i,t-j}^{\text{lower}}$ and $Pr_{i,j}^{\text{higher}}$.

\paragraph{Computing $Pr_{i,t-j}^{\text{lower}}$}
Computation of $Pr_{i,t-j}^{\text{lower}}$ is simple because of our
multiplicative gap assumption: $P_{i} > k\beta P_{i-1} $. With this assumption,
the chosen packet can get "captured" for any number of low power interferers
$t-j$. This term is therefore simply calculated as:

\begin{equation}
Pr_{i,t-j}^{\text{lower}} = (\delta_{i+1}+\ldots+\delta_{n})^{t-j}.
\end{equation}

\paragraph{Computing $Pr_{i,j}^{\text{higher}}$}

Computing this quantity is a bit more involved. Before actually giving the method of computation, let's first look at how the chosen packet gets decoded in the presence of $j$ higher power interferers. If a packet with power $P_{i}$ has $j$ higher power interferers, then all these $j$ higher power interferers need to be first captured before the capture of the chosen packet is possible. All the $j$ higher power interferers can be captured in a single iteration step if and only if they all have distinct power levels. This can be seen easily by a contradiction example: if there is there are two interferers with the same power level, then neither of them have $SIR \geq \beta$ (since $\beta > 1$). Therefore, neither of them can be captured. We use this fact in computing $Pr_{i,j}^{\text{higher}}$. 

Firstly, it is easy to see that if $i \leq j$, it is not possible for the packet to have $j$ unique higher power interferers. Thus, $Pr_{i,j}^{\text{higher}} = 0, \qquad \text{if } i \leq j $. We will do the computation using an iterative algorithm. It is easy to compute the following two quantities, which denote probabilities of $0$ and $1$ higher power interferers respectively:

\begin{equation}
\begin{split}
Pr_{i,0}^{\text{higher}} &= 1 ,\\
Pr_{i,1}^{\text{higher}} &= \delta_{1} + \ldots \delta_{i-1}.
\end{split}
\label{eq:de_initial_condtn}
\end{equation}

Now, $Pr^{\text{higher}}_{i,j}$ for a general $j$ can be computed using the quantities $Pr^{\text{higher}}_{i-1,j-1}$ and $Pr^{\text{higher}}_{i-1,j}$. Notice that higher power interferers for a packet with power $P_{i}$ are all the higher power interferers for a packet with power $P_{i-1}$ and the packet with power $P_{i-1}$ itself. Therefore, $Pr^{\text{higher}}_{i,j}$ can be computed as:

\begin{equation}
Pr^{\text{higher}}_{i,j} = Pr^{\text{higher}}_{i-1,j} + \delta_{i-1} Pr^{\text{higher}}_{i-1,j-1}.
\end{equation}

With the initial conditions above and this iterative process, we can compute
$Pr^{\text{higher}}_{i,j}$ for all $i,j$. We have thus laid down a procedure to compute $w_{l,t}$ for a general $n$-level power control model for CSA. \\

\begin{algorithm}[H]
\label{alg:compute_de_coefficients}
\SetAlgoLined
\KwResult{Returns the $w_{l,t}$}
$w_{l,t} \gets 0$ \;
 \For{i $\gets$ 1 to $n$}{
    $Pr_{i}^{\text{cap}} \gets 0$ \;
    \For{j $\gets$ 1 to $t$}{
       $Pr_{i,t-j}^{\text{lower}} \gets (\delta_{i+1}+\ldots + \delta_{n})^
       {t-j}$ \;
       \uIf{$j=0$}{
        $Pr_{i,0}^{\text{higher}} \gets 1$\;
       }
       \uElseIf{$j=1$}{
        $Pr_{i,1}^{\text{higher}} \gets \delta_{1} + \ldots + \delta_{i-1}$\;
       }
       \Else{
        $Pr^{\text{higher}}_{i,j} \gets Pr^{\text{higher}}_{i-1,j} + \delta_
        {i-1} Pr^{\text{higher}}_{i-1,j-1}$ \;
       }
       $\text{$Pr_{i}^{\text{cap}}$} \gets \text{$Pr_{i}^{\text{cap}}$} + {t
       \choose j} j! Pr_{i,j}^{\text{higher}} Pr_{i,j}^{\text{lower}}$ \;
    }   
    $w_{l,t} \gets w_{l,t} + \delta_{i}\times\text{$Pr_{i}^{\text{cap}}$}$
}
 \caption{Algorithm to compute $w_{l,t}$ for $n$-level power control}
\end{algorithm}

\end{appendices}
\end{document}